\DeclareFontFamily{U}{cmr}{} \DeclareFontShape{U}{cmr}{a}{n}{<5>
cmb5<7>cmb7<10>cmb10}{}
\DeclareFontFamily{U}{msa}{} \DeclareFontShape{U}{msa}{m}{n}{<-6
>msam5<6-8.5>msam7<8.5->msam10}{} \DeclareSymbolFont{latex-font
msa}{U}{msa}{m}{n} \DeclareFontFamily{U}{msb}{}
\DeclareFontShape{U}{msb}{m}{n}{<-6>msbm5<6-8.5>msbm7<8.5->msbm1
0}{} \DeclareSymbolFont{latex-font msb}{U}{msb}{m}{n}
\DeclareMathAlphabet{\mathscr}{U}{rsfs}{m}{n}
\DeclareMathAlphabet{\mathbt}{OT1}{cmr}{a}{n}
\newtheorem{X}{}[section]
\newtheorem{XCorollary}[X]{Corollary}
\newtheorem{XProposition}[X]{Proposition}
\newtheorem{XTheorem}[X]{Theorem} 
\newtheorem{XLemma}[X]{Lemma}
\theoremstyle{definition}
 \newtheorem{XNotation}[X]{Notation} \newtheorem{XDefinition}[X]{Definition}
\newtheorem{XExamples}[X]{Examples} \newtheorem
{XExample}[X]{Example} \newtheorem{XRemark}[X]{Remark}
\newtheorem{XAssumptions}[X]{Assumptions}
\newtheorem{XConstruction}[X]{Construction}
\DeclareMathSymbol\emptyset\mathord{latex-font msb}{"3F}
\DeclareMathSymbol\onto\mathrel{latex-font msa}{"10}
\DeclareMathSymbol\square\mathrel{latex-font msa}{"03}
\DeclareMathAccent{\bigwidehat}{\mathord}{latex-font msb}{"5B}
\def\hyph{-\penalty0\hskip0pt\relax}
\def\cpo{cpo\xspace}
\def\refeq#1{(\ref{#1})}
\def\epsilon{\varepsilon}
\renewcommand{\rho}{\varrho}
\def\ol{\overline}
\def\wh{\widehat}
\def\out{\mathsf{out}}
\def\init{\mathsf{in}}
\newcommand{\takeout}[1]{\empty}
\newcommand{\ra}{\rightarrow}
\newcommand{\xra}{\xrightarrow}
\newcommand{\seq}{\subseteq}
\newcommand{\Alg}{\mathbf{Alg}\,}
\newcommand{\Set}{\mathbf{Set}}
\newcommand{\BA}{\mathbf{BA}}
\newcommand{\BR}{\mathbf{BR}}
\DeclareMathOperator{\JSLtc}{\mathbf{JSL_{01}}}
\DeclareMathOperator{\JSLnc}{\mathbf{JSL}}
\newcommand{\DL}{\mathbf{DL}_{01}}
\newcommand{\Poset}{\mathbf{Pos}}
\newcommand{\ACat}{\mathscr{A}}
\newcommand{\BCat}{\mathscr{B}}
\newcommand{\Cat}{\mathscr{C}}
\newcommand{\DCat}{\mathscr{D}}
\newcommand{\id}{\mathsf{id}}
\newcommand{\Mon}[1]{#1\mathchar`-\mathbf{Mon}}
\newcommand{\Coalg}[1]{\mathbf{Coalg}\,#1}
\newcommand{\FCoalg}[1]{\mathbf{Coalg}_f\,#1}
\newcommand{\FAlg}[1]{\mathbf{Alg}_{f}\,#1}
\DeclareMathOperator{\Kl}{\mathbf{Kl}}
\newcommand{\Vect}[1]{\mathbf{#1\mathchar`-Vec}}
\newcommand{\JSL}{{\mathbf{JSL}_0}}
\newcommand{\PSet}{\mathbf{Set}_\star}
\newcommand{\under}[1]{|#1|}
\newcommand{\hookto}{\hookrightarrow}
\newcommand{\epito}{\twoheadrightarrow}
\newcommand{\monoto}{\rightarrowtail}
\newcommand{\V}{\mathfrak{V}}
\newcommand{\rev}{\mathsf{rev}}
\newcommand{\Pow}{\mathcal{P}}
\newcommand{\Int}{\mathds{Z}_2}
\newcommand{\one}{\mathbf{1}}
\newcommand{\two}{\mathbf{2}}
\title{Varieties of Languages in a Category}
\author{\IEEEauthorblockN{Ji\v{r}\'\i~Ad\'amek, Robert S.~R. Myers, Henning Urbat}
\IEEEauthorblockA{Institut f\"ur Theoretische Informatik \\ 
  Technische Universit\"at Braunschweig, Germany}
\IEEEauthorblockN{Stefan Milius\thanks{Stefan Milius acknowledges support by the Deutsche Forschungsgemeinschaft (DFG) under project MI 717/5-1}}
\IEEEauthorblockA{Lehrstuhl f\"ur Theoretische Informatik \\
Friedrich-Alexander Universit\"at Erlangen-N\"urnberg, Germany}
}
\begin{document}

\maketitle

%
%
\pagestyle{plain}
\thispagestyle{plain}

\begin{abstract} Eilenberg's variety theorem, a centerpiece of algebraic automata theory, establishes a bijective correspondence between varieties of languages and pseudovarieties of monoids.
In the present paper this result is generalized to an abstract pair of algebraic categories: we introduce varieties of languages in a category~${\mathscr C}$, and prove
that they correspond to pseudovarieties of monoids in a closed monoidal category $\DCat$, provided that $\Cat$ and $\DCat$ are dual on the level of finite objects. By suitable choices of these categories our result uniformly covers Eilenberg's theorem and three variants due to Pin, Pol\'{a}k and Reutenauer, respectively, and yields new Eilenberg-type correspondences.
\end{abstract}

\begin{IEEEkeywords}
  Eilenberg's theorem, varieties of languages, monoids, duality, automata, coalgebra, algebra.
\end{IEEEkeywords}

\section{Introduction}

Algebraic automata theory investigates the relation between regular languages and algebraic structures like  monoids, semigroups, or semirings. A major result
concerns \emph{varieties of languages}. These are classes of regular languages
closed under 
\begin{itemize} \item[(a)] boolean operations (union, intersection and complement), \item[(b)] derivatives,
i.e., with every language $L\seq \Sigma^*$ a variety
contains its \emph{left derivatives} $a^{-1}L=\{w\in\Sigma^*:aw\in L\}$ and
\emph{right derivatives} $La^{-1}=\{w\in\Sigma^*:wa\in L\}$ for
all  $a\in \Sigma$, and \item[(c)] preimages under monoid morphisms
$f:{\Delta}^*\to{\Sigma}^*$. \end{itemize}
Eilenberg proved in his monograph \cite{E} that the lattice of all varieties of languages is isomorphic to
the lattice of all \emph{pseudovarieties of monoids}, these being classes of finite monoids closed under finite products, submonoids and homomorphic images. Several variants of Eilenberg's theorem are known in the literature, altering the closure properties in the definition of a variety and replacing monoids by other algebraic structures. Pin~\cite{Pin95} introduced \emph{positive varieties of languages} where in (a) the closure under complement is omitted, and he proved a bijective correspondence to
pseudovarieties of \emph{ordered} monoids. Later Pol\'ak  \cite{P} further weakened (a) by also omitting closure under intersection, and the
 resulting \emph{disjunctive varieties of languages} correspond to pseudovarieties of idempotent semirings. Reutenauer \cite{R} studied a concept of variety where (a) is replaced  by closure under symmetric difference, and obtained a correspondence to pseudovarieties of algebras over the binary field $\Int$. (In fact Reutenauer considered algebras over arbitrary fields $\mathbf{K}$ and varieties of formal power series in lieu of languages). Finally, a new Eilenberg-type theorem we derive below deals with varieties of languages defined by closure under intersection and symmetric difference in lieu of (a), and relates them to pseudovarieties of monoids with $0$.

In this paper a categorical result is presented that covers Eilenberg's theorem and all its
variants uniformly, and exhibits
new applications. Our overall approach to algebraic automata theory may be subsumed by the ``equation''
\[ \textbf{automata theory} ~\boldsymbol{=}~\textbf{duality} ~\boldsymbol{+}~ \textbf{monoidal structure.} \]
The idea is to take a category $\Cat$ (where automata and languages live) and a closed monoidal category $\DCat$ (where monoids live) with the property that $\Cat$ and $\DCat$  are \emph{predual}. Specifically, in our setting $\Cat$ and $\DCat$ will be locally finite varieties of algebras or ordered algebras (i.e., all finitely generated algebras are finite), and preduality means that the full subcategories of finite algebras are dually equivalent. Moreover, the monoidal structure of $\DCat$ is given by the usual tensor product of algebras. 

All the Eilenberg-type correspondences mentioned above fit into this categorical framework. For example, the categories $\Cat$ of boolean algebras and $\DCat$ of sets are predual via Stone duality, and $\DCat$-monoids are ordinary monoids: this is the setting of Eilenberg's original result. The category $\Cat$ of distributive
lattices with $0$ and~$1$ is predual to the category $\DCat$ of posets via Birkhoff duality \cite{B2}, and $\DCat$-monoids are ordered monoids, which leads to Pin's result for pseudovarieties of ordered monoids
\cite{Pin95}. The category $\Cat$ of join-semilattices with $0$ is self-predual  (i.e., one takes $\DCat=\Cat$), and ${\mathscr D}$-monoids are precisely idempotent
semirings. This is the framework for Pol\'ak  \cite{P}.
For Reutenauer's result \cite{R} one takes the category $\Cat$ of
vector spaces over a finite field $\mathbf{K}$ which is also self-predual (i.e., $\DCat=\Cat$), and observes that ${\mathscr
D}$-monoids are precisely $\mathbf{K}$-algebras. Lastly, our new example concerning pseudovarieties of  monoids with $0$ takes as ${\mathscr C}$ non-unital boolean rings and as ${\mathscr D}$ pointed sets.

Apart from preduality, the heart of the matter is a coalgebraic characterization of the closure properties defining varieties of languages. We model deterministic $\Sigma$-automata in a locally finite variety $\Cat$ as coalgebras $Q\ra T_\Sigma Q$ for the endofunctor
\[ T_\Sigma: \Cat\ra\Cat,\quad T_\Sigma Q = O_\Cat \times Q^\Sigma,\]
where $O_\Cat$ is a fixed two-element algebra in $\Cat$ representing final and non-final states. In particular, the set of all regular languages over $\Sigma$ carries the structure of a $T_\Sigma$-coalgebra whose transitions are given by left derivatives $L\to a^{-1}L$ for $a\in\Sigma$, and whose final states are the languages containing the empty word. This coalgebra admits an abstract characterization as the \emph{rational fixpoint} $\rho T_\Sigma$ of $T_\Sigma$, i.e., the terminal locally finite coalgebra. To get a grasp on \emph{all} regular languages, independent of a particular alphabet, we introduce the functor
\[ \rho T: \Set_f^{op} \ra \Cat \]
that maps each finite alphabet $\Sigma$ to $\rho T_\Sigma$, and each letter substitution $h: \Delta\ra \Sigma$ to the morphism $\rho T_\Sigma \ra \rho T_\Delta$ taking preimages under the free monoid morphism $h^*: \Delta^*\ra \Sigma^*$. 

Consider now a variety $V$ of languages in Eilenberg's sense (this is the case $\Cat$ = boolean algebras), and denote by $V\Sigma$ the languages over the alphabet $\Sigma$ contained in $V$. Then the closure condition (a) and the restriction of (c) to \emph{length-preserving} monoid morphisms  (those of the form $h^*$) state precisely that the map $\Sigma \mapsto V\Sigma$ defines a subfunctor \[V \monoto\rho T.\] Closure under left derivatives means that $V\Sigma$ is a subcoalgebra of $\rho T_\Sigma$. Finally, as we will demonstrate below, closure under right derivatives and preimages of arbitrary monoid morphisms amounts categorically to the existence of certain coalgebra homomorphisms. This expresses the closure properties of a variety of languages fully coalgebraically.

In our general setting of two predual categories $\Cat$ and $\DCat$ a \emph{variety of languages in $\Cat$} is thus a subfunctor $V\monoto \rho T$, subject to additional closure properties which are characterized by means of coalgebraic concepts. Dualizing these properties leads to the notion of a \emph{pseudovariety of $\DCat$-monoids}: a class of finite $\DCat$-monoids  closed under finite products, submonoids and homomorphic images. Our main result is the

\medskip
\noindent\textbf{Generalized Eilenberg Theorem.}  Varieties of languages in $\Cat$ correspond bijectively to pseudovarieties of $\DCat$-monoids.
\medskip

All Eilenberg-type theorems mentioned above emerge as special cases by the  corresponding choices of $\Cat$ and $\DCat$.

On our way to proving the Generalized Eilenberg Theorem we will also establish a bijective correspondence between \emph{object-finite} varieties of languages in $\Cat$ (those varieties with $V\Sigma$ finite for all $\Sigma$) and locally
finite varieties (rather than pseudovarieties!) of $\DCat$-monoids. In the case of (ordered) monoids this has been shown by Kl\'ima and Pol\'{a}k \cite{KP}, and to the best of our knowledge it is a new result in all other cases. 

Although our emphasis lies on varieties of \emph{languages}, all our results hold more generally for Moore automata in lieu of acceptors, and hence for varieties of regular behaviors in lieu of regular languages -- one simply replaces the two-element algebra $O_\Cat$ in the above definition of $T_\Sigma$ by an arbitrary finite algebra in $\Cat$.  We briefly explain this at the end of the paper.

\medskip
\noindent\emph{Related Work.} 
Our paper lays a common ground for Eilenberg's original variety theorem \cite{E} and its variants due to Pin~\cite{Pin95}, Pol\'ak~\cite{P} and Reutenauer~\cite{R}.  In our previous paper \cite{GET1} we proved a \emph{local} Eilenberg theorem where one considers classes of regular languages over a fixed alphabet $\Sigma$ versus classes of finite $\Sigma$-generated $\DCat$-monoids in our general setting of predual categories $\Cat$ and $\DCat$. The main technical achievement of \cite{GET1} was the insight that finite subcoalgebras of $\rho T_\Sigma$ closed under right derivatives dualize to finite $\Sigma$-generated $\DCat$-monoids. Our work was inspired by Gehrke, Grigorieff and Pin~\cite{G} who proved a bijective correspondence between  local varieties of languages and classes of finite (ordered) $\Sigma$-generated monoids presented by profinite identities. This result provides a local view of Reiterman's theorem \cite{Rei}  characterizing pseudovarieties of monoids in terms of profinite identities.

Somewhat surprisingly, it has only been in recent years that the fundamental role of duality in algebraic automata theory was fully recognized. Most of the work along these lines concerns the connection between regular languages and profinite algebras. Rhodes and Steinberg~\cite{RS} view the regular languages over $\Sigma$  as a comonoid (rather than just a coalgebra) in the category of boolean algebras, and this comonoid is shown to dualize to the free profinite semigroup on $\Sigma$. Similar results for free profinite monoids can be found in the aforementioned work of Gehrke et al.~which built on previous work of Almeida~\cite{Al} and Pippenger~\cite{Pip97}.

\medskip
\noindent\emph{Acknowledgements.} The authors are grateful to Mai Gehrke, Paul-Andr\'e Melli\`es and Libor Pol\'ak for useful discussions on the topic of our paper. 

\section{Preduality, Monoids and Languages}

In this section we set the scene of predual categories $\Cat$ and $\DCat$, and introduce varieties of languages in $\Cat$ and pseudovarieties of $\DCat$-monoids. The reader is assumed to be familiar with basic category theory and universal algebra.

\subsection{Predual categories}

Our categories of interest are varieties of algebras and varieties of ordered algebras. Given a finitary signature ${\Gamma}$, a \emph{variety of ${\Gamma}$-algebras}
is a full subcategory ${\mathscr A}$ of
$\Alg{\Gamma}$, the category of
${\Gamma}$-algebras and homomorphisms, closed under
quotients (= homomorphic images), subalgebras and products. Equivalently, by Birkhoff's HSP theorem \cite{B1} a variety of algebras is
a class of ${\Gamma}$-algebras specified by equations
$t_1=t_2$ between $\Gamma$-terms. The
forgetful functor ${\mathscr A}\to{\Set}$ of a variety has a
left adjoint assigning to every set $X$ the free algebra over $X$.

Analogously, let $\Alg_\leq{\Gamma}$ be the
category of all \emph{ordered ${\Gamma}$-algebras}. These are $\Gamma$-algebras with a partial order on the underlying set such that all
${\Gamma}$-operations are order-preserving. Morphisms of $\Alg_\leq{\Gamma}$ are
 order-preserving homomorphisms. This category has a factorization
system of surjective homomorphisms and injective order-embeddings.
Thus the concept of a \emph{subalgebra} of an algebra $A$ in $\ACat$ means that the
order is inherited from $A$, whereas a \emph{quotient} of $A$ is
represented by any surjective order-preserving homomorphism with domain $A$. A
\emph{variety of ordered ${\Gamma}$-algebras} is a full
subcategory ${\mathscr A}$ of $\Alg_\leq{\Gamma}$ closed under quotients, subalgebras and products.
Equivalently, by the ordered version of Birkhoff's theorem due to Bloom \cite{Blo}, ${\mathscr A}$ is specified by inequalities
$t_1\le t_2$ between ${\Gamma}$-terms. (Recall that if $t_1$ and $t_2$ lie in $T_\Gamma X$, the discretely ordered algebra of $\Gamma$-terms over variables $X$, then an ordered $\Gamma$-algebra $A$ satisfies $t_1\le t_2$  provided that every
homomorphism $h: T_{\Gamma}X \ra A$ fulfils $h(t_1)\le h(t_2)$.) Again, the forgetful functor
$\ACat\to{\Set}$ has a left adjoint
constructing free algebras.

\begin{XDefinition}\label{def:locfin}
Given a variety ${\mathscr
A}$ of (ordered) algebras we denote by ${\mathscr A}_{ f}$ the full subcategory
of all finite algebras. $\ACat$ is called \emph{locally finite} if all free algebras on finitely many generators are finite.
\end{XDefinition}

In our applications we will encounter the locally finite varieties listed below. We view $\Poset$ as a variety of ordered algebras (over the empty signature) and all other categories as varieties of non-ordered algebras.

\begin{table}[h]\normalsize
\begin{tabularx}{\columnwidth}{lX}
\hline \vspace{-0.2cm}\\
$\Set$         & sets and functions \\
$\PSet$              & pointed sets and point-preserving functions \\
$\Poset$ & partially ordered sets and order-preserving maps \\
$\BA$ &  boolean algebras and boolean homomorphisms\\
$\BR$ & non-unital boolean rings (i.e., non-unital  rings $(R,+,\cdot,0)$ satisfying the equation $x\cdot x = x$) and ring homomorphisms\\
$\DL$ & distributive lattices with $0$ and $1$ and lattice homomorphisms preserving $0$ and $1$\\
$\JSL$ & join-semilattices with $0$ and semilattice homomorphisms preserving $0$\\
$\Vect{K}$ & vector spaces over a finite field $\mathbf{K}$ and linear maps\\
\hline
\end{tabularx}
\end{table}

Here is the central concept for our categorical approach to algebraic automata theory:

\begin{XDefinition}
Two locally finite varieties of (ordered) algebras ${\mathscr C}$ and ${\mathscr D}$ are called
\emph{predual} if their full subcategories ${\mathscr
C}_{ f}$ and ${\mathscr D}_{ f}$ of finite algebras are dually equivalent.
\end{XDefinition}

In what follows $\Cat$ will be a locally finite variety of algebras (where varieties of languages are formed), and $\DCat$ will be a locally finite variety of algebras or ordered algebras (where varieties of monoids are formed). Let us establish some notation for this setting.  The preduality of $\Cat$ and $\DCat$ is witnessed by an equivalence functor $\Cat_f^{op}\xra{\simeq}\DCat_f$ whose action on objects and morphisms we denote by
\[Q\mapsto\widehat Q\quad\text{and}\quad h\mapsto\widehat h.\] The varieties  $\Cat$ and $\DCat$ come equipped with forgetful functors
\[ \under{\mathord{-}}: \Cat\ra\Set\quad\text{and}\quad \under{\mathord{-}}: \DCat\ra \Set.\]
Finally, we write $\one_\Cat$ and $\one_\DCat$ for the free algebras on one generator in $\Cat$ and $\DCat$, respectively, and  $O_\DCat$ and $O_\Cat$ for their dual algebras:
\[ O_\DCat= \widehat{\one_\Cat}  \quad\text{and}\quad \widehat{O_\Cat} \cong \one_\DCat. \]

\begin{XRemark}\label{rem:caniso}The last item is understood as a fixed choice of an algebra $O_\Cat$ in $\Cat_f$ along with an isomorphism $i: \widehat{O_\Cat}\xra{\cong} \one_\DCat$. It follows that $\under{O_\Cat}$ and $\under{O_\DCat}$ are isomorphic:
\begin{align*}
\under{O_\Cat} &\cong \Set(1,\under{O_\Cat}) & (\text{canonically}) \\
&\cong \Cat(\one_\Cat, O_\Cat) & (\text{def. $\one_\Cat$})\\
&\cong \DCat(\widehat{O_\Cat},O_\DCat) & (\text{by duality})\\
&\cong \DCat(\one_\DCat,O_\DCat) & (\text{composition with $i^{-1}$})\\
&\cong \Set(1,\under{O_\DCat}) & (\text{def. $\one_\DCat$})\\
&\cong \under{O_\DCat} & (\text{canonically})
\end{align*}
To simplify the notation we identify the underlying sets of $O_\Cat$ and $O_\DCat$ via this isomorphism and thus assume $\under{O_\Cat}=\under{O_\DCat}$.
\end{XRemark}

\begin
{XAssumptions}\label{A:basic}\ignorespaces For the rest of this paper we fix two predual locally finite varieties $\Cat$ and $\DCat$ with the following properties:
\begin{itemize} 
\item[(i)] ${
\mathscr C}$ is a locally finite variety of algebras.
\item[(ii)] ${
\mathscr D}$ is a locally finite variety of algebras or ordered algebras.
\item[(iii)] Epimorphisms in $\DCat$ are surjective.
\item[(iv)] $\DCat$ is \emph{entropic}: for any two algebras $A$ and $B$ in $\DCat$, the set $[A,B]$  of homomorphisms from $A$ to $B$  is an algebra in $\DCat$ with the pointwise algebraic structure, i.e., a subalgebra of the power $B^{\under{A}} = \prod_{a\in \under{A}} B$.
\item[(v)] $\under{O_\Cat} = \under{O_\DCat} =\{0,1\}$. 
\end{itemize}
\end{XAssumptions}
Condition (iv) means precisely that all operations of the variety $\DCat$ commute, see e.g. \cite[Theorem 3.10.3]{Bor}: given an $m$-ary operation $\sigma$ and an $n$-ary operation $\tau$ in the signature of $\DCat$ and variables $x_{ij}$ ($i=1,\ldots,m$, $j=1,\ldots, n$), the equation
\begin{align*} 
&\sigma(\tau(x_{11},\ldots,x_{1n}),\ldots,\tau(x_{m1},\ldots,x_{mn}))\\
&= \tau(\sigma(x_{11},\ldots, x_{m1}),\ldots, \sigma(x_{1n},\ldots,x_{mn}))
\end{align*}
holds in $\DCat$.
Condition (v) can be lifted to get a theory of regular behaviors in lieu of regular languages, cf. Section \ref{sec:regbeh}.

\begin{XExample}\ignorespaces\label{E:bool}
The following pairs of categories $\Cat$ and $\DCat$ satisfy our assumptions:\\
\begin{center}\vspace{-0.2cm}
\begin{tabular}{llllll}
$\Cat$ & $\BA$ & $\DL$ & $\JSL$ & $\Vect{\Int}$ & $\BR$\\
\hline\vspace{-0.3cm}\\
$\DCat$ & $\Set$ & $\Poset$ & $\JSL$ & $\Vect{\Int}$ & $\PSet$\vspace{0.2cm}\\
\end{tabular}
\end{center}
\noindent(a) $\Cat=\BA$ is predual to $\DCat={\Set}$ via Stone duality \cite{S}: given a finite boolean algebra $Q$, the set $\widehat Q$ consists of all
atoms of $Q$, and given a
homomorphism $h:Q\to R$ in $\BA_f$, the dual function $\widehat h:\widehat R\to\widehat
Q$ in $\Set_f$ is defined by \begin{equation}\label{eq:bool}\widehat
h(r)=\bigwedge\{q\in Q : h(q)\geq r\}.\end{equation}
We choose 
\vspace{-0.2cm}\begin{center}
\begin{table}[h]\normalsize
  \centering
  \renewcommand{\arraystretch}{1.0}
  \begin{tabularx}{\columnwidth}{>{\centering}m{0.1in} >{\centering}m{0.7in} >{\centering}m{0.2in} >{\centering}m{0.2in} >{\centering}m{0.2in} >{\centering}m{0.2in} >{\centering}m{0.2in} >{\centering}m{0.2in}}
    $\one_\Cat =$ & 
    $\xymatrix@=5pt{& \top \ar@{-}[dl] \ar@{-}[dr]&\\ 1 \ar@{-}[dr] && 0 \ar@{-}[dl]\\ & \bot &}$ &
    $\one_\DCat = $  &  
    $\{1\}$ &
    $O_\Cat = $ & 
    $\xymatrix@=5pt{ 1 \ar@{-}[dd]\\\\ 0 }$ &
    $O_\DCat =$ & 
    $\{0,1\}$
  \end{tabularx}
\end{table}
\end{center}\vspace{-0.6cm}
Here $1$ is the generator of the one-generated free algebra $\one_\Cat$ and $\widehat{O_\Cat}=\{1\}=\one_\DCat$. Hence the isomorphisms  $\widehat{O_\Cat}\cong \one_\DCat$ and $\under{O_\Cat}\cong \under{O_\DCat}$ of Remark \ref{rem:caniso} are identity maps.
\endgraf(b)\enspace $\Cat=\DL$ is
predual to $\DCat={\Poset}$ via Birkhoff duality \cite{B2}:  $\widehat Q$ is the poset of
all join-irreducible elements of a finite distributive lattice $Q$ (ordered as in~$Q$), and
every homomorphism $h:Q\to R$ in~$(\DL)_f$ yields a monotone function $\widehat h:\widehat
R\to\widehat Q$ by the formula (1) above. We choose
\vspace{-0.2cm}\begin{center}
\begin{table}[h]\normalsize
  \centering
  \renewcommand{\arraystretch}{1.0}
  \begin{tabularx}{\columnwidth}{>{\centering}m{0.1in} >{\centering}m{0.4in} >{\centering}m{0.2in} >{\centering}m{0.3in} >{\centering}m{0.2in} >{\centering}m{0.2in} >{\centering}m{0.2in} >{\centering}m{0.2in}}
    $\one_\Cat =$ & 
    $\xymatrix@=5pt{0 \ar@{-}[d]\\ 1 \ar@{-}[d] \\ \bot}$ &
    $\one_\DCat = $  &  
    $\{1\}$ &
    $O_\Cat = $ & 
    $\xymatrix@=5pt{ 1 \ar@{-}[dd]\\\\ 0 }$ &
    $O_\DCat =$ & 
    $\xymatrix@=5pt{ 0 \ar@{-}[dd]\\\\ 1 }$ 
  \end{tabularx}
\end{table}
\end{center}\vspace{-0.6cm}
Note that $0$ is the \emph{top} element of $\one_\Cat$ and $O_\DCat$. Again the isomorphisms $\widehat{O_\Cat}\cong \one_\DCat$ and $\under{O_\Cat}\cong \under{O_\DCat}$ are identity maps.
\endgraf(c)\enspace The category~$\Cat=\JSL$ is
self-predual, see \cite{J}, so we can take $\DCat=\JSL$. The dual equivalence associates to each finite semilattice $Q=(X,\vee,0)$ its opposite semilattice $\widehat Q=(X,\wedge,1)$, and to each homomorphism $h:Q\to R$ in $(\JSL)_f$ the homomorphism
$\widehat h: \widehat{Q}\ra\widehat{R}$ with
\begin{equation*}\label{eq:jsl}\widehat
h(r)=\bigvee\{q\in Q : h(q)\leq r\}\end{equation*}
where the join is formed in $Q$. We choose
\vspace{-0.2cm}\begin{center}
\begin{table}[h]\normalsize
  \centering
  \renewcommand{\arraystretch}{1.0}
  \begin{tabularx}{\columnwidth}{>{\centering}m{0.1in} >{\centering}m{0.4in} >{\centering}m{0.2in} >{\centering}m{0.3in} >{\centering}m{0.2in} >{\centering}m{0.2in} >{\centering}m{0.2in} >{\centering}m{0.2in}}
    $\one_\Cat =$ & 
    $\xymatrix@=5pt{0 \ar@{-}[dd]\\\\ 1}$ &
    $\one_\DCat = $  &  
    $\xymatrix@=5pt{ 0 \ar@{-}[dd]\\\\ 1 }$ &
    $O_\Cat = $ & 
    $\xymatrix@=5pt{ 1 \ar@{-}[dd]\\\\ 0 }$ &
    $O_\DCat =$ & 
    $\xymatrix@=5pt{ 1 \ar@{-}[dd]\\\\ 0 }$ 
  \end{tabularx}
\end{table}
\end{center}\vspace{-0.6cm}
The isomorphisms $\widehat{O_\Cat}\cong \one_\DCat$ and $\under{O_\Cat}\cong \under{O_\DCat}$ are  identity maps. Epimorphisms in $\JSL$ are surjective, as proved in \cite{HK}.
\endgraf(d)\enspace The
category~$\Cat=\Vect{\Int}$ of vector spaces over the binary field $\Int=\{0,1\}$ is also
self-predual, so $\DCat=\Vect{\Int}$. The dual equivalence assigns to each finite (i.e.~finite-dimensional) $\Int$-vector space $Q$ its dual space $\widehat Q = [Q,\Int]$. For every
linear map $h:Q\to R$ in $(\Vect{\Int})_f$ the dual map $\widehat h:\widehat R\to\widehat Q$
takes $u:R\to\Int$ to $u\cdot h$. We choose 
\[\one_\Cat = O_\Cat = \Int\quad\text{and}\quad \one_\DCat = O_\DCat = [\Int,\Int].\]
The isomorphism $\widehat{O_\Cat}\cong \one_\DCat$ is identity, and the isomorphism $\under{O_\Cat}\cong\under{O_\DCat}$ identifies the element $1\in \under{O_\Cat}=\under{\Int}$ with $\id \in \under{O_\DCat} = [\Int,\Int]$. Epimorphisms in $\Vect{\Int}$ split and hence are surjective.
\endgraf(e)\enspace An
interesting variation on (a) is the preduality
of $\Cat=\BR$ and $\DCat = \PSet$. Recall that we consider \emph{non-unital} boolean rings (this is Birkhoff's original definition of a boolean ring, see \cite{B}) and homomorphisms preserving $+$, $\cdot$ and $0$. Every \emph{finite} non-unital boolean ring $(Q,+,\cdot, 0)$ can be viewed as a boolean algebra with $x\wedge y = x\cdot y$ and $x\vee y = x + y +x\cdot y$. The preduality  of $\BR$ and $\PSet$ takes $Q$ to the pointed set $\widehat Q = \{\star, q_1,\ldots q_n\}$ where $q_1,\ldots,q_n$ are the atoms of $Q$. A homomorphism $h: Q\ra R$ in $\BR_f$ is mapped to the function $\widehat h: \widehat{R}\ra\widehat{Q}$ with $\widehat h(r)$ defined by (1) if $r\neq\star$ and some $q\in Q$ with
$h(q)\geq r$ exists, and otherwise $\widehat{h}(r)=\star$. We choose
\[ \one_\Cat = O_\Cat = \{0,1\} \quad\text{and}\quad \one_\DCat = O_\DCat = \{\star,1\}. \]
The isomorphism $\widehat{O_\Cat}\cong \one_\DCat$ is identity, and the isomorphism $\under{O_\Cat}\cong \under{O_\DCat}$ identifies $0\in\under{O_\Cat}$ with $\star\in \under{O_\DCat}$.
\end{XExample}

The preduality of  $\Cat$ and $\DCat$ allows us to model deterministic automata both as coalgebras and as algebras for suitable endofunctors on $\Cat$ and $\DCat$. Fix a finite input alphabet $\Sigma$ and consider first the endofunctor on $\Cat$
\[ T_\Sigma Q = O_\Cat \times Q^\Sigma = O_\Cat \times \prod_\Sigma Q.\]
A \emph{$T_\Sigma$-coalgebra} $(Q,\gamma)$ consists of an object $Q$ in $\Cat$ together with a morphism $\gamma: Q\ra T_\Sigma Q$. By the universal property of the product, to give a $T_\Sigma$-coalgebra  means to give  morphisms $\gamma_a: Q\ra Q$ for each $a\in \Sigma$, respresenting transitions, and a morphism $\gamma_{\out}: Q\ra O_\Cat$ representing final states. Hence $T_\Sigma$-coalgebras are deterministic $\Sigma$-automata in $\Cat$ without an initial state, often denoted as triples
\[ Q=(Q,\gamma_a,\gamma_{\out}).\]
 A \emph{homomorphism} $h:(Q,\gamma')\ra(Q',\gamma')$ of $T_\Sigma$-coalgebras is a morphism $h: Q\ra Q'$ in $\Cat$ with $\gamma'\cdot h = T_\Sigma h\cdot \gamma$, which means that  the following diagram commutes for all $a\in\Sigma$:
\[
\xymatrix{
Q \ar[r]^{\gamma_a} \ar[d]_h & Q \ar[d]^h \ar[dr]^{\gamma_{\out}} & \\
Q' \ar[r]_{\gamma_{a}'} & Q' \ar[r]_{\gamma_{\out}'} & O_\Cat 
}
\]
We write
\[ \Coalg{T_\Sigma} \quad\text{and}\quad \FCoalg{T_\Sigma}\]
for the categories of (finite) $T_\Sigma$-coalgebras and homomorphisms.
The functor $T_\Sigma$ has an associated endofunctor on $\DCat$
\[ L_\Sigma A = \one_\DCat + \coprod_\Sigma A\]
which is \emph{predual} to $T_\Sigma$ in the sense that the restrictions
\[T_\Sigma: \Cat_f\ra\Cat_f \quad\text{and}\quad L_\Sigma: \DCat_f \ra \DCat_f, \]
to finite algebras are dual. That is, the diagram below commutes up to natural isomorphism:
\[
\xymatrix{
\Cat_f^{op} \ar[d]_{\widehat{(\mathord{-})}} \ar[r]^{T_\Sigma^{op}} & \Cat_f^{op} \ar[d]^{\widehat{(\mathord{-})}} \\
\DCat_f \ar[r]_{L_\Sigma}  & \DCat_f  
}
\]
Dually to the concept of a $T_\Sigma$-coalgebra, an \emph{$L_\Sigma$-algebra} $(A,\alpha)$ is an object $A$ in $\DCat$ together with  a morphism $\alpha: L_\Sigma A \ra A$. Equivalently, an $L_\Sigma$-algebra is given by morphisms $\alpha_a: A\ra A$ for $a\in\Sigma$, representing transitions, and a morphism $\alpha_{\init}: \one_\DCat\ra A$ selecting an initial state. $L_\Sigma$-algebras are denoted as
\[ A = (A,\alpha_a, \alpha_{\init}) \]
and can be viewed as deterministic $\Sigma$-automata in $\DCat$ without final states. A \emph{homomorphism} $h: (A,\alpha)\ra (A',\alpha')$ of $L_\Sigma$-algebras is a $\DCat$-morphism $h: A\ra A'$ with $h\cdot \alpha = \alpha'\cdot L_\Sigma h$. We write 
\[ \Alg{L_\Sigma} \quad\text{and}\quad \FAlg{L_\Sigma}\]
for the categories of (finite) $L_\Sigma$-algebras and homomorphisms. From the preduality of $T_\Sigma$ and $L_\Sigma$ we immediately conclude:
 
\begin{XLemma}
$\FCoalg{T_\Sigma}$ and $\FAlg{L_\Sigma}$ are dually equivalent categories. The dual equivalence is given on objects and morphisms by
\[ (Q,\gamma) \mapsto (\widehat Q, \widehat{\gamma})\quad\text{and}\quad h\mapsto \widehat{h}.\]
\end{XLemma}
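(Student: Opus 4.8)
The plan is to invoke the standard principle that a dual equivalence of base categories lifts to a dual equivalence of the associated categories of coalgebras and algebras, provided the underlying endofunctors are themselves dual. Concretely, let $\kappa\colon \widehat{T_\Sigma(\mathord{-})}\Ra L_\Sigma\,\widehat{(\mathord{-})}$ denote the natural isomorphism witnessing the preduality of $T_\Sigma$ and $L_\Sigma$, so that $\kappa_Q\colon \widehat{T_\Sigma Q}\xra{\cong} L_\Sigma\widehat Q$ for every finite $Q$ in $\Cat$. Given a finite $T_\Sigma$-coalgebra $(Q,\gamma\colon Q\ra T_\Sigma Q)$, applying the contravariant functor $\widehat{(\mathord{-})}$ yields $\widehat\gamma\colon \widehat{T_\Sigma Q}\ra \widehat Q$ in $\DCat_f$, and precomposing with $\kappa_Q^{-1}$ produces the composite $L_\Sigma\widehat Q\xra{\kappa_Q^{-1}}\widehat{T_\Sigma Q}\xra{\widehat\gamma}\widehat Q$, i.e.\ a finite $L_\Sigma$-algebra structure on $\widehat Q$. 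This is the object assignment of the desired functor; in the statement of the lemma we suppress $\kappa$ and denote the resulting algebra simply by $(\widehat Q,\widehat\gamma)$.

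First I would verify that this assignment sends coalgebra homomorphisms to algebra homomorphisms, which is where the naturality of $\kappa$ enters. A homomorphism $h\colon (Q,\gamma)\ra(Q',\gamma')$ satisfies $T_\Sigma h\cdot\gamma=\gamma'\cdot h$, and applying the contravariant $\widehat{(\mathord{-})}$ turns this into $\widehat\gamma\cdot\widehat{T_\Sigma h}=\widehat h\cdot\widehat{\gamma'}$. The naturality square of $\kappa$ at $h$ reads $\kappa_Q\cdot\widehat{T_\Sigma h}=L_\Sigma\widehat h\cdot\kappa_{Q'}$; combining these two identities shows that $\widehat h\colon\widehat{Q'}\ra\widehat Q$ commutes with the transported structures $\widehat\gamma\cdot\kappa_Q^{-1}$ and $\widehat{\gamma'}\cdot\kappa_{Q'}^{-1}$, hence is an $L_\Sigma$-algebra homomorphism. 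Since $\widehat{(\mathord{-})}\colon\Cat_f^{\op}\ra\DCat_f$ is fully faithful and $\kappa$ is a natural isomorphism, the induced assignment on hom-sets is a bijection, so the lifted (contravariant) functor is full and faithful.

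Finally I would establish essential surjectivity. Because $\widehat{(\mathord{-})}$ is essentially surjective, every finite $L_\Sigma$-algebra $(A,\alpha)$ has $A\cong\widehat Q$ for some finite $Q$ in $\Cat$; transporting $\alpha$ back along this isomorphism and along $\kappa_Q$ produces a $T_\Sigma$-coalgebra on $Q$ whose image is isomorphic to $(A,\alpha)$. Together with full faithfulness this yields the dual equivalence between $\FCoalg{T_\Sigma}$ and $\FAlg{L_\Sigma}$ acting as $(Q,\gamma)\mapsto(\widehat Q,\widehat\gamma)$ and $h\mapsto\widehat h$, as claimed. The only genuinely delicate point is the diagram chase of the second paragraph: one must track the contravariance of $\widehat{(\mathord{-})}$ carefully and orient the naturality square of $\kappa$ in the correct direction, so that the dualized coalgebra-homomorphism equation becomes \emph{exactly} the algebra-homomorphism equation. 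Everything else is a routine consequence of $\widehat{(\mathord{-})}$ being a dual equivalence and $\kappa$ an isomorphism.
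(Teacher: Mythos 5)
Your proof is correct and follows exactly the route the paper intends: the paper states this lemma as an immediate consequence of the preduality of $T_\Sigma$ and $L_\Sigma$ (giving no further proof, even in the appendix), and your argument is precisely the standard lifting of that dual equivalence to the (co)algebra categories, with the transport along $\kappa$, the naturality chase for homomorphisms, and full faithfulness plus essential surjectivity all spelled out correctly.
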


In triple notation this dual equivalence maps a finite $T_\Sigma$-coalgebra $(Q,\gamma_a,\gamma_{\out})$ to its \emph{dual $L_\Sigma$-algebra} $(\widehat{Q},\widehat{\gamma_a},\widehat{\gamma_{\out}})$. Hence finite $T_\Sigma$-coalgebras and finite $L_\Sigma$-algebras are essentially the same structures, and both can be understood as finite automata with additional (ordered) algebraic structure.
\begin
{XExample}(a) Let $\Cat=\BA$ and $\DCat=\Set$. Then ${L_{\Sigma}}$-algebras are the usual concept of a deterministic automaton without final states. A $T_\Sigma$-coalgebra is a deterministic $\Sigma$-automaton with a boolean algebra structure on the state set $Q$ such that (i)~all transitions $\gamma_a:Q\to Q$ are boolean
homomorphisms and (ii)~the final states form an ultrafilter, determined by the preimage of $1$ under the morphism $\gamma_{\out}: Q\ra O_\Cat=\{0,1\}$. For
\emph{finite} automata (ii)\ means that precisely one atom $i\in Q$ is final
and all final states form the upper set $\mathop\uparrow i = \{q\in Q: q\geq i\}$. The dual equivalence of the previous lemma takes a finite boolean automaton $Q=(Q,\gamma_a,\gamma_{\out})$ to the automaton $\widehat{Q}=(\widehat Q, \widehat{\gamma_a},\widehat{\gamma_{\out}})$ in $\Set$ whose states are the atoms of $Q$. The unique atomic final state $i\in Q$ is the initial state of $\widehat{Q}$, and there is a transition $x \overset{a}{\ra} x'$  in $\widehat{Q}$ iff 
\[ x' = \bigwedge\{ y: y  \overset{a}{\ra} y' \text{ in $Q$ for some $y'\geq x$} \}.\]

\endgraf(b)\enspace The case $\Cat=\DL$ and $\DCat=\Poset$ is analogous. Here $L_\Sigma$-algebras are \emph{ordered} deterministic automata without final states. $T_\Sigma$-coalgebras carry a distributive lattice structure on $Q$, and again for finite $Q$ the final states form an upper set $\mathop\uparrow i$.
 \end{XExample}
 
 \begin{XRemark}\label{R:fact}
We frequently need to factorize (co-)algebra homomorphisms into a surjective and an injective part. For the variety $\Cat$ we choose the factorization system
 \[({\text{strong epi}},{\text{mono}})=({\text{surjective}},{\text{injective}}).\] Recall that an
 epimorphism~$e$ is called strong if it has the diagonal fill-in
 property w.\thinspace r.\thinspace t.\ all monomorphisms~$m$: given morphisms $u$,~$r$ with $r\cdot
 e=m\cdot u$ there exists $d$ with $u=d\cdot e$. Since the
 functor ${T_{\Sigma}}Q={O_{\mathscr C}}\times
 Q^{\Sigma}$ preserves monomorphisms, this
 factorization system of ${\mathscr C}$ lifts to
 $\Coalg{T_\Sigma}$: every
coalgebra homomorphism $h:{(Q,\gamma)}\to{(Q',
 \gamma')}$ factorizes as a \emph{quotient coalgebra} of $(Q,\gamma)$ followed by
 a \emph{subcoalgebra} of ${(Q',\gamma')}$, by which we mean that the underlying morphism in $\Cat$ is a strong epimorphism (or a monomorphism, respectively).
 
 Dually, in~${\mathscr D}$ we consider the factorization system
  \[(\text{epi, strong mono}).\]  If $\DCat$ is a variety of algebras, this is just the (surjective, injective)-factorization system as epimorphisms in $\DCat$ are surjective by Assumption \ref{A:basic}(iii). In  case  $\DCat$ is a variety of ordered algebras, we get the (surjective, injective order-embedding)-factorization system, which is clearly the ``right'' one for ordered algebras. Since 
 ${L_{\Sigma}}=\one_\DCat + \coprod_\Sigma A$ preserves epimorphisms,
 the factorization system of ${\mathscr D}$ lifts to
 $\Alg{L_\Sigma}$: every $L_\Sigma$-algebra homomorphism $h: (A,\alpha)\ra (A',\alpha')$ factorizes as a \emph{quotient algebra} of $(A,\alpha)$ followed by a \emph{subalgebra} of $(A',\alpha')$, i.e., the underlying morphism in $\DCat$ is an epimorphism (or a strong monomorphism, respectively).
 \end{XRemark}
 
\subsection{${\mathscr D}$-Monoids}

Our Assumption \ref{A:basic}(iv) that $\DCat$ be entropic admits a more  categorical interpretation. Given objects $A$, $B$ and $C$ in $\DCat$, a \emph{bimorphism} is a function $f: \under{A}\times \under{B} \ra \under{C}$ such that $f(a,\mathord{-}): B\ra C$ and $f(\mathord{-},b): A\ra C$ are morphisms of $\DCat$ for all $a\in A$ and $b\in B$. A \emph{tensor product} of $A$ and $B$ is a universal bimorphism $t: \under{A}\times \under{B} \ra \under{A\otimes B}$, i.e., for every bimorphism $f: \under{A}\times \under{B} \ra \under{C}$ there is a unique morphism $f'$ in $\DCat$ making the diagram below commute.
\[
\xymatrix{
\under{A}\times \under{B} \ar[dr]_f \ar[r]^t & \under{A\otimes B} \ar@{-->}[d]^{\under{f'}}\\
& \under{C}
}
\]
As shown in \cite{BN}, tensor products exist in any variety of (ordered) algebras, and ``entropic'' means precisely that $\DCat= (\DCat,\otimes,\one_\DCat)$ is a symmetric monoidal closed category. In particular, as in any monoidal category, we have a notion of \emph{monoid} in $\DCat$. In our setting this means the following:

\begin{XDefinition}\ignorespaces\label
{D:mon} A \emph{${\mathscr D}$-monoid} $(D,{\circ},i)$ consists of an object $D$ of ${\mathscr D}$ and a monoid structure $(\under{D},{\circ},i)$ on its underlying set $\mathopen|D\mathclose|$ whose multiplication is a bimorphism, i.e., for every $x\in\mathopen|D\mathclose|$ both $x\circ{\mskip1.5mu \char"7B\mskip1.5mu }$ and ${\mskip1.5mu \char"7B\mskip1.5mu }\circ x$ are endomorphisms of~$D$ in $\DCat$. A \emph{morphism} $h: (D,{\circ},i)\ra (D',{\circ'},i')$ of $\DCat$-monoids is a morphism $h: D\ra D'$ in $\DCat$ preserving the monoid structure. The $\DCat$-monoids and their morphisms form a category
\[\Mon{\DCat}.\] 
\end{XDefinition}
Observe that  $\Mon{\DCat}$  is a variety of (ordered)
algebras: add $\circ$ and $i$ to the  signature of $\DCat$, and add the monoid axioms and equations expressing that $\circ$ is a bimorphism to the (in)equalities presenting~${\mathscr
D}$. The factorization system of $\DCat$, see Remark \ref{R:fact}, lifts to $\Mon{\DCat}$. Hence a \emph{submonoid} of a ${\mathscr D}$-monoid $D$ is a $\DCat$-monoid morphism into $D$ carried by a strong monomorphism in~${\mathscr D}$, and a \emph{quotient monoid} of $D$ is a $\DCat$-monoid morphism with domain $D$ carried by an epimorphism in~${\mathscr D}$. 

\begin{XExamples} For our categories $\DCat$ of Example \ref{E:bool} the $\DCat$-monoids are characterized as follows:

(a) $\DCat=\Set$:  (ordinary) monoids.

(b) $\DCat=\Poset$: ordered monoids.

(c) ${\mathscr
D}=\JSL$: idempotent semirings, i.e., semirings $(S,{+},{\cdot},0)$ satisfying the equation $x+x=x$.
Indeed, this means precisely that $(S,{+},0)$ is a semilattice,
and the distributive laws of a semiring express that the multiplication
preserves $+$ and $0$ in each variable, so that it
is a bimorphism. 

(d) $\DCat=\Vect{\Int}$: $\Int$-algebras in the classical sense of algebras
over a field, i.e., a $\Int$-algebra is a vector space over $\Int$ together with a monoid structure whose multiplication is distributive (=
linear in each variable). 

(e) $\DCat=\PSet$: monoids with $0$, i.e., monoids containing an element $0$ such that $x\circ 0 = 0\circ x = 0$ for all $x$. Morphisms of $\Mon{\DCat}$ are monoid morphisms preserving $0$.
\end{XExamples}

\begin{XDefinition}\label{def:pseudovar}
A \emph{pseudovariety of $\DCat$-monoids} is a class of finite $\DCat$-monoids closed under submonoids, quotients and finite products.
\end{XDefinition}

On our way to proving the generalized Eilenberg theorem we will encounter monoids with a specified set of generators. To this end we need to describe the free $\DCat$-monoids. We write \[\Psi: \Set\ra\DCat\] for the free algebra functor, that is, the left adjoint of the forgetful functor $\under{\mathord{-}}: \DCat\ra\Set$. For notational simplicity we assume that $X$ is a subset of $\under{\Psi X}$, and the universal map is the inclusion $X\monoto \under{\Psi X}$. (For nontrivial varieties $\DCat$ of (ordered) algebras such a choice of $\Psi X$ is always possible.)

\begin{XProposition}[see~\cite{GET1}, Prop.~4.22]\label{P:free} The free ${\mathscr D}$-monoid on
a finite set ${\Sigma}$ is the ${\mathscr D}$-monoid
$(\Psi{\Sigma}^*,{\bullet},\varepsilon)$ with multiplication $\bullet$ extending
the concatenation of words over ${\Sigma}$, and universal map $\Sigma\monoto \Sigma^* \monoto \under{\Psi\Sigma^*}$. That is, every map $f:\Sigma\ra \under{D}$ into a $\DCat$-monoid $D$ extends uniquely to a $\DCat$-monoid morphism $\ol f$:
\[
\xymatrix{
\Sigma \ar@{>->}[r]\ar[dr]_f & \under{\Psi\Sigma^*} \ar@{-->}[d]^{\under{\overline f}}\\
& \under{D}
}
\]
\end{XProposition}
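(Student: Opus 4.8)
The plan is to verify the universal property directly, using the monoidal closed structure of $\DCat$ to manipulate bimorphisms. The crucial preliminary observation, which I would isolate as a lemma, is that the free-algebra functor $\Psi$ is strong monoidal: there is a natural isomorphism $\Psi X \otimes \Psi Y \cong \Psi(X\times Y)$ together with $\one_\DCat = \Psi 1$. Indeed, by the universal property of the tensor product a morphism $\Psi X\otimes\Psi Y\ra C$ is the same thing as a bimorphism $\under{\Psi X}\times\under{\Psi Y}\ra\under C$, and—using that $[\Psi X, C]$ is again an object of $\DCat$ by entropicity (Assumption~\ref{A:basic}(iv))—such bimorphisms are in bijection with functions $X\times Y\ra\under C$, i.e.\ with morphisms $\Psi(X\times Y)\ra C$. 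The upshot I will use repeatedly is that a bimorphism out of $\under{\Psi X}\times\under{\Psi Y}$ is uniquely determined by its restriction to the generators $X\times Y$, and conversely every such restriction extends.

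Granting this, I would first check that $(\Psi\Sigma^*,\bullet,\varepsilon)$ is a well-defined $\DCat$-monoid. Concatenation of words is a function $\Sigma^*\times\Sigma^*\ra\Sigma^*$; postcomposing with the inclusion $\Sigma^*\monoto\under{\Psi\Sigma^*}$ and invoking the preliminary lemma yields a unique bimorphism $\bullet\colon\under{\Psi\Sigma^*}\times\under{\Psi\Sigma^*}\ra\under{\Psi\Sigma^*}$ extending it, with unit $\varepsilon\in\Sigma^*\seq\under{\Psi\Sigma^*}$. Associativity and the unit laws are equations between morphisms with domain a threefold tensor power $\Psi\Sigma^*\otimes\Psi\Sigma^*\otimes\Psi\Sigma^*$ (resp.\ $\Psi\Sigma^*$); since such morphisms are determined by their values on the generating words, and the equations hold on words where $\bullet$ is literally concatenation, they hold identically. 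Thus $\bullet$ is an associative bimorphism with unit $\varepsilon$, so $(\Psi\Sigma^*,\bullet,\varepsilon)$ is a $\DCat$-monoid.

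For the universal property, let $f\colon\Sigma\ra\under D$ be any map into a $\DCat$-monoid $D$. I define $\overline f$ on words by $\overline f(a_1\cdots a_n)=f(a_1)\circ\cdots\circ f(a_n)$, sending $\varepsilon$ to the unit of $D$, obtaining a function $\Sigma^*\ra\under D$ which extends uniquely to a $\DCat$-morphism $\overline f\colon\Psi\Sigma^*\ra D$ by freeness of $\Psi\Sigma^*$. It remains to see that $\overline f$ is a monoid morphism. Preservation of the unit is immediate. For multiplication, both $(u,v)\mapsto\overline f(u\bullet v)$ and $(u,v)\mapsto\overline f(u)\circ\overline f(v)$ define bimorphisms $\under{\Psi\Sigma^*}\times\under{\Psi\Sigma^*}\ra\under D$—here I use that the multiplication of $D$ is a bimorphism and $\overline f$ a $\DCat$-morphism—and they agree on the generating pairs of words, since on words the identity $\overline f(uv)=\overline f(u)\circ\overline f(v)$ is exactly associativity of $\circ$ in $D$. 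By the determined-on-generators principle they coincide, so $\overline f$ preserves $\bullet$. Uniqueness follows because any monoid morphism extending $f$ is forced to take the displayed values on $\Sigma^*$, and a $\DCat$-morphism out of $\Psi\Sigma^*$ is determined by its restriction to $\Sigma^*$.

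The main obstacle I anticipate is the preliminary lemma that $\Psi$ is strong monoidal, equivalently that bimorphisms out of $\under{\Psi X}\times\under{\Psi Y}$ are freely determined by their values on $X\times Y$. This is precisely where entropicity enters: extending a function $X\times Y\ra\under C$ to a bimorphism proceeds by first extending in one variable to a morphism-valued map $Y\ra\under{[\Psi X,C]}$ and then extending this along the freeness of $\Psi Y$, which only makes sense because $[\Psi X,C]$ is itself an object of $\DCat$. Once this is in place every remaining step is a routine \emph{check on generators}, so the whole proof amounts to transporting the free-monoid structure on words through the strong monoidal functor $\Psi$.
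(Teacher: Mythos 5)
The paper does not actually prove this proposition in the text you were given---it is imported from \cite{GET1}, Prop.~4.22---so there is no in-paper argument to compare yours against; judged on its own terms, your proof is correct. Your key lemma, that $\Psi$ is strong monoidal, or equivalently that bimorphisms out of $\under{\Psi X}\times\under{\Psi Y}$ correspond bijectively to arbitrary functions on the generators $X\times Y$, is sound and correctly locates the role of entropicity (Assumption~\ref{A:basic}(iv)): a bimorphism $\under{A}\times\under{B}\ra\under{C}$ is the same thing as a $\DCat$-morphism $A\ra [B,C]$ precisely because $[B,C]$ carries the pointwise algebraic structure, and freeness of $\Psi X$ and $\Psi Y$ then reduces everything to values on $X\times Y$; this is also consistent with the paper's convention that $\one_\DCat$ is the free algebra on one generator. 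Given this, the remaining steps---extending concatenation to a bimorphism $\bullet$, verifying associativity and the unit laws on generating words, defining $\ol f$ on words, checking that it is a $\DCat$-monoid morphism, and uniqueness---are all correct. The only point you leave implicit is that, before invoking the determined-on-generators principle for the associativity law, one must check that both $(u,v,w)\mapsto(u\bullet v)\bullet w$ and $(u,v,w)\mapsto u\bullet(v\bullet w)$ are morphisms of $\DCat$ in each variable separately, so that they genuinely correspond to morphisms out of the threefold tensor power; this is immediate since $\bullet$ is a bimorphism, and since you do spell out the analogous verification for the equation $\ol f(u\bullet v)=\ol f(u)\circ\ol f(v)$, this is a cosmetic omission rather than a gap.
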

\begin{XExamples} In our categories $\DCat$ of Example \ref{E:bool} the free $\DCat$-monoids $\Psi\Sigma^*$ on $\Sigma$ are characterized as follows:

(a) ${\mathscr D}={\Set}$: the usual free monoid $\Sigma^*$.

(b)  ${\mathscr D}={\Poset}$: the free ordered monoid $\Sigma^*$ (discretely ordered).

(c) $\DCat=\JSL$: the free idempotent semiring $\Pow_f \Sigma^*$, carried by the set of  finite languages over $\Sigma$. The semilattice structure is union and $\emptyset$, the monoid
multiplication is the concatenation $L_1\bullet L_2=L_1L_2$ of
 languages, and the monoid unit is $\{\epsilon\}$.
 
(d) $\DCat=\Vect{\Int}$: the free $\Int$-algebra $\Pow_f\Sigma^*$. Its vector addition is the symmetric
difference $L_1\oplus L_2=(L_1\setminus L_2)\cup(L_2\setminus L_1)$, and the zero
vector is $\emptyset$. The monoid unit is again $\{\epsilon\}$, and the monoid multiplication is  $\Int$-weighted
concatenation of languages: $L_1\bullet L_2$ consists of all words $w$ having an odd number of decompositions $w=w_1w_2$ with $w_i\in L_i$.

(e) $\DCat=\PSet$: the free monoid with $0$. This is the monoid $\Sigma^* + \{0\}$ arising from $\Sigma^*$ by adding a zero element.
\end{XExamples}

\begin{XDefinition}
(a) A ${\mathscr D}$-monoid is called \emph{${\Sigma}$-generated} if a set of generators indexed by ${\Sigma}$ is given in it. Equivalently: if it is a quotient monoid of the free ${\mathscr D}$-monoid $\Psi{\Sigma}^*$. \emph{Morphisms} of ${\Sigma}$-generated ${\mathscr D}$-monoids are required to preserve the given generators. That is, given two ${\Sigma}$-generated ${\mathscr D}$-monoids $e_k:\Psi{\Sigma}^*\onto(D_k,{\circ_k},i_k)$, $k=1$,~$2$, a morphism of ${\Sigma}$-generated ${\mathscr D}$-monoids  is a $\DCat$-monoid morphism $f:(D_1,{\circ_1},i_1)\to(D_2,{\circ_2},i_2)$ with $e_2=f\cdot e_1$.

(b) The \emph{subdirect product} of two ${\Sigma}$-generated ${\mathscr D}$-monoids $e_k:\Psi{\Sigma}^*\epito(D_k,{\circ_k},i_k)$ is the $\DCat$-submonoid of their product which is the image of $\langle e_1,e_2\rangle:\Psi{\Sigma}^*\to D_1\times D_2$.
\end{XDefinition}

\begin{XDefinition}
By a \emph{pseudovariety of ${\Sigma}$-generated ${\mathscr D}$-monoids} is meant a collection of finite ${\Sigma}$-generated ${\mathscr D}$-monoids closed under subdirect products and quotients.
\end{XDefinition} 
In other words, if ${\mathcal L}$ denotes the poset of all finite quotients of $\Psi{\Sigma}^*$ in $\Mon{\DCat}$, then a pseudovariety is a subposet closed under finite joins ($=$ subdirect products) and closed downwards (i.e., under quotients). Here we use the ordering of quotients $e:\Psi{\Sigma}^*\epito D$ where $e_1\le e_2$ iff $e_1$ factorizes through~$e_2$.

\begin{XRemark}\label{rem:assalg}  Every ${\Sigma}$-generated ${\mathscr
D}$-monoid $e:\Psi{\Sigma}^*\onto(D,{\circ},i)$ defines the
\emph{associated $L_\Sigma$-algebra}
$\alpha:{L_{\Sigma}}D\to D$ with the same object $D$ of states, initial state
$e(\varepsilon)$ and  transitions given by right
multiplication $\alpha_a(d)= d \circ e(a)$ for $a\in\Sigma$. In particular, the associated $L_\Sigma$-algebra of the free ${\mathscr D}$-monoid
$\Psi{\Sigma}^*$ has the initial state $\varepsilon$ and
the transitions $\mathord{-}\bullet a$ for
$a\in{\Sigma}$.  This means that the above $L_\Sigma$-algebra structure of $D$ is the unique one that makes $e$ a homomorphism of $L_\Sigma$-algebras.

As shown in \cite[Prop.~4.29]{GET1}, $\Psi\Sigma^*$ is the initial $L_\Sigma$-algebra: for every ${L_{\Sigma}}$-algebra
$(A,\alpha)$ there exists a unique $L_\Sigma$-algebra homomorphism
\[e_A:\Psi{\Sigma}^*\to A.\] 
Its restriction to $\Sigma^*$ computes the action of $A$ on words $w\in{\Sigma}^*$:
\[ e_A(w) = \alpha_w\cdot \alpha_\init: \one_\DCat \ra A. \]
 Here we use the notation \[ \alpha_w={\alpha_{a_n}}\cdots{\alpha_{a_1}}:A
\to A\qquad\hbox{for $w=a_{1}\cdots a_{n}$}.\]
Analogously, for coalgebras $(Q,\gamma)$ we put $\gamma_w= {\gamma_{a_n}}\cdots{\gamma_{a_1}}$.
\end{XRemark}

Since $\Sigma$-generated monoids are $L_\Sigma$-algebras one may ask for the converse: given an $L_\Sigma$-algebra, is it associated to some $\Sigma$-generated monoid? In the next subsection we will see that this question is, by duality, directly related to closure properties of classes of regular languages.

 \subsection{Languages}\label{2.3}

Our categorical approach to varieties of languages starts with a characterization of the regular languages over a fixed alphabet $\Sigma$ by a universal property. Let us call a $T_\Sigma$-coalgebra $Q$ \emph{locally finite} if it is a filtered colimit of finite coalgebras. Or equivalently, if every state $q\in Q$ lies in a finite subcoalgebra of $Q$. As shown in \cite{M}, the terminal locally finite coalgebra $\rho T_\Sigma$ -- characterized by the property that every locally finite coalgebra has a unique homomorphism into it -- is the filtered colimit of the diagram 
\[ \FCoalg{T_\Sigma} \monoto \Coalg{T_\Sigma}  \]
of \emph{all} finite coalgebras. Its coalgebra structure is an isomorphism
\[ \rho T_\Sigma \cong T_\Sigma(\rho T_\Sigma), \]
which is why $\rho T_\Sigma$ is also called the \emph{rational fixpoint} of $T_\Sigma$. 

\begin{XProposition}[see~\cite{GET1}, Cor.~2.11]\ignorespaces\label{E:reg} $\rho T_\Sigma$ is carried by the set of all regular languages over $\Sigma$. The transition morphisms are carried by left derivatives $L\mapsto a^{-1}L$ for all $a\in\Sigma$, and the final states are precisely the languages containing the empty set. For any locally finite coalgebra $(Q,\gamma)$ the unique homomorphism $L_Q: Q\ra \rho T_\Sigma$ maps a state $q\in Q$ to the \emph{language accepted by $q$}:
\[ L_Q(q) = \{w\in\Sigma^*: \gamma_{\out}\cdot \gamma_w(q) = 1\}. \]
\end{XProposition}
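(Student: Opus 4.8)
The plan is to reduce everything to the level of underlying sets, where $\rho T_\Sigma$ becomes the familiar rational fixpoint of the deterministic-automaton functor, and then transport the classical description back. First I would use two standard facts: the forgetful functor $\under{\mathord{-}}\colon\Cat\ra\Set$ of a finitary variety preserves filtered colimits, and the forgetful functor $\Coalg{T_\Sigma}\ra\Cat$ creates colimits. Since $\rho T_\Sigma$ is the filtered colimit of $\FCoalg{T_\Sigma}\monoto\Coalg{T_\Sigma}$ recalled above from \cite{M}, these facts give that $\under{\rho T_\Sigma}$ is the filtered colimit in $\Set$ of the underlying sets of all finite $T_\Sigma$-coalgebras. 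Because $\under{T_\Sigma Q}=\{0,1\}\times\under{Q}^\Sigma$, applying $\under{\mathord{-}}$ sends a $T_\Sigma$-coalgebra to a coalgebra for the $\Set$-functor $HX=\{0,1\}\times X^\Sigma$ — an ordinary deterministic automaton without initial state — and this assignment is functorial and preserves filtered colimits. Hence $\under{\rho T_\Sigma}$ carries a canonical $H$-coalgebra structure whose transitions and output are the underlying maps of $\gamma_a$ and $\gamma_{\out}$.

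Next I would bring in the terminal coalgebra. Both $T_\Sigma$ and $H$ preserve $\omega^{op}$-limits (being products of a constant with a power), so the terminal chains converge; $\under{\mathord{-}}$ is a right adjoint and maps the terminal chain of $T_\Sigma$ onto that of $H$, whence $T_\Sigma$ has a terminal coalgebra $\nu T_\Sigma$ in $\Cat$ with $\under{\nu T_\Sigma}=\nu H=\Pow(\Sigma^*)$, output $L\mapsto[\epsilon\in L]$ and transitions the left derivatives $L\mapsto a^{-1}L$. For any $\Cat$-coalgebra $(Q,\gamma)$ the unique morphism $\beh_Q\colon Q\ra\nu T_\Sigma$ is, on underlying sets, the usual behavior map $q\mapsto\{w\in\Sigma^*:\gamma_{\out}\cdot\gamma_w(q)=1\}$. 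This already pins down the transition and output structure asserted in the statement, once we know $\under{\rho T_\Sigma}$ embeds in $\Pow(\Sigma^*)$ as exactly the regular languages.

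The heart of the matter — and the step I expect to be the main obstacle — is identifying the image of the induced map $b\colon\under{\rho T_\Sigma}\ra\Pow(\Sigma^*)$ with $\Reg\Sigma$. The inclusion $\seq$ is immediate, since states of a finite coalgebra accept regular languages. For the reverse inclusion I must realize every regular language by a finite coalgebra \emph{in $\Cat$}, not merely in $\Set$; this is exactly where local finiteness of $\Cat$ is essential. Given a regular $L$ with finite $\Set$-automaton $(X,\delta)$, I would form the free $\Cat$-algebra $FX$ — finite because $\Cat$ is locally finite — set $\gamma_a=F(\delta_a)$, and let $\gamma_{\out}\colon FX\ra O_\Cat$ be the homomorphism extending $\delta_{\out}\colon X\ra\{0,1\}=\under{O_\Cat}$. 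Writing $\eta\colon X\ra\under{FX}$ for the insertion of generators, a routine induction on word length using $\gamma_a\cdot\eta=\eta\cdot\delta_a$ and $\gamma_{\out}\cdot\eta=\delta_{\out}$ shows that the generator $\eta(x)$ accepts precisely the language of $x$. Thus $b$ is onto $\Reg\Sigma$.

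It remains to prove $b$ injective, i.e.\ that behaviourally equivalent states of finite $\Cat$-coalgebras are identified in the colimit. Given $q\in Q$ and $q'\in Q'$ accepting the same language, I would form the coproduct $Q+Q'$ in $\Coalg{T_\Sigma}$ (finite, by local finiteness) and factor its behavior map using the lifted $(\text{surjective},\text{injective})$-factorization of Remark~\ref{R:fact} as $Q+Q'\epito Q''\monoto\nu T_\Sigma$. Then $Q''$ is again finite, and since $Q''\monoto\nu T_\Sigma$ is injective on underlying sets while $q$ and $q'$ map to the same language in $\nu T_\Sigma$, they have equal image in $Q''$, hence are identified in the colimit. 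This yields $\under{\rho T_\Sigma}\cong\Reg\Sigma$ with the stated coalgebra structure, and exhibits $\rho T_\Sigma\monoto\nu T_\Sigma$ as a subcoalgebra. Finally, for a locally finite coalgebra $(Q,\gamma)$ the unique homomorphism $L_Q\colon Q\ra\rho T_\Sigma$ composed with this inclusion equals $\beh_Q$ by terminality, so reading off underlying sets gives $L_Q(q)=\{w:\gamma_{\out}\cdot\gamma_w(q)=1\}$; this lands in $\Reg\Sigma$ because every state of $Q$ lies in a finite subcoalgebra.
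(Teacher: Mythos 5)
You should first note that this paper contains no proof of this proposition: it is imported verbatim from the authors' earlier work (cited as [GET1, Cor.~2.11]), and the appendix proves only the later results. So there is no internal argument to compare against line by line; what can be judged is whether your blind reconstruction is sound, and it is. Your route — push everything down along the forgetful functors $\Coalg{T_\Sigma}\ra\Cat\ra\Set$ (the first creates colimits, the second preserves filtered colimits and all limits), identify $\nu T_\Sigma$ via the terminal chain so that $\under{\nu T_\Sigma}=\Pow(\Sigma^*)$ with left-derivative transitions, and then show the canonical map $\under{\rho T_\Sigma}\ra\Pow(\Sigma^*)$ is injective with image exactly $\Reg\Sigma$ — is a complete and correct verification. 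The two delicate points are handled properly: surjectivity onto $\Reg\Sigma$ is exactly where local finiteness of $\Cat$ enters (your coalgebra $FX$ with $\gamma_a=F(\delta_a)$, $\gamma_\out$ the homomorphic extension of $\delta_\out$, is finite, and the induction $\gamma_w\cdot\eta=\eta\cdot\delta_w$ is the right computation), and injectivity correctly combines three facts available in the paper: finite coalgebras are closed under coproducts (coproducts of finite algebras are finitely generated, hence finite), the factorization system lifts to $\Coalg{T_\Sigma}$ (Remark on factorizations), and morphisms of the diagram are absorbed by the colimit injections. Conceptually your argument is more elementary and self-contained than the cited source, which derives the result from the general theory of rational fixpoints (in a locally finitely presentable category where finitely presentable and finitely generated objects coincide, $\rho T_\Sigma$ is a subcoalgebra of $\nu T_\Sigma$); you instead verify the subcoalgebra property and the identification of its carrier by hand, at the cost of redoing, for this particular $T_\Sigma$, work that the general theory gives for free.
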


\begin{XExample}\label{ex:rhot}
Continuing our Examples~\ref{E:bool}, $\rho T_\Sigma$ has the following algebraic structure as an object of the variety $\Cat$:

(a) For $\Cat=\BA$ the boolean algebra structure  is $\cup$, $\cap$, $\overline{(\mathord{-})}$, $\emptyset$ and $\Sigma^*$.

(b) For $\Cat=\DL$ the lattice structure is $\cup$, $\cap$, $\emptyset$ and $\Sigma^*$.

(c) For $\Cat=\JSL$ the semilattice structure is $\cup$ and $\emptyset$.

(d) For $\Cat=\Vect{\Int}$ the vector addition is symmetric difference $L\oplus M = (L\setminus M) \cup (M\setminus L)$ and the zero vector is $\emptyset$.

(e) For $\Cat=\BR$ the multiplication is $\cap$, the addition is symmetric difference $\oplus$, and the zero element is $\emptyset$.
\end{XExample}

\begin{XDefinition}\ignorespaces\label{D:loc} By a \emph{local variety
of languages} over ${\Sigma}$ in ${\mathscr C}$ is meant a
subcoalgebra $Q\hookrightarrow\rho{T_{\Sigma}}$ of the
rational fixpoint closed under right derivatives,
i.e., $L\in\mathopen|Q\mathclose|$ implies $La^{-1}\in\under{Q}$ for all $a\in\Sigma$.
\end{XDefinition}

Note that a local variety is closed under left derivatives
automatically, being a subcoalgebra of $\rho{T_{\Sigma}}$.
Closure under right derivatives also admits a fully coalgebraic description:

\begin{XNotation}\label{not:der}
Given a $T_\Sigma$-coalgebra ${(Q,\gamma)}$ and an input $a\in{\Sigma}$, denote by ${(Q,\gamma)}_a$ the $T_\Sigma$-coalgebra with the same states and transitions, but whose final-state morphism is ${\gamma_{\out}}\cdot\gamma_a:Q\to{O_{\mathscr C}}$. 
\end{XNotation}

\begin{XProposition}[see \cite{GET1}, Prop.~4.3]\label{P:right} A
subcoalgebra ${(Q,\gamma)}$ of $\rho{T_{\Sigma}}$ is a
local variety iff a $T_\Sigma$-coalgebra homomorphism
from ${(Q,\gamma)}_a$ to ${(Q,\gamma)}$ exists for every
$a\in{\Sigma}$. \end{XProposition}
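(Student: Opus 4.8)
The plan is to route everything through the terminality of the rational fixpoint $\rho T_\Sigma$ among locally finite coalgebras together with the lifted factorization system on $\Coalg{T_\Sigma}$ (Remark~\ref{R:fact}). The one computational fact underpinning both directions is that the \emph{unique} coalgebra homomorphism $g\colon(Q,\gamma)_a\to\rho T_\Sigma$ acts on states by the right derivative, $g(L)=La^{-1}$. I would establish this first: since $(Q,\gamma)_a$ carries the same transitions as the subcoalgebra $(Q,\gamma)\hookrightarrow\rho T_\Sigma$, it is again locally finite (a subcoalgebra of a locally finite coalgebra is locally finite, and local finiteness depends only on the transitions), so $g$ exists and is unique. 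Feeding the modified output morphism $\gamma_{\out}\cdot\gamma_a$ of $(Q,\gamma)_a$ into the accepted-language formula of Proposition~\ref{E:reg} gives $g(L)=\{w\in\Sigma^*:\gamma_{\out}\cdot\gamma_a\cdot\gamma_w(L)=1\}$; unwinding $\gamma_w(L)=w^{-1}L$ (the iterated left derivative) and $\gamma_a(w^{-1}L)=(wa)^{-1}L$ turns the condition into $wa\in L$, whence $g(L)=\{w:wa\in L\}=La^{-1}$.

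With this identity in hand, the ``if'' direction is immediate. Given a homomorphism $f\colon(Q,\gamma)_a\to(Q,\gamma)$, its composite with the inclusion $m\colon(Q,\gamma)\hookrightarrow\rho T_\Sigma$ is a coalgebra homomorphism into $\rho T_\Sigma$, hence equals $g$ by uniqueness. Therefore $La^{-1}=g(L)=m(f(L))\in\under{Q}$ for every $L\in\under{Q}$, which is precisely closure under right derivatives, i.e.\ $(Q,\gamma)$ is a local variety.

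For the ``only if'' direction I would start from a local variety $(Q,\gamma)$, so that the set-theoretic image of $g$, namely the subalgebra $\{La^{-1}:L\in\under{Q}\}$ of $\rho T_\Sigma$, is contained in $\under{Q}$. Factorizing $g$ in $\Coalg{T_\Sigma}$ as a quotient coalgebra followed by a subcoalgebra of $\rho T_\Sigma$, the image subcoalgebra is carried by exactly this subalgebra; since $(Q,\gamma)$ is itself a subcoalgebra of $\rho T_\Sigma$ with a larger carrier, the image subcoalgebra factors through $m$, and composing the quotient part of $g$ with that factorization yields the desired homomorphism $f\colon(Q,\gamma)_a\to(Q,\gamma)$ acting by $L\mapsto La^{-1}$.

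The only step demanding genuine care — more a matter of writing it cleanly than of conceptual difficulty — is verifying that the corestriction of $g$ to the subcoalgebra $(Q,\gamma)$ is really a coalgebra homomorphism, and not merely a $\Cat$-morphism on carriers. This is exactly where the lifting of the (surjective, injective)-factorization system to $\Coalg{T_\Sigma}$ and the preservation of monomorphisms by $T_\Sigma$ (both recorded in Remark~\ref{R:fact}) are used: since $T_\Sigma m$ is monic, the coalgebra-homomorphism equation for the corestriction can be cancelled off from the one for $g$. Everything else reduces to the single derivative computation above together with terminality, so once that identity is secured the proposition follows.
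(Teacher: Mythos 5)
Your proof is correct, and it is essentially the intended argument: this paper does not prove Proposition~\ref{P:right} itself but cites \cite{GET1}, Prop.~4.3, and your route --- computing via Proposition~\ref{E:reg} that the unique homomorphism $(Q,\gamma)_a \to \rho T_\Sigma$ sends $L$ to $La^{-1}$, then getting ``if'' from terminality and ``only if'' from the lifted image factorization together with the homomorphism theorem (Remark~\ref{rem:homcoalg}, i.e.\ cancelling the mono $T_\Sigma m$) --- is exactly the machinery the paper sets up for this purpose and reuses, e.g., in the proof of Theorem~\ref{T:pre}. The one step you rightly flag as needing care, that the corestriction of $g$ to $Q$ is a coalgebra homomorphism and not merely a $\Cat$-morphism, is handled correctly by your appeal to preservation of monomorphisms by $T_\Sigma$.
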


\begin{XProposition}[see \cite{GET1}, Prop.~3.24 and 4.32]\label{prop:rqcmon} A finite $T_\Sigma$-coalgebra $Q$ is a subcoalgebra of $\rho T_\Sigma$ iff its dual $L_\Sigma$-algebra $\widehat Q$ is a quotient algebra of $\Psi\Sigma^*$. In this case, $Q$ is
a local variety iff $\widehat{Q}$ is the associated $L_\Sigma$-algebra of some finite ${\Sigma}$-generated ${\mathscr D}$-monoid. \end{XProposition}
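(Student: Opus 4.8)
The plan is to prove the two biconditionals separately, in each case transporting a coalgebraic condition on $Q$ into an algebraic condition on $\widehat Q$ along the dual equivalence $\FCoalg{T_\Sigma}\simeq(\FAlg{L_\Sigma})^{op}$. For the first biconditional I would start from the observation that, since $\rho T_\Sigma$ is the terminal locally finite coalgebra and $Q$ is finite, the unique homomorphism $L_Q\colon Q\to\rho T_\Sigma$ is the only possible subcoalgebra inclusion; hence $Q$ is a subcoalgebra of $\rho T_\Sigma$ iff $L_Q$ is injective. Because $T_\Sigma$ is a deterministic-automaton functor, two distinct states are identified by some quotient coalgebra iff they are identified by $L_Q$, so $L_Q$ is injective iff $Q$ has no proper quotient coalgebra. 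The dual equivalence interchanges the two factorization systems of Remark~\ref{R:fact}, so quotient coalgebras of $Q$ correspond to subalgebras of $\widehat Q$; thus $Q$ has no proper quotient iff $\widehat Q$ has no proper subalgebra. Finally, the least $L_\Sigma$-subalgebra of $\widehat Q$ is the image of the unique homomorphism $e_{\widehat Q}\colon\PS\to\widehat Q$ out of the initial $L_\Sigma$-algebra, so having no proper subalgebra is exactly surjectivity of $e_{\widehat Q}$, i.e.\ $\widehat Q$ being a quotient of $\PS$. Chaining these equivalences proves the first claim.

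For the second biconditional I would invoke Proposition~\ref{P:right}: a subcoalgebra $Q$ of $\rho T_\Sigma$ is a local variety iff for each $a\in\Sigma$ there is a coalgebra homomorphism $(Q,\gamma)_a\to(Q,\gamma)$. Dualizing, and noting that $\widehat{(Q,\gamma)_a}$ is the $L_\Sigma$-algebra with the same underlying object and transitions $\alpha_b=\widehat{\gamma_b}$ as $\widehat Q$ but with initial state $\alpha_a\cdot\alpha_{\init}$ (since the output of $(Q,\gamma)_a$ is $\gamma_{\out}\cdot\gamma_a$), this becomes: for each $a$ there is a $\DCat$-endomorphism $k_a$ of $\widehat Q$ commuting with every transition $\alpha_b$ and satisfying $k_a(i)=\alpha_a(i)$, where $i$ is the element selected by $\alpha_{\init}$.

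One direction is then easy. If $\widehat Q$ is the associated $L_\Sigma$-algebra of a $\Sigma$-generated $\DCat$-monoid $(\widehat Q,\circ,i)$, so that $\alpha_a$ is the right multiplication $x\mapsto x\circ s_a$ with $s_a=\alpha_a(i)$, then the left multiplication $k_a\colon x\mapsto s_a\circ x$ is a $\DCat$-endomorphism (the product being a bimorphism), commutes with every $\alpha_b$ by associativity, and satisfies $k_a(i)=s_a$; this supplies the required homomorphisms, so $Q$ is a local variety by Proposition~\ref{P:right}. The converse is the heart of the matter. Here I would gather the transition-commuting endomorphisms into $C=\{f\in[\widehat Q,\widehat Q] : f\cdot\alpha_b=\alpha_b\cdot f\text{ for all }b\in\Sigma\}$; by entropicity $[\widehat Q,\widehat Q]$ is a $\DCat$-monoid under composition and $C$ is a sub-$\DCat$-monoid of it. The evaluation $\ev_i\colon C\to\widehat Q$, $f\mapsto f(i)$, is a $\DCat$-morphism, and the plan is to prove it an isomorphism. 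It is injective because a transition-commuting endomorphism is pinned down by its value at $i$: from $f(\alpha_w(i))=\alpha_w(f(i))$ and the fact, secured in the first claim, that the reachable states $\alpha_w(i)=e_{\widehat Q}(w)$ generate $\widehat Q$, the equality $f(i)=g(i)$ forces $f=g$. It is surjective because its image is a subalgebra containing every $k_w(i)$, and these values range exactly over the reachable states, which generate $\widehat Q$. In the ordered case the same reachability-and-monotonicity argument shows $\ev_i$ reflects the order, so the bijective morphism is a genuine iso. Transporting the composition-monoid structure of $C$ along $\ev_i$ equips $\widehat Q$ with a $\DCat$-monoid structure with unit $i$ and product $f(i)\circ g(i)=f(g(i))$, and a short computation using $f\cdot\alpha_a=\alpha_a\cdot f$ gives $d\circ s_a=\alpha_a(d)$; hence the associated $L_\Sigma$-algebra of this monoid is $\widehat Q$ itself, and surjectivity of $e_{\widehat Q}$ makes the monoid $\Sigma$-generated.

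The step I expect to be the main obstacle is precisely this converse construction: showing that $\ev_i$ is a $\DCat$-isomorphism (order-reflection in the ordered case included) and that the transported structure reproduces $\widehat Q$ on the nose as its associated $L_\Sigma$-algebra. Everything here rests on the reachability of $\widehat Q$ established in the first claim — that every element is a $\DCat$-combination of the states $\alpha_w(i)$ — which is what lets an endomorphism be determined, and compared, by its single value at the unit. The remaining points (that $C$ is closed under the $\DCat$-operations and under composition, and that evaluation and pre-/post-composition are $\DCat$-morphisms) are routine consequences of entropicity.
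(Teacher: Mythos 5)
The paper never proves this proposition itself: it is stated with a citation to \cite{GET1} (Props.~3.24 and~4.32), and no argument for it appears in the main text or in the appendix, so there is no internal proof to compare against --- your proposal has to be judged on its own, and on its own it is correct. The first equivalence is the reachability/observability duality, and every link in your chain holds: $Q$ is a subcoalgebra of $\rho T_\Sigma$ iff $L_Q$ is injective (uniqueness of morphisms into the terminal locally finite coalgebra), iff $Q$ has no proper quotient coalgebra (factorize $L_Q$ for one direction; for the other, $L_{Q'}\cdot e=L_Q$ for any quotient $e\colon Q\epito Q'$), iff $\widehat Q$ has no proper subalgebra (the dual equivalence interchanges the two factorization systems of Remark~\ref{R:fact}), iff $e_{\widehat Q}$ is surjective (its image is the least $L_\Sigma$-subalgebra, by initiality of $\Psi\Sigma^*$). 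For the second equivalence, your translation of Proposition~\ref{P:right} through the duality is exact, and the centralizer construction goes through: $C$ is a sub-$\DCat$-monoid of $[\widehat Q,\widehat Q]$ because each $\alpha_b$ is a homomorphism and the algebraic structure on hom-objects is pointwise; injectivity, surjectivity and (in the ordered case) order-reflection of $\ev_i$ all reduce to the fact, supplied by the first part, that the states $\alpha_w(i)=e_{\widehat Q}(w)$ generate $\widehat Q$ as a $\DCat$-algebra (for surjectivity note $k_{a_1}\cdots k_{a_n}(i)=\alpha_{a_1\cdots a_n}(i)$, and the image of $\ev_i$ is a subalgebra); finally the identity $f(i)\circ \alpha_a(i)=f(\alpha_a(i))=\alpha_a(f(i))$ shows the transported multiplication has the $\alpha_a$ as right multiplications, and the induced monoid quotient agrees with $e_{\widehat Q}$ on $\Sigma^*$, hence equals it, giving $\Sigma$-generation. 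One caveat on scope rather than correctness: your argument still leans on Proposition~\ref{P:right}, which the present paper also imports from \cite{GET1} without proof, so what you have produced is a proof that is self-contained relative to the results stated in this paper, not one that is independent of \cite{GET1} altogether.
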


In other words, for any finite coalgebra $Q$ the unique $T_\Sigma$-coalgebra homomorphism $L_Q: Q\ra \rho T_\Sigma$ of Proposition \ref{E:reg} is injective iff the unique $L_\Sigma$-algebra homomorphism $e_{\widehat Q}: \Psi\Sigma^*\ra \widehat{Q}$ of Remark~\ref{rem:assalg} is surjective. Moreover, if $Q\monoto \rho T_\Sigma$ is a local variety of languages, there exists a (unique) monoid structure on $\widehat{Q}$ making $e_{\widehat Q}$ a $\DCat$-monoid morphism. In this case we call $\widehat{Q}$ the \emph{dual $\Sigma$-generated $\DCat$-monoid of $Q$}.

Proposition \ref{prop:rqcmon} was the basis of the main result of \cite{GET1}. Observe that the set of all local varieties of languages over $\Sigma$ in $\Cat$ forms a complete lattice whose meet is intersection. Analogously for the set of all pseudovarieties of $\Sigma$-generated $\DCat$-monoids.

\begin{XTheorem}[Local Eilenberg Theorem~\cite{GET1}, Thm.~4.36] The lattice of all local varieties of
languages over $\Sigma$ in $\Cat$ is isomorphic to the
lattice of all pseudovarieties of ${\Sigma}$-generated
${\mathscr D}$-monoids. \end{XTheorem}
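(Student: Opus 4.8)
The plan is to deduce the theorem from a single \emph{order}-isomorphism of the two complete lattices, since an order-isomorphism between complete lattices is automatically a lattice isomorphism. Write $\mathcal{L}$ for the poset of finite quotients of $\Psi\Sigma^*$ in $\Mon{\DCat}$, ordered as in the excerpt; then a pseudovariety of $\Sigma$-generated $\DCat$-monoids is exactly a subset of $\mathcal{L}$ that is downward closed (under quotients) and closed under finite joins (subdirect products), i.e.\ an \emph{ideal} of $\mathcal{L}$. The strategy has three steps: (1) use Proposition \ref{prop:rqcmon} together with the dual equivalence of $\FCoalg{T_\Sigma}$ and $\FAlg{L_\Sigma}$ to identify the poset of \emph{finite} local varieties over $\Sigma$ (ordered by inclusion of subcoalgebras) with $\mathcal{L}$; (2) show that every local variety is the directed union of the finite local varieties it contains, so that arbitrary local varieties are precisely the ideals of the poset of finite local varieties; and (3) transport ideals across the isomorphism of step~(1).

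For step~(1), Proposition \ref{prop:rqcmon} says that a finite subcoalgebra $P\monoto\rho T_\Sigma$ is a local variety iff its dual $L_\Sigma$-algebra $\widehat P$ carries a (unique) $\DCat$-monoid structure making the canonical morphism $e_{\widehat P}\colon\Psi\Sigma^*\onto\widehat P$ a monoid quotient; thus $P\mapsto e_{\widehat P}$ is a bijection between finite local varieties and elements of $\mathcal{L}$. It is an order-isomorphism: an inclusion $P_1\monoto P_2$ of subcoalgebras dualizes to an $L_\Sigma$-algebra quotient $\widehat{P_2}\onto\widehat{P_1}$ compatible with the $e$'s, which by uniqueness of the monoid structures is a monoid quotient, so $P_1\seq P_2$ iff $e_{\widehat{P_1}}\le e_{\widehat{P_2}}$, and conversely. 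In particular this isomorphism matches the finite joins on the two sides: the join of $P_1,P_2$ (the least finite local variety containing both) corresponds to the subdirect product $e_{\widehat{P_1}}\vee e_{\widehat{P_2}}$.

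The crux is step~(2), whose key sublemma is that every regular language $L\in\under{\rho T_\Sigma}$ generates a finite local variety $\langle L\rangle$. Here I use that the set of two-sided derivatives $D(L)=\{u^{-1}Lv^{-1}:u,v\in\Sigma^*\}$ is \emph{finite} for regular $L$, since it factors through the finite syntactic monoid of $L$; moreover $D(L)$ is closed under left and right derivatives by the identities $a^{-1}(u^{-1}Lv^{-1})=(ua)^{-1}Lv^{-1}$ and $(u^{-1}Lv^{-1})a^{-1}=u^{-1}L(av)^{-1}$. Let $\langle L\rangle$ be the $\Cat$-subalgebra of $\rho T_\Sigma$ generated by $D(L)$; it is finite because $\Cat$ is locally finite. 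Now both $a^{-1}(\mathord{-})$ and $(\mathord{-})a^{-1}$ are $\Cat$-\emph{endomorphisms} of $\rho T_\Sigma$: the algebraic structure of $\rho T_\Sigma$ is computed pointwise (Proposition \ref{E:reg}), and these operations act by precomposition with $w\mapsto aw$ and $w\mapsto wa$ on $\Sigma^*$. Since they send the generators $D(L)$ into $D(L)$, they send $\langle L\rangle$ into itself, so $\langle L\rangle$ is a finite subcoalgebra closed under right derivatives, i.e.\ a finite local variety. For an arbitrary local variety $Q$ one then has $\langle L\rangle\seq Q$ for every $L\in\under{Q}$ (as $Q$ is a $\Cat$-subalgebra closed under derivatives), whence $Q=\bigcup_{L\in\under{Q}}\langle L\rangle$, a \emph{directed} union because the join of two finite local varieties is again one and stays inside $Q$.

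Finally I assemble the bijection. To a local variety $Q$ assign $\Phi(Q)=\{\,e_{\widehat P}:P\text{ a finite local variety},\ P\seq Q\,\}$; by step~(2) this is downward closed and join-closed, hence a pseudovariety. To a pseudovariety $\mathcal{V}\seq\mathcal{L}$ assign $\Theta(\mathcal{V})=\bigcup\{\,P:e_{\widehat P}\in\mathcal{V}\,\}$; join-closedness of $\mathcal{V}$ makes the indexing family directed, so this is a directed union of subcoalgebras closed under right derivatives, hence a local variety. Both maps are clearly monotone for $\seq$, and they are mutually inverse: $\Theta(\Phi(Q))=Q$ is exactly step~(2), while for $\Phi(\Theta(\mathcal{V}))=\mathcal{V}$ the nontrivial inclusion holds because a finite local variety $P\seq\Theta(\mathcal{V})$ is, by finiteness and directedness, contained in a single $P'$ with $e_{\widehat{P'}}\in\mathcal{V}$, so downward closure yields $e_{\widehat P}\in\mathcal{V}$. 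This produces an order-isomorphism of the two complete lattices. I expect the main obstacle to be the finiteness in step~(2): it is precisely the bound on two-sided derivatives (reflecting that $\rho T_\Sigma$ is the \emph{rational} rather than the final fixpoint) that lets one close a finite subcoalgebra under right derivatives without leaving the finite realm, and hence that forces local varieties to coincide with ideals of $\mathcal{L}$.
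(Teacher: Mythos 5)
Your proof is correct and follows essentially the same route as the paper's source for this theorem: the paper only cites \cite{GET1} (Thm.~4.36) rather than proving it, but that proof---whose ingredients this paper reproduces---likewise combines the finite-level duality of Proposition~\ref{prop:rqcmon} with a directed-union/ideal-completion argument on both sides, exactly mirroring the paper's own \cpo-completion proof of Theorem~\ref{T:Eil}. Your key sublemma in step~(2) (every regular language sits inside a finite local variety, via finiteness of the set of two-sided derivatives) is precisely \cite[Cor.~4.5]{GET1}, which the appendix invokes for the same purpose, and your derivation of it is sound.
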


Local varieties of languages are local in the sense that a fixed alphabet $\Sigma$ is considered. To get a global (alphabet-independent) view of all regular languages, we extend the map $\Sigma \mapsto \rho T_\Sigma$ to a functor $\rho T:{\Set}_{
f}^{ op}\to{\mathscr C}$. Observe first that for every
$h:{\Delta}\to{\Sigma}$ in $\Set_f$  there is a morphism $Q^h: Q^\Sigma \ra Q^\Delta$ given by precomposition with $h$. Hence we can turn each
$T_\Sigma$-coalgebra $(Q,\gamma)$ into the $T_\Delta$-coalgebra $(Q,\gamma)^h$ with the same states $Q$ and coalgebra structure
\[
  Q \xra{\gamma}{O_{\mathscr C}}\times Q^{\Sigma} \xra{\id\times Q^h} {O_{\mathscr C}}\times Q^{\Delta}. 
\]
This is a familiar construction for deterministic automata: if some state $q\in Q$ accepts the language $L\seq \Sigma^*$ in $(Q,\gamma)$, then it accepts the language $(h^*)^{-1}(L)\seq \Delta^*$ in $(Q,\gamma)^h$.  Here $h^*: \Delta^*\ra\Sigma^*$ denotes the free extension of $h$ to a monoid morphism. 

The coalgebra $(Q,\gamma)^h$ is locally finite if $(Q,\gamma)$ is,  as every subcoalgebra of $(Q,\gamma)$ is also a subcoalgebra of $(Q,\gamma)^h$. In particular, $(\rho T_\Sigma)^h$ is locally finite, so there is a unique $T_\Delta$-coalgebra homomorphism $\rho T_h: (\rho T_\Sigma)^h \to \rho T_\Delta$ into the terminal locally finite $T_\Delta$-coalgebra. The morphism $\rho T_h$ forms preimages under the monoid morphism  $h^*$: 
\[\rho T_h(L)=(h^*)^{-1}(L)\qquad\hbox {for all $L\in \under{\rho T_\Sigma}$}.\]

\begin{XDefinition}\label{def:ratfunc}\ignorespaces\label
{D:rat} The \emph{rational functor} $\rho T:{\Set}_{
f}^{ op}\to{\mathscr C}$ assigns to every finite alphabet
${\Sigma}$ the rational fixpoint $\rho{T_{\Sigma}}$
and to every map $h:{\Delta}\to{\Sigma}$ the morphism $\rho T_h: \rho T_\Sigma\ra\rho T_\Delta$. \end{XDefinition}

In the classical case  ${\mathscr C}=\BA$, the rational functor maps each finite alphabet ${\Sigma}$ to the boolean algebra of regular languages over $\Sigma$. A
variety $V$ of languages in Eilenberg's sense (see Introduction) is thus a
subfunctor\footnote{Recall that a \emph{subfunctor} of a functor $F:\ACat\ra\BCat$ is a natural transformation $m: F'\monoto F$ with monomorphic components $m_A: F'A\monoto FA$. To specify $F'$ is suffices to give the object map $A\mapsto F'A$ and monomorphisms $m_A$ such that, for each $f: A\ra A'$ in $\ACat$, the morphism $Ff\cdot m_A$ factorizes through $m_{A'}$. This uniquely determines the action of $F'$ on morphisms.} 
$V\monoto\rho T$
that assigns to
every finite alphabet $\Sigma$ a local variety $V\Sigma\monoto \rho T_\Sigma$, and is closed under preimages
of monoid morphisms
$f:{\Delta}^*\to{\Sigma}^*$. To formulate the preimage
condition categorically, we identify any language $L\seq\Sigma^*$ with its characteristic function $L:{\Sigma}^*\to{\{0,1\}}$ in $\DCat=\Set$. Then the preimage of $L$ under
$f$ is precisely the language represented by the composite
function $L\cdot f:{\Delta}^*\to{\{0,1\}}$. Thus the
missing condition on our subfunctor $V$ is the following: for
every language $L\in V{\Sigma}$ we have $L\cdot f\in
V{\Delta}$. Let us now extend these considerations to our general setting. 

\begin{XNotation}\label{N:OD}
Using the adjunction $\Psi \dashv \under{\mathord{-}}: \DCat\ra \Set$, we identify any language $L:{\Sigma}^*\to{\{0,1\}=\under{O_\DCat}
}$
with the corresponding morphism
$L:\Psi{\Sigma}^*\to{O_{\mathscr D}}$ of~$\DCat$.  The \emph{preimage} of $L$ under a
${\mathscr D}$-monoid morphism $f:\Psi{\Delta}^*\to\Psi
{\Sigma}^*$ is the language $L\cdot f: \Psi\Delta^*\ra O_\DCat$ over the alphabet
${\Delta}$. 
\end{XNotation}

\begin{XDefinition}\ignorespaces\label
{D:var} (a) A subfunctor $V\monoto
\rho{T}$ of the rational functor is \emph{closed
under preimages} if, for every ${\mathscr D}$-monoid morphism
$f:\Psi{\Delta}^*\to\Psi{\Sigma}^*$ and every language
$L:\Psi{\Sigma}^*\to{O_{\mathscr D}}$ in $V{\Sigma}$,
the language $L\cdot f$ lies in $V\Delta$.

(b) By a \emph{variety of languages} in ${\mathscr C}$ is meant a
subfunctor $V \monoto \rho T$ closed under preimages
 such that $V{\Sigma}$ is a local variety of languages for every
alphabet ${\Sigma}\in{\Set}_{ f}$.
\end{XDefinition}

In Theorem~\ref{T:pre} below we give a fully
coalgebraic characterization of preimage closure.

\begin{XExamples}(a)~The case ${\mathscr C}=\BA$ and $\DCat=\Set$ captures
the original concept of Eilenberg~\cite{E}: a variety of languages in $\BA$ forms boolean subalgebras $V\Sigma$ of $\rho T_\Sigma$, closed under derivatives and preimages of monoid morphisms $f:\Delta^*\ra \Sigma^*$.
\endgraf(b)\enspace In the case
$\Cat=\DL$ and $\DCat=\Poset$ we just drop closure under complement: a variety of languages in $\DL$ forms
sublattices $V{\Sigma}$ of
$\rho T_\Sigma$ closed under derivatives and preimages of monoid morphisms $f:\Delta^*\ra \Sigma^*$. This is the concept of a \emph{positive variety of languages} studied by Pin \cite{Pin95}.
\endgraf(c)\enspace Let $\Cat=\DCat=\JSL$. Given a language
$L\subseteq{\Sigma}^*$, the corresponding semilattice
morphism $L:{\mathcal P}_{ f}{\Sigma}^*\to{\{0,1\}}$
takes a finite language $\{w_{1},\ldots,w_{k}\}$ to $1$ iff
$w_i\in L$ for some~$i$ (this follows from
$\{w_{1},\ldots,w_{k}\}=\bigvee_{i=1}^k\{w_i\}$). The preimage
of $L$ under a semiring morphism $f:{\mathcal P}_{
f}{\Delta}^*\to{\mathcal P}_{ f}{\Sigma}^*$  corresponds to the language $M\seq\Delta^*$ of all words $u\in\Delta^*$ for which $f(u)$ contains some word of $L$. A variety of languages in $\JSL$ forms subsemilattices $V\Sigma$ of $\rho T_\Sigma$ closed under derivatives and preimages of semiring morphisms $f:{\mathcal P}_{
f}{\Delta}^*\to{\mathcal P}_{ f}{\Sigma}^*$ . This is
the notion of variety introduced by Pol\'ak~\cite{P}.

\endgraf(d)\enspace If ${\mathscr C}=\DCat=\Vect{\Int}$, the linear map $L:\Pow_f\Sigma^* \ra \{0,1\}$ corresponding to $L\seq \Sigma^*$ takes
$\{w_{1},\ldots,w_{k}\}\in{\mathcal P}_{ f}{\Sigma}^*$
to $1$ iff $w_i\in L$ for an odd number of $i=1, \ldots, k$.
Thus the preimage of $L$ under a $\Int$-algebra morphism $f:{\mathcal P}_{
f}{\Delta}^*\to{\mathcal P}_{ f}{\Sigma}^*$ corresponds to the language $M\seq\Delta^*$ of all
words $u\in\Delta^*$ for which $f(u)$ contains an odd number of words of $L$. A variety of languages in $\Vect{\Int}$ forms linear subspaces $V\Sigma$ of $\rho T_\Sigma$ closed under derivatives and preimages of $\Int$-algebra morphisms $f:{\mathcal P}_{
f}{\Delta}^*\to{\mathcal P}_{ f}{\Sigma}^*$ . This notion of a variety was introduced by Reutenauer~\cite{R}; see also Section \ref{sec:regbeh}.

(e) Finally, let $\Cat=\BR$ and $\DCat=\PSet$. The preimage of $L\seq \Sigma^*$ under a zero-preserving monoid morphism $f: \Delta^*+\{0\} \ra \Sigma^*+\{0\}$ consists of all words $w\in\Delta^*$ for which $f(w)$ lies in $L$. A variety of languages in $\BR$ forms subrings $V\Sigma$ of $\rho T_\Sigma$ closed under derivatives and preimages of zero-preserving monoid morphisms.
\end{XExamples}

 See the table in Section \ref{S:GET} for a summary of our examples. The set of all varieties of languages in ${\mathscr C}$ is a complete lattice since any intersection of varieties (formed objectwise) is a variety. The same holds for the set of all pseudovarieties of $\DCat$-monoids. Our main result, the \emph{Generalized Eilenberg Theorem} (see Theorem~\ref{T:Eil}), states that these two lattices are isomorphic. The rest of the paper is devoted to the proof.

\section{Coalgebraic and Algebraic Language Acceptance}

In this section we compare the languages accepted by a finite $T_\Sigma$-coalgebra in ${\mathscr C}$ with those accepted by its dual finite $L_\Sigma$-algebra in~${\mathscr D}$.

\begin{XNotation}\label{not:acclang}
(a) Recall $\under{O_\Cat}=\under{O_\DCat}=\{0,1\}$ from Asssumption \ref{A:basic}(v), and let $\one_\Cat \xra{1_{O_\Cat}} O_\Cat$ and $\one_\DCat \xra{1_{O_\Cat}} O_\DCat$ denote the morphisms choosing the element $1$. Note that $\widehat{1_{O_\Cat}} = 1_{O_\DCat}$ by Remark \ref{rem:caniso}.

(b) Recall from Proposition \ref{E:reg} that a state $q: \one_\Cat \ra Q$ of a finite $T_\Sigma$-coalgebra $(Q,\gamma)$ accepts the language
\[ L_Q(q)=\{w\in\Sigma^* : \gamma_{\out}\cdot \gamma_w\cdot q = 1_{O_\Cat} \} \]
that we identify with the corresponding morphism of $\DCat$
\[L_Q(q): \Psi\Sigma^*\ra O_\DCat,\]
 see Notation \ref{N:OD}. Dually, for any $L_\Sigma$-algebra $(A,\alpha)$ equipped with a morphism $\alpha_{\out}: A\ra O_\DCat$ (representing a choice of final states), we define the \emph{language accepted by $\alpha_{\out}$} by
\[ L_A(\alpha_\out) = \{ w\in\Sigma^*: \alpha_{\out}\cdot \alpha_w \cdot \alpha_{\init} = 1_{O_\DCat} \}. \]
Using the unique ${L_{\Sigma}}$-algebra
homomorphism $e_A:\Psi{\Sigma}^*\to A$ of Remark \ref{rem:assalg}, this language corresponds to  the morphism of $\DCat$
\[ L_A(\alpha_{\out}) = {\alpha_{\out}}\cdot e_A:\Psi{\Sigma}^*\to{O_{\mathscr
D}}.\] 
\end{XNotation}

\begin{XDefinition}\ignorespaces\label{D:rev} The map $\Sigma^*\ra \Sigma^*$  reversing words extends uniquely to a morphism of $\DCat$
\[\rev_\Sigma: \Psi\Sigma^*\ra\Psi\Sigma^*.\]
The \emph{reversal}
of a language $L:\Psi{\Sigma}^*\to O_\DCat$ is the language
$L\cdot \rev_\Sigma$. 
\end{XDefinition}
Observe that $\rev_\Sigma$ is a $\DCat$-monoid morphism
\[ \rev_\Sigma: \Psi\Sigma^* \ra (\Psi\Sigma^*)^{op},\]
 where $(\Psi\Sigma^*)^{op}$ is the reversed monoid of $\Psi\Sigma^*$ with multiplication $x\bullet^{op} y = y\bullet x$. 

\begin{XLemma}\ignorespaces\label{L:rev} Let $(Q,\gamma)$ be a finite $T_\Sigma$-coalgebra and $(\widehat{Q},\widehat{\gamma})$ its dual $L_\Sigma$-algebra. Then the language accepted by a state $q: \one_\Cat \ra Q$ is the reversal of the
language accepted by $\widehat q:\widehat Q\to{O_{\mathscr D}}$:
\[L_{\widehat Q}(\widehat q)=L_Q(q)\cdot{\rev}_{\Sigma}.\]
\end{XLemma}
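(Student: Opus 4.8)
The plan is to verify the identity letter-by-letter on words: I would fix a word $w = a_1\cdots a_n\in\Sigma^*$ and show that $w$ lies in $L_{\widehat Q}(\widehat q)$ if and only if $w$ lies in $L_Q(q)\cdot\rev_\Sigma$, by unwinding both membership conditions into an equality of parallel morphisms and then transporting one side across the dual equivalence $\widehat{(\mathord{-})}\colon\Cat_f^{op}\xra{\simeq}\DCat_f$. The whole argument rests on the fact that this equivalence is faithful and sends $1_{O_\Cat}$ to $1_{O_\DCat}$ (Notation~\ref{not:acclang}(a)), so that an equation $f = 1_{O_\Cat}$ in $\Cat_f$ holds precisely when its dual $\widehat f = 1_{O_\DCat}$ holds in $\DCat_f$.

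First I would unwind the right-hand side. By Definition~\ref{D:rev}, $w\in L_Q(q)\cdot\rev_\Sigma$ iff $w^{\rev} = a_n\cdots a_1\in L_Q(q)$, i.e. iff $\gamma_{\out}\cdot\gamma_{w^{\rev}}\cdot q = 1_{O_\Cat}$. Here it is essential to track the composition convention $\gamma_v = \gamma_{b_m}\cdots\gamma_{b_1}$ for $v = b_1\cdots b_m$ from Remark~\ref{rem:assalg}: applied to $v = w^{\rev} = a_n\cdots a_1$ it gives $\gamma_{w^{\rev}} = \gamma_{a_1}\cdots\gamma_{a_n}$. On the left-hand side, the accepted language of the dual algebra (Notation~\ref{not:acclang}(b)) satisfies $w\in L_{\widehat Q}(\widehat q)$ iff $\widehat q\cdot\widehat\gamma_w\cdot\widehat{\gamma_{\out}} = 1_{O_\DCat}$, where the $w$-action of the dual $L_\Sigma$-algebra is $\widehat\gamma_w = \widehat{\gamma_{a_n}}\cdots\widehat{\gamma_{a_1}}$ and the initial state is $\widehat{\gamma_{\out}}$.

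The key step is then to recognize these two conditions as duals of one another. Since $\widehat{(\mathord{-})}$ is contravariant, $\widehat{\gamma_{a_n}}\cdots\widehat{\gamma_{a_1}} = \widehat{\gamma_{a_1}\cdots\gamma_{a_n}} = \widehat{\gamma_{w^{\rev}}}$, so the word reversal built into $\rev_\Sigma$ exactly absorbs the order reversal of composition under duality. Applying contravariant functoriality once more to the whole composite yields $\widehat q\cdot\widehat{\gamma_{w^{\rev}}}\cdot\widehat{\gamma_{\out}} = \widehat{\gamma_{\out}\cdot\gamma_{w^{\rev}}\cdot q}$. By faithfulness together with $\widehat{1_{O_\Cat}} = 1_{O_\DCat}$, this morphism equals $1_{O_\DCat}$ iff $\gamma_{\out}\cdot\gamma_{w^{\rev}}\cdot q = 1_{O_\Cat}$, which closes the equivalence of membership conditions and hence proves $L_{\widehat Q}(\widehat q) = L_Q(q)\cdot\rev_\Sigma$.

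I expect the only genuine obstacle to be bookkeeping: there are two independent order-reversals in play -- the word reversal $w\mapsto w^{\rev}$ and the contravariance $\widehat{f\cdot g} = \widehat g\cdot\widehat f$ -- and one must check that they \emph{cancel} rather than compound. Keeping indices explicit (as above, confirming $\gamma_{w^{\rev}} = \gamma_{a_1}\cdots\gamma_{a_n}$ matches $\widehat\gamma_w$ after dualizing) is precisely what makes the cancellation transparent. A minor secondary point is verifying the types under duality: $q\colon\one_\Cat\to Q$ dualizes to $\widehat q\colon\widehat Q\to O_\DCat$ and $\gamma_{\out}\colon Q\to O_\Cat$ dualizes to the initial state $\widehat{\gamma_{\out}}\colon\one_\DCat\to\widehat Q$, using $\widehat{O_\Cat}\cong\one_\DCat$ and $\widehat{\one_\Cat} = O_\DCat$, so that $\gamma_{\out}\cdot\gamma_{w^{\rev}}\cdot q\colon\one_\Cat\to O_\Cat$ is genuinely a morphism of $\Cat_f$ to which the equivalence applies.
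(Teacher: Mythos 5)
Your proposal is correct and follows essentially the same route as the paper's proof: unwind acceptance of a word into an equation of morphisms $\one_\Cat \to O_\Cat$, dualize it under the contravariant equivalence (using $\widehat{1_{O_\Cat}} = 1_{O_\DCat}$ and faithfulness) to recognize it as acceptance of the reversed word by $\widehat q$ in $\widehat Q$, and conclude that the two morphisms $\Psi\Sigma^*\to O_\DCat$ agree on $\Sigma^*$ and hence are equal. Your explicit bookkeeping of the two order-reversals (word reversal versus contravariance) is exactly the content of the paper's phrase ``this is dual to the equation,'' just spelled out in more detail.
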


\begin{proof} 
A state $q: \one_\Cat\ra Q$ accepts a word $w=a_1\cdots a_n$ iff 
\[\gamma_{\out}\cdot \gamma_{a_n}\cdot \ldots \cdot \gamma_{a_1} \cdot q =  1_{O_\Cat}.\] This is dual to the equation
\[ \widehat{q} \cdot \widehat{\gamma_{a_1}} \cdot \ldots \cdot \widehat{\gamma_{a_n}} \cdot \widehat{\gamma_{\out}} = \widehat{1_{O_\Cat}} = 1_{O_\DCat},\]
which states precisely that $\widehat{q}$ accepts the word $w^{rev} = a_n\cdots a_1$ in the $L_\Sigma$-algebra $\widehat{Q}$. It follows that the two morphisms $L_{\widehat Q}(\widehat q)$ and $L_Q(q)\cdot{\rev}_{\Sigma}$ agree on $\Sigma^*$, hence they are equal.
\end{proof}

 One of the cornerstones of our Generalized Eilenberg Theorem in Section~\ref{S:GET} is a coalgebraic characterization of the closure under preimages, see Definition \ref{D:var}.  To this end we introduce first the
preimage $A^f$ of an ${L_{\Sigma}}$-algebra~$A$, and then the preimage $Q^f$ of a locally finite $T_\Sigma$-coalgebra $Q$. 

\begin{XNotation}\label{not:ax}
(a) Let $A$ be an object of $\DCat$. The object $[A,A]$ of endomorphisms, see Assumption \ref{A:basic}(iv), forms a $\DCat$-monoid with multiplication
 \[[A,A]\times[A,A] \ra [A,A],\quad (f,g) \mapsto g\cdot f,\]
 given by functional composition and unit $\id_A$. 
 
 (b) For an $L_\Sigma$-algebra $(A,\alpha)$ the
notation $\alpha_w:A\to A$ for words $w\in{\Sigma}^*$ (see Remark \ref{rem:assalg}) is
extended to $\alpha_x:A\to A$ for all $x\in\mathopen|\Psi{
\Sigma}^*\mathclose|$ as follows: since $\Psi\Sigma^*$ is the free $\DCat$-monoid on $\Sigma$ (see Proposition \ref{P:free}), the function
\[{\Sigma} \ra {\mathscr D}(A,A),\quad a \mapsto \alpha_a,\]
extends to a unique ${\mathscr D}$-monoid morphism  $\Psi{\Sigma}^* \ra [A,A]$ that we
denote by $x\mapsto\alpha_x$. \end{XNotation}

\begin{XDefinition}\label{def:af} Let
$f:\Psi{\Delta}^*\to\Psi{\Sigma}^*$ be a ${\mathscr
D}$-monoid morphism. For every ${L_{\Sigma}}$-algebra
${(A,\alpha)}$ we define its \emph{preimage under $f$} as the
${L_{\Delta}}$-algebra ${(A,\alpha)}^f=(A,\alpha^f)$ on the
same states $A$, with the same initial state, $\alpha^f_{
\init}={\alpha_{\init}}$, and with transitions
$\alpha^f_b=\alpha_{f(b)}$ for all $b\in{\Delta}$.
\end{XDefinition}\begin{XExample}\label{ex:af}
(a)~If ${\mathscr D}={\Set}$ or
${\Poset}$, we are given a monoid morphism
$f:{\Delta}^*\to{\Sigma}^*$. Every
${L_{\Sigma}}$-algebra $A$ yields an
${L_{\Delta}}$-algebra with transitions
$\alpha^f_b=\alpha_{f(b)}={\alpha_{a_n}}\cdot \ldots\cdot {\alpha_{a_1}}$ for
$f(b)=a_{1}\cdots a_{n}$. \endgraf(b)\enspace If ${\mathscr
D}=\JSL$, we
are given a semiring morphism $f:{{\mathcal
P}_{ f}{\Delta}^*}\to{{\mathcal P}_{
f}{\Sigma}^*}$. If the value $f(b)$ is a single word,
$f(b)=\{w\}$, then the corresponding transition is again
$\alpha^f_b=\alpha_w$. In general $f(b)=\{w_{1},\ldots,w_{k}\}$,
and since $\alpha_{({\mskip1.5mu \char"7B\mskip1.5mu })}$ is a
semilattice homomorphism, we conclude that
$\alpha^f_b={\alpha_{w_1}}\vee\cdots\vee{\alpha_{w_k}}$ (the
join in $[A,A]$). \endgraf(c)\enspace Analogously for ${\mathscr
D}=\Vect{\Int}$: if
$f(b)=\{w_{1},\ldots,w_{k}\}$, then $\alpha^f_b={\alpha_{w_1}}
\oplus\cdots\oplus{\alpha_{w_k}}$ (vector addition in $[A,A]$).

(d)~If ${\mathscr D}={\PSet}$ a zero-preserving monoid morphism
$f:{\Delta}^*+\{0\}\to{\Sigma}^*+\{0\}$ is given. The map $\alpha^f_b$ is defined as in (a) if $f(b)\neq 0$, and otherwise $\alpha^f_b(x) = \star_A$ for all $x\in\under{A}$, where $\star_A$ is the point of $A\in\PSet$.
\end{XExample}

\begin{XLemma}\label{lem:pre} Let $f:\Psi\Delta^*\ra \Psi\Sigma^*$ be a $\DCat$-monoid morphism. 

(a) $f$ is also an
${L_{\Delta}}$-algebra homomorphism
$f:\Psi{\Delta}^*\to(\Psi{\Sigma}^*)^f$.

(b) Every $L_\Sigma$-algebra homomorphism $h: A\ra A'$ is also an $L_\Delta$-algebra homomorphism $h: A^f\ra (A')^f$.

(c)\enspace Given an ${L_{\Sigma}}$-algebra $A$ we
have, in Remark~\ref{rem:assalg}, 
\[e_{A^f}=e_A\cdot
f:\Psi{\Delta}^*\to A^f.\]
\end{XLemma}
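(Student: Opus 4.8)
The plan is to reduce all three parts to one observation: the assignment $x\mapsto\alpha_x$ of Notation~\ref{not:ax}(b) is not merely defined on words but is a $\DCat$-monoid morphism $\Psi\Sigma^*\to[A,A]$, and $\Psi\Sigma^*$ is free \emph{both} as a $\DCat$-monoid on $\Sigma$ (Proposition~\ref{P:free}) \emph{and} as a $\DCat$-algebra on the set $\Sigma^*$ of words. For part~(a) I would first record that on the free monoid itself the extended transitions are right multiplications, $\alpha_x=(\mathord{-})\bullet x$ for every $x\in\under{\Psi\Sigma^*}$. Indeed $x\mapsto(\mathord{-})\bullet x$ is a $\DCat$-morphism $\Psi\Sigma^*\to[\Psi\Sigma^*,\Psi\Sigma^*]$ by the bimorphism property of $\bullet$ and the closed structure of $\DCat$ (Assumption~\ref{A:basic}(iv)), it sends $\varepsilon$ to $\id$, and associativity makes it multiplicative; since it agrees with $\alpha$ on the generators $\Sigma$, the uniqueness clause in Notation~\ref{not:ax}(b) forces $\alpha_x=(\mathord{-})\bullet x$. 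Granting this, part~(a) is immediate: the transition of $(\Psi\Sigma^*)^f$ at $b\in\Delta$ is $\alpha^f_b=\alpha_{f(b)}=(\mathord{-})\bullet f(b)$, so the homomorphism square $f(x\bullet b)=f(x)\bullet f(b)$ and the initial-state equation $f(\varepsilon)=\varepsilon$ are exactly the assertions that $f$ is a $\DCat$-monoid morphism.

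Part~(b) is the heart of the matter, and I would prove the stronger intertwining $h\cdot\alpha_x=\alpha'_x\cdot h$ for \emph{every} $x\in\under{\Psi\Sigma^*}$, where $\alpha,\alpha'$ denote the $L_\Sigma$-structures of $A,A'$; specialising to $x=f(b)$ then yields $h\cdot\alpha^f_b=(\alpha')^f_b\cdot h$ for all $b\in\Delta$, while the initial-state condition $h\cdot\alpha_\init=\alpha'_\init$ is inherited verbatim from $h$ being an $L_\Sigma$-homomorphism. To prove the intertwining I consider the two maps $\Phi,\Phi'\colon\Psi\Sigma^*\to[A,A']$ given by $\Phi(x)=h\cdot\alpha_x$ and $\Phi'(x)=\alpha'_x\cdot h$. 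Both are $\DCat$-morphisms: $\Phi$ is $\alpha$ followed by post-composition $h\circ(\mathord{-})\colon[A,A]\to[A,A']$, and $\Phi'$ is $\alpha'$ followed by pre-composition $(\mathord{-})\circ h\colon[A',A']\to[A,A']$, each of which is a $\DCat$-morphism since hom-objects carry the pointwise structure (post-composition is the restriction of the power $h^{\under{A}}$, pre-composition a reindexing of a power). As $\Psi\Sigma^*$ is free as a $\DCat$-algebra on $\Sigma^*$, it suffices to check $\Phi=\Phi'$ on words $w\in\Sigma^*$, which I do by induction on the length of $w$: the base $\alpha_\varepsilon=\id$ is trivial, and the step uses $\alpha_{wa}=\alpha_a\cdot\alpha_w$ together with the generator identity $h\cdot\alpha_a=\alpha'_a\cdot h$, i.e.\ the very definition of an $L_\Sigma$-homomorphism. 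I expect this two-layered passage to be the main obstacle: one cannot conclude $\Phi=\Phi'$ straight from agreement on the monoid generators $\Sigma$, because $[A,A']$ carries no monoid structure and $\Phi,\Phi'$ are mere $\DCat$-morphisms, not monoid morphisms; the climb from letters to arbitrary words must be done by hand using the monoid structure of $[A,A]$ and $[A',A']$, and only then does freeness as a $\DCat$-algebra carry the equality to all of $\Psi\Sigma^*$.

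Finally, part~(c) follows formally. By part~(a), $f\colon\Psi\Delta^*\to(\Psi\Sigma^*)^f$ is an $L_\Delta$-homomorphism, and by part~(b) applied to the $L_\Sigma$-homomorphism $e_A\colon\Psi\Sigma^*\to A$ of Remark~\ref{rem:assalg}, the same underlying morphism is an $L_\Delta$-homomorphism $e_A\colon(\Psi\Sigma^*)^f\to A^f$. Hence the composite $e_A\cdot f\colon\Psi\Delta^*\to A^f$ is an $L_\Delta$-homomorphism out of the initial $L_\Delta$-algebra $\Psi\Delta^*$, so by initiality it coincides with the unique such homomorphism $e_{A^f}$, giving $e_{A^f}=e_A\cdot f$.
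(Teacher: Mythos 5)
Your proof is correct and takes essentially the same route as the paper's: for (a) the identification of the transitions of $(\Psi\Sigma^*)^f$ as right multiplications $(\mathord{-})\bullet f(b)$ so that preservation of transitions becomes the monoid-morphism equation, for (b) the observation that $x\mapsto h\cdot\alpha_x$ and $x\mapsto\alpha'_x\cdot h$ are $\DCat$-morphisms $\Psi\Sigma^*\to[A,A']$ agreeing on $\Sigma^*$ and hence equal by freeness of $\Psi\Sigma^*$ as a $\DCat$-algebra on the set of words, and for (c) initiality of $\Psi\Delta^*$ combined with (a) and (b). You merely make explicit two steps the paper treats as immediate (the uniqueness argument forcing $\alpha_x=(\mathord{-})\bullet x$ on the free monoid, and the induction on word length giving the intertwining $h\cdot\alpha_w=\alpha'_w\cdot h$), which is sound but not a different argument.
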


\begin{proof}
(a) Since $f(\epsilon)=\epsilon$ the initial state of $\Psi\Delta^*$ is mapped to the one of $(\Psi\Sigma^*)^f$. For any $a\in\Delta$, the $a$-transitions in $\Psi\Delta^*$ and $(\Psi\Sigma^*)^f$ are $\mathord{-}\bullet a$ and $\mathord{-}\bullet f(a)$, respectively. Hence preservation of transitions amounts to the equation
$f(x\bullet a)=f(x)\bullet f(a)$ for all $x\in\Psi\Delta^*$, which holds because $f$ is a $\DCat$-monoid morphism.

(b)~We clearly have $h \cdot \alpha_\init^f = h \cdot \alpha_\init = \alpha_\init' = (\alpha')_\init^f$. From $h\cdot \alpha_w = \alpha_w' \cdot h$ for all $w\in\Sigma^*$ we can conclude $h\cdot \alpha_x = \alpha_x' \cdot h$ for all $x\in\under{\Psi\Sigma^*}$. Indeed, both sides define $\DCat$-morphisms $\Psi\Sigma^* \ra [A,A']$ in the variable $x$ which agree on $\Sigma^*$. Thus they are equal. In particular, we have the desired equation $h\cdot \alpha_{f(a)} = \alpha_{f(a)}' \cdot h$
for all $a\in\Delta$.

(c) $\Psi{\Delta}^*$ is the initial $L_\Delta$-algebra, and by (a) and (b) both sides are
${L_{\Delta}}$-algebra homomorphisms.
\end{proof}

\begin{XCorollary}\ignorespaces\label{C:pre} Let $A$ be an
${L_{\Sigma}}$-algebra and ${\alpha_{
\out}}:A\to{O_{\mathscr D}}$ an output morphism. Then
${\alpha_\out}$ accepts in $A^f$ the preimage of the
language it accepts in~$A$: \[ L_{A^f}({\alpha_{\out}})=L_A({\alpha_{\out}})\cdot f:\Psi{\Delta}^*\to{O_{
\mathscr D}}.\] \end{XCorollary}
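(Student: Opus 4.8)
The plan is to reduce everything to the definition of the accepted language from Notation~\ref{not:acclang}(b) and then invoke Lemma~\ref{lem:pre}(c). Recall that, since $A^f$ carries the same underlying $\DCat$-object as $A$, the output morphism $\alpha_{\out}: A\to O_\DCat$ is also a legitimate output morphism for the $L_\Delta$-algebra $A^f$. By definition the language accepted by $\alpha_{\out}$ in $A^f$ is the composite $\alpha_{\out}\cdot e_{A^f}:\Psi\Delta^*\to O_\DCat$, where $e_{A^f}:\Psi\Delta^*\to A^f$ is the unique $L_\Delta$-algebra homomorphism out of the initial algebra.

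First I would rewrite $e_{A^f}$ using Lemma~\ref{lem:pre}(c), which gives $e_{A^f}=e_A\cdot f$. Substituting this yields
\[ L_{A^f}(\alpha_{\out}) = \alpha_{\out}\cdot e_{A^f} = \alpha_{\out}\cdot e_A\cdot f. \]
By associativity of composition the right-hand side is $(\alpha_{\out}\cdot e_A)\cdot f$, and recognizing $\alpha_{\out}\cdot e_A = L_A(\alpha_{\out})$ (again by Notation~\ref{not:acclang}(b)) completes the proof, establishing $L_{A^f}(\alpha_{\out}) = L_A(\alpha_{\out})\cdot f$. The type-checking is routine: since $f:\Psi\Delta^*\to\Psi\Sigma^*$ and $L_A(\alpha_{\out}):\Psi\Sigma^*\to O_\DCat$, the composite $L_A(\alpha_{\out})\cdot f$ is indeed a language over $\Delta$.

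Since the statement is a direct corollary of Lemma~\ref{lem:pre}(c), there is essentially no obstacle; the only subtlety worth watching is that the accepted language was defined in two guises in Notation~\ref{not:acclang}(b) --- as a subset of $\Delta^*$ (resp.\ $\Sigma^*$) and as a morphism into $O_\DCat$ --- and the clean factorization argument lives at the level of $\DCat$-morphisms. I would therefore phrase the whole computation morphism-wise rather than element-wise, so that the single application of Lemma~\ref{lem:pre}(c) and associativity of composition suffice without any further unwinding of words.
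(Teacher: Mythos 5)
Your proof is correct and is exactly the paper's argument, just spelled out: the paper's proof reads ``Both sides are equal to $\alpha_{\out}\cdot e_A\cdot f$,'' which implicitly invokes the same two ingredients you use, namely $L_A(\alpha_\out)=\alpha_\out\cdot e_A$ from Notation~\ref{not:acclang}(b) and $e_{A^f}=e_A\cdot f$ from Lemma~\ref{lem:pre}(c).
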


\begin{proof} 
Both sides are equal to ${\alpha_{\out}}\cdot e_A\cdot f$.
\end{proof}

\begin{XNotation}For any ${\mathscr
D}$-monoid morphism $f:\Psi{\Delta}^*\to\Psi{
\Sigma}^*$ we denote by $f^\dag$ the ${\mathscr D}$-monoid
morphism \[ \Psi\Delta^* \xra{\rev_\Delta} (\Psi\Delta^*)^{op} \xra{f} (\Psi\Sigma^*)^{op} \xra{\rev_\Sigma} \Psi\Sigma^* .\] 

 \end{XNotation}
\begin{XDefinition}\label{def:qf}Let
$f:\Psi{\Delta}^*\to\Psi{\Sigma}^*$ be a ${\mathscr
D}$-monoid morphism. For every finite
$T_\Sigma$-coalgebra ${(Q,\gamma)}$ we define its
\emph{preimage under $f$} as the $T_\Delta$-coalgebra
${(Q,\gamma)}^f=(Q,\gamma^f)$ whose dual is the preimage of the
dual $L_\Sigma$-algebra $(\widehat Q,\widehat\gamma)$ under $f^\dag$. Shortly: \[
\widehat{Q^f} = \widehat{Q}^{f^\dag}.\]
\end{XDefinition}

If $f=\Psi h^*$ for a function $h:\Delta\ra \Sigma$, it is easy to see that the coalgebra $Q^f$ of the previous definition coincides with the coalgebra $Q^h$ introduced for the definition of $\rho T$ (see \ref{def:ratfunc}). 

\begin{XExample} The preimage of $(Q,\gamma)$ under $f:\Psi{\Delta}^*\to\Psi{\Sigma}^*$ has the
same states and final states, and the transitions are given as follows:
\endgraf(a)\enspace Let ${\mathscr C}={\BA}$, ${\DL}$
and $f:\Delta^*\to \Sigma^*$. Letting $\alpha_a=\widehat{\gamma_a}$ we get, by Example \ref{ex:af}(a), the formula $\alpha^{f^\dag}_b = \alpha_{a_1}\cdot\ldots \cdot\alpha_{a_n}$ where $f(b)=a_1\cdots a_n$ (i.e., $f^\dag(b) = a_n\cdots a_1$). Since $\widehat{\gamma_b^f}=\alpha_b^{f^\dag}$ it follows that $\gamma_b^f=\gamma_{a_n}\cdot \ldots\cdot\gamma_{a_1}=\gamma_{f(b)}$.
\endgraf(b)\enspace  Let $\Cat = \JSL$ and $f: \Pow_f\Delta^* \to \Pow_f\Sigma^*$. We claim that $\gamma^f_b =
\gamma_{w_1}  \vee \cdots \vee \gamma_{w_k}$ where  $f(b)=\{w_1,\ldots,w_n\}$
 and the join is taken in $[Q,Q]$ (i.e., pointwise). Indeed, observe that the map $h\mapsto \widehat{h}$ gives a semilattice isomorphism $[Q,Q]\cong [\widehat{Q},\widehat{Q}]$. Letting $\alpha_a=\widehat{\gamma_a}$, this isomorphism maps $\gamma_{w_i}$ to $\alpha_{w_i^{rev}}$, and hence $\gamma_{w_1} \vee \cdots \vee \gamma_{w_k}$ to  $\alpha_{w_1^{rev}}\vee \cdots \vee \alpha_{w_n^{rev}}$. By Example \ref{ex:af}(b) this morphism is $\alpha_b^{f^\dag}$ since $f^\dag(b) = \{w_1^{rev},\ldots,w_n^{rev}\}$.
\endgraf(c)\enspace If $\Cat = \Vect{\Int}$ and $f:
\Pow_f\Delta^* \to \Pow_f\Sigma^*$, then $\gamma^f_b =
\gamma_{w_1} \oplus \cdots \oplus \gamma_{w_k}$ where $f(b) = \{w_1, \ldots,
w_k\}$. Indeed, the map $h\mapsto \widehat{h}$ gives an isomorphism of vector spaces $[Q,Q]\cong [\widehat{Q},\widehat{Q}]$. Letting $\alpha_a=\widehat{\gamma_a}$ this isomorphism maps $\gamma_{w_i}$ to $\alpha_{w_i^{rev}}$, and hence $\gamma_{w_1} \oplus \cdots \oplus \gamma_{w_k}$ to  $\alpha_{w_1^{rev}}\oplus \cdots \oplus \alpha_{w_n^{rev}}$. By Example \ref{ex:af}(c) this morphism is $\alpha_b^{f^\dag}$.
\endgraf(d)\enspace If ${\mathscr C}={\BR}$
and $f:\Delta^*+\{0\}\to \Sigma^*+\{0\}$, then $\gamma_b^f=\gamma_{f(b)}$ if $f(b)\neq 0$, and otherwise $\gamma_b^f$ is the zero map. The argument is similar to (a).
\end{XExample}

\begin{XProposition}\label{prop:pre}
The language accepted by a state $q$ of the
coalgebra $Q^f$ is the preimage under $f$ of the language $q$
accepts in~$Q$:
\[L_{Q^f}(q)=L_Q(q)\cdot f.\]
\end{XProposition}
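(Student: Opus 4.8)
The plan is to deduce this from its algebraic counterpart, Corollary~\ref{C:pre}, by transporting everything across the coalgebra/algebra duality. The two ingredients that make the translation possible are Lemma~\ref{L:rev}, which identifies $L_Q(q)$ with the reversal of the language $L_{\widehat Q}(\widehat q)$ accepted by the dual output morphism $\widehat q\colon\widehat Q\to O_\DCat$, and the defining equation $\widehat{Q^f}=\widehat{Q}^{f^\dag}$ of Definition~\ref{def:qf}, which says that taking preimages of coalgebras dualizes to taking preimages of algebras along the twisted morphism $f^\dag$.

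First I would apply Corollary~\ref{C:pre} to the $L_\Sigma$-algebra $A=\widehat Q$ with output morphism $\alpha_\out=\widehat q$ and to the $\DCat$-monoid morphism $f^\dag\colon\Psi\Delta^*\to\Psi\Sigma^*$, obtaining
\[ L_{\widehat{Q}^{f^\dag}}(\widehat q)=L_{\widehat Q}(\widehat q)\cdot f^\dag. \]
Next I would invoke Lemma~\ref{L:rev} twice: applied to the $T_\Delta$-coalgebra $Q^f$ it gives $L_{\widehat{Q^f}}(\widehat q)=L_{Q^f}(q)\cdot\rev_\Delta$, and applied to $Q$ it gives $L_{\widehat Q}(\widehat q)=L_Q(q)\cdot\rev_\Sigma$. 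Substituting $\widehat{Q^f}=\widehat{Q}^{f^\dag}$ into the first equation and combining the three displayed identities yields
\[ L_{Q^f}(q)\cdot\rev_\Delta = L_Q(q)\cdot\rev_\Sigma\cdot f^\dag. \]

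What remains, and what I expect to be the only delicate point, is the reversal bookkeeping. Unfolding $f^\dag=\rev_\Sigma\cdot f\cdot\rev_\Delta$ and using that word reversal is an involution --- so that $\rev_\Sigma\cdot\rev_\Sigma=\id$, both sides being $\DCat$-monoid endomorphisms of $\Psi\Sigma^*$ that fix the generators, hence equal by the freeness in Proposition~\ref{P:free} --- one computes $\rev_\Sigma\cdot f^\dag=f\cdot\rev_\Delta$. The right-hand side above thus becomes $L_Q(q)\cdot f\cdot\rev_\Delta$, and since $\rev_\Delta$ is an isomorphism (again by involutivity) it cancels on the right, giving $L_{Q^f}(q)=L_Q(q)\cdot f$. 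The whole difficulty is simply to keep the $\Sigma$-reversal and the $\Delta$-reversal aligned with the conjugation defining $f^\dag$ so that exactly one involution survives on each side and cancels; no structural input beyond Lemma~\ref{L:rev}, Corollary~\ref{C:pre} and Definition~\ref{def:qf} is required.
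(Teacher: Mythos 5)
Your proof is correct and takes essentially the same route as the paper's: both rest on Lemma~\ref{L:rev} (applied to $Q$ and to the $T_\Delta$-coalgebra $Q^f$), the defining equation $\widehat{Q^f}=\widehat{Q}^{f^\dag}$ from Definition~\ref{def:qf}, Corollary~\ref{C:pre} applied along $f^\dag$, and the involutivity of reversal. The paper merely arranges these ingredients as one chain of equalities starting from $L_{Q^f}(q)$, whereas you combine the three identities and cancel $\rev_\Delta$ at the end; the mathematical content, including the justification of $\rev_\Sigma\cdot\rev_\Sigma=\id$ via freeness of $\Psi\Sigma^*$, is the same.
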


\begin{proof} 
This follows from the computation
\[
\begin{array}{ll@{\qquad\quad}ll}
L_{Q^f}(q) &= L_{\widehat{Q^f}}(\widehat{q})\cdot \rev_\Delta & \text{(Lemma \ref{L:rev})}\\
&= L_{\widehat{Q}^{f^\dag}}(\widehat{q}) \cdot \rev_\Delta & \text{(def. $Q^f$)}\\
&= L_{\widehat Q}(\widehat{q}) \cdot f^\dag \cdot \rev_\Delta & \text{(Corollary \ref{C:pre})}\\
&\multicolumn{2}{l}{= L_{\widehat Q}(\widehat{q}) \cdot \rev_\Sigma\cdot f \cdot \rev_\Delta\cdot \rev_\Delta \quad \text{(def. $f^\dag$)}}\\
&= L_{\widehat Q}(\widehat{q}) \cdot \rev_\Sigma\cdot f \\
&= L_Q(q) \cdot f& \text{(Lemma \ref{L:rev}).} & \qedhere
\end{array}
\]
\end{proof}

\begin{XExample}
If $Q$ is finite subcoalgebra of $\rho T_\Sigma$ then $Q^f$ is the  $T_\Delta$-coalgebra of all
 languages $L:\Psi{\Sigma}^*\to{O_{\mathscr D}}$ in $Q$ 
with transitions given by left derivatives $\gamma_a(L)=f(a)^{-1}L$ for $a\in\Delta$. Here we extend the
notation $w^{-1}L$ for left derivatives from words
$w\in{\Sigma}^*$ to all elements $x$ of
$\Psi{\Sigma}^*$ as follows: let
$l_x:\Psi{\Sigma}^*\to\Psi{\Sigma}^*$ be the left
translation, $l_x(y)=x\bullet y$, then the left derivative
$x^{-1}L$ of a language $L:\Psi{\Sigma}^*\to{O_{\mathscr
D}}$ is $L\cdot l_x$.
\end{XExample}
We now extend the preimage concept from finite coalgebras to
locally finite ones. A $T_\Sigma$-coalgebra $Q$ is locally finite iff it is the filtered colimit of the diagram of all its finite subcoalgebras $Q_i\monoto Q$ (whose connecting $T_\Sigma$-coalgebra homomorphisms $d_{i,j}: Q_i\ra Q_j$ are inclusion maps). Given any ${\mathscr
D}$-monoid morphism $f:\Psi{\Delta}^*\to\Psi{
\Sigma}^*$, every $d_{i,j}$ is also a $T_\Delta$-coalgebra homomorphism $d_{i,j}: Q_i^f \ra Q_j^f$ by the dual of Lemma \ref{lem:pre}(b). Hence the coalgebras $Q_i^f$ and homomorphisms $d_{i,j}$ form a filtered diagram in $\Coalg{T_\Delta}$.

\begin{XDefinition}\label{def:qf2}
For every ${\mathscr
D}$-monoid morphism $f:\Psi{\Delta}^*\to\Psi{
\Sigma}^*$ and every locally finite $T_\Sigma$-coalgebra
$Q$ we denote by $Q^f$ the filtered colimit of the diagram of all $Q_i^f$, where $Q_i$ ranges over all finite subcoalgebras of $Q$.
\end{XDefinition}

\begin{XExample}\label{E:pre}
If $Q$ is a finite subcoalgebra of $\rho T_\Sigma$ then the languages accepted by $Q^f$ are precisely the languages $L\cdot f$ with $L\in \under{Q}$. This follows from the Proposition \ref{prop:pre} and the fact that every state $L$ of $Q$ accepts precisely the language $L$. Since $\rho T_\Sigma$ is the filtered colimit of its finite subcoalgebras $Q$, an analogous description holds for $(\rho T_\Sigma)^f$. Hence the unique $T_\Delta$-coalgebra homomorphism  $h: (\rho T_\Sigma)^f \ra \rho T_\Delta$ maps every language $L$ in $\rho T_\Sigma$ to its preimage $L\cdot f$.
\end{XExample}

Recall from Definition~\ref{D:var} the concept of closure under
preimages. This can now be formulated coalgebraically, much in the spirit of Proposition \ref{P:right}.

\begin{XTheorem}\ignorespaces\label{T:pre} A
subfunctor $V$ of the rational functor $\rho T$ is closed under
preimages iff for every ${\mathscr D}$-monoid morphism
$f:\Psi{\Delta}^*\to\Psi{\Sigma}^*$ there exists a
$T_\Delta$-coalgebra homomorphism from $(V{\Sigma})^f$
to $V{\Delta}$. \end{XTheorem}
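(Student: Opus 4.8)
The plan is to fix a single $\DCat$-monoid morphism $f:\Psi\Delta^*\to\Psi\Sigma^*$ and reduce the existence of a $T_\Delta$-coalgebra homomorphism $(V\Sigma)^f\to V\Delta$ to the set-theoretic inclusion $\{L\cdot f : L\in\under{V\Sigma}\}\seq\under{V\Delta}$; quantifying over all $f$ then matches the two sides of the claimed equivalence against Definition \ref{D:var}(a). Write $m_\Sigma:V\Sigma\monoto\rho T_\Sigma$ and $m_\Delta:V\Delta\monoto\rho T_\Delta$ for the subcoalgebra inclusions. Since $V\Sigma$ is a subcoalgebra of $\rho T_\Sigma$ it is locally finite, so $(V\Sigma)^f$ is a well-defined locally finite $T_\Delta$-coalgebra (Definition \ref{def:qf2}), and there is a unique $T_\Delta$-coalgebra homomorphism $h:(V\Sigma)^f\to\rho T_\Delta$ into the terminal locally finite coalgebra. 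By Example \ref{E:pre}, applied to the finite subcoalgebras of $V\Sigma$ and passing to the colimit, this $h$ sends each state $L\in\under{V\Sigma}$ to the preimage language $L\cdot f$. Hence the image of $h$ is exactly $\{L\cdot f : L\in\under{V\Sigma}\}$.

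The core step is the equivalence: a $T_\Delta$-coalgebra homomorphism $g:(V\Sigma)^f\to V\Delta$ exists if and only if $h$ factors through $m_\Delta$ in $\Cat$, i.e.\ if and only if the image of $h$ is contained in $\under{V\Delta}$. For the forward direction I would compose any such $g$ with $m_\Delta$ to get a $T_\Delta$-coalgebra homomorphism $(V\Sigma)^f\to\rho T_\Delta$; by uniqueness (terminality of $\rho T_\Delta$) this composite equals $h$, so $h=m_\Delta\cdot g$ and the image of $h$ lies in $\under{V\Delta}$. For the converse, if the image of $h$ is contained in the subalgebra $\under{V\Delta}\seq\under{\rho T_\Delta}$, then $h$ corestricts on underlying sets to a unique $\Cat$-morphism $g:(V\Sigma)^f\to V\Delta$ with $h=m_\Delta\cdot g$.

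The one point needing care, and the main obstacle, is that this underlying factorization $g$ is automatically a \emph{coalgebra} homomorphism. This is a diagonal fill-in argument: writing $\xi,\eta,\zeta$ for the coalgebra structures of $(V\Sigma)^f$, $V\Delta$, $\rho T_\Delta$, the relations $\zeta\cdot m_\Delta=T_\Delta m_\Delta\cdot\eta$ and $\zeta\cdot h=T_\Delta h\cdot\xi$ together with $h=m_\Delta\cdot g$ give $T_\Delta m_\Delta\cdot\eta\cdot g=T_\Delta m_\Delta\cdot T_\Delta g\cdot\xi$; since $T_\Delta=O_\Cat\times(\mathord{-})^\Delta$ preserves monomorphisms (Remark \ref{R:fact}), $T_\Delta m_\Delta$ is monic and cancels, yielding $\eta\cdot g=T_\Delta g\cdot\xi$. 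This mono-preservation is the real engine of the step; everything else is bookkeeping.

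Finally I would assemble the equivalences. For a fixed $f$, the existence of $g$ is equivalent to $\{L\cdot f : L\in\under{V\Sigma}\}\seq\under{V\Delta}$, i.e.\ to $L\cdot f\in V\Delta$ for every $L\in V\Sigma$. Ranging over all $\DCat$-monoid morphisms $f:\Psi\Delta^*\to\Psi\Sigma^*$, the right-hand condition of the theorem holds for every $f$ precisely when $L\cdot f\in V\Delta$ for all such $f$ and all $L\in V\Sigma$, which is verbatim the definition of closure under preimages (Definition \ref{D:var}(a)). This establishes the theorem.
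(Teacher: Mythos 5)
Your proposal is correct and follows essentially the same route as the paper: the forward direction uses terminality of $\rho T_\Delta$ to identify $m_\Delta\cdot g$ with the homomorphism of Example \ref{E:pre} (which sends $L$ to $L\cdot f$), and the converse restricts that homomorphism to $V\Delta$ and upgrades the restriction to a coalgebra homomorphism via mono-cancellation, i.e.\ the homomorphism theorem the paper invokes from $T_\Delta$ preserving monomorphisms. Your write-up merely makes explicit the diagonal argument that the paper states in one line.
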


\begin{proof} 
Suppose that
$k:(V{\Sigma})^f\to V{\Delta}$ is a
$T_\Delta$-coalgebra homomorphism. Composed with the
inclusion $i:V{\Delta}\hookrightarrow\rho{T_{\Delta}}$
it yields the homomorphism $h$ of Example~\ref{E:pre}
restricted to $(V{\Sigma})^f$ -- this follows from
$\rho{T_{\Delta}}$ being the terminal locally finite
$T_{\Delta}$-coalgebra. Thus $i\cdot k$ takes every language $L$ of
$\mathopen|V{\Sigma}\mathclose|$ to $L\cdot f$, proving
that $L\cdot f$ lies in $V{\Delta}$. 

For the converse, suppose that $V$ is closed under preimages. Then the
homomorphism $h$ of Example~\ref{E:pre} has a restriction $h_0:(V{\Sigma})^f\to
V{\Delta}$. That $h_0$ is a coalgebra homomorphism is a consequence of the following homomorphism theorem for coalgebras:  if $g:Q\to R$ and $i:R'\monoto R$
are $T_\Sigma$-coalgebra homomorphisms such that $i$ is injective and $g = i \cdot k$ for some morphism $k$ in ${\mathscr C}$, then
 $k$ is a coalgebra homomorphism. This follows easily from the observation that $T_\Sigma$ preserves monomorphisms.
\end{proof}

\section{Generalized Eilenberg Theorem}
\label{S:GET}

In this section we present our main result, the Generalized Eilenberg Theorem.
First we consider two ``finite'' versions of this theorem, proved by Kl\'ima and Pol\'{a}k \cite{KP} for the cases $\Cat = \BA$ and $\Cat = \DL$.
\begin{XDefinition}
A variety $V$ of
languages in ${\mathscr C}$ is
called \emph{object-finite} if $V{\Sigma}$ is finite for every alphabet ${\Sigma}$.
\end{XDefinition}

In the next theorem we will consider locally finite varieties of $\DCat$-monoids, see Definition \ref{def:locfin}. Recall that, in comparison to the \emph{pseudovarieties} of Definition \ref{def:pseudovar}, varieties of $\DCat$-monoids may contain infinite monoids and are closed under finite and infinite products. All locally finite varieties of $\DCat$-monoids form a lattice whose meet is intersection. The same holds for all object-finite varieties of languages where the intersection is taken objectwise.

\begin{XTheorem}[Generalized Eilenberg Theorem for Object\hyph Finite Varieties]\label{T:Eilenberg} 
The lattice of all object-finite varieties of languages in ${\mathscr C}$ is isomorphic to the lattice of all locally finite varieties of ${\mathscr D}$-monoids. 
\end{XTheorem}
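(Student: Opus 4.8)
The plan is to construct the isomorphism by assembling the local-level correspondence established in the Local Eilenberg Theorem into a global one, and to verify that object-finiteness on the language side matches local finiteness on the monoid side. Given an object-finite variety of languages $V\monoto\rho T$, each component $V\Sigma$ is a finite local variety, so by Proposition \ref{prop:rqcmon} its dual $\widehat{V\Sigma}$ is a finite $\Sigma$-generated $\DCat$-monoid. I would assign to $V$ the class $\mathcal{W}(V)$ of all finite $\DCat$-monoids that divide (i.e.\ are quotients of submonoids of) some $\widehat{V\Sigma}$; equivalently, I would take the variety of $\DCat$-monoids \emph{generated} by the family $\{\widehat{V\Sigma}\}_\Sigma$. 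Conversely, given a locally finite variety $\mathcal{W}$ of $\DCat$-monoids, I would define a subfunctor $V_{\mathcal W}\monoto\rho T$ by letting $V_{\mathcal W}\Sigma$ consist of all languages $L\in\under{\rho T_\Sigma}$ whose syntactic $\DCat$-monoid (the dual of the smallest local variety containing $L$, or the image of $L:\Psi\Sigma^*\to O_\DCat$ factored through its syntactic congruence) lies in $\mathcal{W}$.

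The key steps, in order, are as follows. First I would check that $V_{\mathcal W}$ is genuinely a variety of languages: closure under the boolean-type operations and left derivatives follows because $V_{\mathcal W}\Sigma$ is defined by membership of a divisor-closed class and such classes are closed under the relevant subdirect products, which dualize to intersections of local varieties; closure under right derivatives uses Proposition \ref{prop:rqcmon}; and closure under preimages is exactly where Theorem \ref{T:pre} is invoked, translating the existence of the coalgebra homomorphism $(V_{\mathcal W}\Sigma)^f\to V_{\mathcal W}\Delta$ into the statement that $L\cdot f$ has its syntactic monoid in $\mathcal{W}$ whenever $L$ does. Here one uses that for a $\DCat$-monoid morphism $f\colon\Psi\Delta^*\to\Psi\Sigma^*$, the syntactic $\DCat$-monoid of $L\cdot f$ divides that of $L$, which is a standard consequence of the universal property of $\Psi\Sigma^*$ together with the bimorphism structure and Lemma \ref{lem:pre}. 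Second, I would verify object-finiteness of $V_{\mathcal W}$: since $\mathcal W$ is locally finite and each $V_{\mathcal W}\Sigma$ is built from languages whose syntactic monoids are quotients of $\Psi\Sigma^*$ lying in $\mathcal W$, and there are only finitely many such quotients (as $\Psi\Sigma^*$ has finitely many generators and $\mathcal W$-members on these generators are finite), the set $V_{\mathcal W}\Sigma$ is finite.

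Third, I would establish that the two constructions are mutually inverse and order-preserving. Order-preservation in both directions is immediate from the definitions (larger language varieties yield larger generated monoid varieties, and vice versa). For $V_{\mathcal{W}(V)}=V$ and $\mathcal{W}(V_{\mathcal W})=\mathcal W$, I would argue locally: over each fixed $\Sigma$ the Local Eilenberg Theorem already gives a bijection between finite local varieties over $\Sigma$ and pseudovarieties of $\Sigma$-generated $\DCat$-monoids, and a locally finite variety of $\DCat$-monoids is determined by its intersections with the finite $\Sigma$-generated monoids for all $\Sigma$. The essential point is that a locally finite variety $\mathcal W$ is recovered from the family of its $\Sigma$-generated members: every finite $\DCat$-monoid $M\in\mathcal W$ is $\Sigma$-generated for $\Sigma=\under M$ (or any finite generating set), so it appears as the syntactic monoid of some language in $V_{\mathcal W}\Sigma$, whence $M\in\mathcal{W}(V_{\mathcal W})$; conversely divisors of such syntactic monoids stay in $\mathcal W$ because varieties are divisor-closed.

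The main obstacle I anticipate is the naturality/coherence across alphabets, specifically verifying that the divisor-closed class generated by the components $\widehat{V\Sigma}$ is recovered \emph{on the nose} by the reverse construction, and that closure under preimages on the language side corresponds exactly to closure under division (submonoids and quotients) rather than merely under $\Sigma$-generated subdirect products on the monoid side. Concretely, the delicate verification is that taking preimages under \emph{arbitrary} $\DCat$-monoid morphisms $f\colon\Psi\Delta^*\to\Psi\Sigma^*$ — not just those of the form $\Psi h^*$ — yields precisely closure of the monoid class under submonoids, via the syntactic-monoid division lemma above; this is the step that upgrades the alphabet-local correspondence of the Local Eilenberg Theorem to a genuinely global one and must be checked using Theorem \ref{T:pre} in tandem with the duality between $\widehat{Q^f}=\widehat Q^{f^\dag}$ of Definition \ref{def:qf}.
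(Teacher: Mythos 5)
Your plan is the classical Eilenberg proof strategy (syntactic monoids, division, generated varieties) transplanted into the abstract setting, and this is genuinely different from what the paper does --- but as written it has a real gap. The entire reverse construction $V_{\mathcal W}$ and your preimage-closure argument rest on a theory of \emph{syntactic $\DCat$-monoids}: that the syntactic congruence of a language $L:\Psi\Sigma^*\to O_\DCat$ is a $\DCat$-congruence with finite quotient, that recognition implies division, that the syntactic monoid of $L\cdot f$ divides that of $L$, and that the $\Cat$-algebraic operations and derivatives on languages correspond to division on the monoid side. None of this is ``standard'' in the general entropic setting: the paper itself does not define syntactic $\DCat$-monoids and explicitly defers that notion to future work (Section~\ref{sec:conc}). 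So each of these lemmas would have to be developed from scratch before your proof could start, and this is a substantial undertaking of roughly the same size as the theorem itself. The paper avoids this entirely: it defines $V^@$ as the class of $\DCat$-monoids $D$ such that every morphism $\Psi\Sigma^*\to D$ factorizes through $e_\Sigma:\Psi\Sigma^*\epito\widehat{V\Sigma}$, proves directly (Proposition~\ref{P:fin}, using the duality square of Lemma~\ref{lem:commdiag}) that this is a variety with \emph{free} monoids $\widehat{V\Sigma}$, whence local finiteness and the round-trip identities $(V^@)^\square=V$, $(W^\square)^@=W$ become nearly formal.

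There is also a concretely false step in your round-trip argument: you claim that every finite $\Sigma$-generated monoid $M\in\mathcal W$ ``appears as the syntactic monoid of some language in $V_{\mathcal W}\Sigma$.'' Not every finite monoid is a syntactic monoid --- for instance, a null semigroup $\{a,b,c,0\}$ (all products equal $0$) with an adjoined unit admits no disjunctive subset, so it is not the syntactic monoid of any language --- and the same obstruction persists for $\DCat$-monoids. What is true is that such an $M$ is a \emph{subdirect product} of syntactic monoids of the languages it recognizes, or, in the paper's terms, a quotient of $\widehat{V_{\mathcal W}\Sigma}$ via the duality of Proposition~\ref{prop:rqcmon}; but that is a different argument which you would still need to supply. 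Finally, the equivalence you assert between ``finite divisors of some single $\widehat{V\Sigma}$'' and ``the (locally finite) variety generated by all $\widehat{V\Sigma}$'' is exactly where closure under products across different alphabets must be proved from preimage-closure of $V$; you correctly flag this as the main obstacle but leave it unresolved, whereas in the paper this is precisely the content of Proposition~\ref{P:fin} and Lemma~\ref{lem:commdiag}.
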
 
\begin{proof}[Proof sketch]
Given a variety $V$ of languages in $\Cat$, denote by $e_\Sigma: \Psi\Sigma^*\epito \widehat{V\Sigma}$ the dual finite $\Sigma$-generated monoid of the local variety $V\Sigma\monoto \rho T_\Sigma$, see Theorem \ref{prop:rqcmon}. Let $V^@$ be the class of all $\DCat$-monoids $D$ such that every $\DCat$-monoid morphism $h: \Psi\Sigma^*\ra D$, where $\Sigma$ is any finite alphabet, factorizes (necessarily uniquely) through $e_\Sigma$:
\[
h = (\xymatrix@1@+1pc{\Psi\Sigma^* \ar@{->>}[r]^-{e_\Sigma} & \widehat{V\Sigma} \ar@{-->}[r]^-{h'} & D}).
 \]
 A routine calculation shows that $V^@$ is a variety of $\DCat$-monoids whose free monoid on $\Sigma$ is $\widehat{V\Sigma}$. Since $\widehat{V\Sigma}$ is finite, $V^@$ is locally finite.
 
 Conversely, for a locally finite variety $W$ of $\DCat$-monoids (with free monoids $e_\Sigma: \Psi\Sigma^*\epito D_\Sigma$), we obtain an object-finite variety of languages $W^\square \monoto \rho T$ with $W^\square \Sigma$ defined by $\widehat{W^\square \Sigma} = D_\Sigma$ for all $\Sigma$.
 
 The constructions $V\mapsto V^@$ and  $W\mapsto W^\square$ are mutually inverse and hence define the desired lattice isomorphism.
 \end{proof}

\begin{XDefinition}An object-finite variety $V$ of languages in
${\mathscr C}$ is called \emph{simple} if it is generated by a
single alphabet ${\Sigma}$. That is, given any variety $V'$
such that $V{\Sigma}$ is a local subvariety of
$V'{\Sigma}$, then $V$ is a subfunctor of $V'$. A pseudovariety of $\DCat$-monoids is called \emph{simple} if it is is generated by a single finite $\DCat$-monoid $D$, i.e.,~all members of the pseudovariety are submonoids of quotients of
 finite powers $D^n$ ($n<\omega$).
\end{XDefinition}

\begin{XTheorem}[Generalized Eilenberg Theorem for Simple Varieties]\label{T:EilenbergSim} 
The poset of all simple varieties of languages in ${\mathscr C}$ is isomorphic to the poset of all simple pseudovarieties of ${\mathscr D}$-monoids. 
\end{XTheorem}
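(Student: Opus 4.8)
The plan is to derive the poset isomorphism by \emph{restricting} the lattice isomorphism of Theorem~\ref{T:Eilenberg} between object-finite varieties of languages in $\Cat$ and locally finite varieties of $\DCat$-monoids. Every simple variety of languages is object-finite by definition, so it lies in the domain of that isomorphism; hence it suffices to (i) identify the image of the simple language varieties under $V\mapsto V^@$, and (ii) put this image in order-isomorphic correspondence with the simple pseudovarieties. Concretely, I will show that the image consists of exactly those locally finite varieties of $\DCat$-monoids that are generated, as a variety, by a single finite $\DCat$-monoid, and that these correspond bijectively to simple pseudovarieties by taking finite parts.

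For step~(i), first unfold the definition of simplicity: $V$ is simple, generated by an alphabet $\Sigma$, iff $V$ is the least variety $V'$ of languages with $V\Sigma\subseteq V'\Sigma$. The essential move is to translate the condition $V\Sigma\subseteq V'\Sigma$ to the monoid side. Recall from the proof of Theorem~\ref{T:Eilenberg} that the free $(V')^@$-monoid on $\Sigma$ is $\widehat{V'\Sigma}$, and from Proposition~\ref{prop:rqcmon} together with the Local Eilenberg Theorem that $V\Sigma\subseteq V'\Sigma$ holds iff $\widehat{V\Sigma}$ is a $\Sigma$-generated quotient of $\widehat{V'\Sigma}$. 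Since $\widehat{V\Sigma}$ is $\Sigma$-generated and $(V')^@$ is closed under quotients, this is equivalent to the membership $\widehat{V\Sigma}\in(V')^@$. Transporting the minimality condition across the order-isomorphism of Theorem~\ref{T:Eilenberg} therefore shows that $V$ is simple iff $V^@$ is the least locally finite variety of $\DCat$-monoids containing the finite monoid $\widehat{V\Sigma}$. As the variety $\mathbf{V}(\widehat{V\Sigma})$ generated by a finite (ordered) algebra is already locally finite (its free algebra on $n$ generators embeds into a finite power of $\widehat{V\Sigma}$), that least variety is precisely $\mathbf{V}(\widehat{V\Sigma})$; thus simple varieties $V$ map to exactly the varieties generated by one finite monoid. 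For the converse inclusion, if a locally finite variety $\mathbf{W}$ equals $\mathbf{V}(D)$ with $D$ finite, then taking $\Sigma=\under{D}$ makes $D$ a quotient of the free $\mathbf{W}$-monoid $D_\Sigma=\widehat{\mathbf{W}^\square\Sigma}$, so $\mathbf{V}(D_\Sigma)=\mathbf{V}(D)=\mathbf{W}$ and, by the translation just established, $\mathbf{W}^\square$ is simple, generated by $\Sigma$.

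For step~(ii), I pass between a locally finite variety generated by a single finite monoid and its class of finite members. If $\mathbf{W}=\mathbf{V}(D)$ with $D$ finite, then the finite members of $\mathbf{W}$ are exactly the submonoids of quotients of finite powers of $D$, i.e.\ the simple pseudovariety generated by $D$; conversely a simple pseudovariety $\mathbf{P}$ generated by $D$ gives back the locally finite variety $\mathbf{V}(\mathbf{P})=\mathbf{V}(D)$. These assignments are mutually inverse: a locally finite variety is generated by its finite members (an (in)equality in finitely many variables holds in $\mathbf{W}$ iff it holds in the \emph{finite} free algebra $D_\Sigma\in\mathbf{W}$), and the finite part of $\mathbf{V}(\mathbf{P})$ recovers $\mathbf{P}$. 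Both assignments are order-preserving and order-reflecting, so composing the restriction of the isomorphism of Theorem~\ref{T:Eilenberg} (via step~(i)) with this finite-part correspondence produces the desired poset isomorphism between simple varieties of languages in $\Cat$ and simple pseudovarieties of $\DCat$-monoids.

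I expect the main obstacle to be the faithful translation inside step~(i), transporting the order-theoretic minimality ``$V$ is the least variety with prescribed $\Sigma$-component'' through the three layers of local varieties, $\Sigma$-generated $\DCat$-monoids, and locally finite varieties. The subtle point is that an arbitrary finite $\Sigma$-generated $\DCat$-monoid need \emph{not} be free on $\Sigma$ in the variety it generates; what rescues the argument is that the monoids $\widehat{V\Sigma}$ arising from varieties of languages are precisely the free monoids of $V^@$, a fact supplied by Theorem~\ref{T:Eilenberg}. It is exactly this freeness that makes ``$\widehat{V\Sigma}\in(V')^@$'' equivalent to ``$V\Sigma\subseteq V'\Sigma$'' and that lets the single generator $\widehat{V\Sigma}$ generate all of $V^@$ when $V$ is simple.
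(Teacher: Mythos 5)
Your overall route differs from the paper's, but as written it has a genuine gap in step~(i), in the backward direction of your central ``iff''. The definition of a simple variety quantifies over \emph{all} varieties $V'$ of languages with $V\Sigma\seq V'\Sigma$ -- including varieties that are not object-finite. Your translation of the condition ``$V\Sigma\seq V'\Sigma$'' into ``$\widehat{V\Sigma}\in (V')^@$'' uses the objects $\widehat{V'\Sigma}$ and $(V')^@$, which are only defined when $V'$ is object-finite: the duality $\widehat{(\mathord{-})}$ exists only between finite objects, and Notation~\ref{not:vat} defines $(\mathord{-})^@$ only for object-finite varieties. Hence transporting minimality across the isomorphism of Theorem~\ref{T:Eilenberg} establishes only the equivalence ``$V$ is minimal among \emph{object-finite} varieties $V'$ with $V\Sigma\seq V'\Sigma$ iff $V^@$ is the least locally finite variety of $\DCat$-monoids containing $\widehat{V\Sigma}$''. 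The forward implication of your claimed equivalence (simplicity implies monoid-side minimality) is fine, but the backward implication -- which you invoke to conclude that $\mathbf{W}^\square$ is simple when $\mathbf{W}=\mathbf{V}(D)$ -- is not established: it would only show that $\mathbf{W}^\square$ lies below every object-finite competitor. This is exactly where the paper does real work: in its $(\Leftarrow)$ direction it dualizes submonoid embeddings $\widehat{V\Delta}\monoto\widehat{V\Sigma}^n$ (via Lemma~\ref{lem:commdiag}) into collectively strongly epic families $[h_i]:\coprod_i(V\Sigma)^{f_i^\dag}\epito V\Delta$, and then handles an \emph{arbitrary} variety $V'$ with $V\Sigma\monoto V'\Sigma$ by using closure of $V'$ under preimages together with diagonal fill-in.

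The good news is that your gap can be closed cheaply, without importing the paper's machinery. Given an arbitrary variety $V'$ with $V\Sigma\seq V'\Sigma$, form the objectwise intersection $V\cap V'$: this is again a variety of languages (the paper verifies that intersections of varieties are varieties in the proof of Theorem~\ref{T:Eil}), it is object-finite because $V$ is, and $(V\cap V')\Sigma=V\Sigma\cap V'\Sigma=V\Sigma$. Object-finite minimality then yields $V\seq V\cap V'\seq V'$, so minimality against object-finite varieties already implies full simplicity. With this one observation inserted, your proof goes through and is genuinely different from the paper's: it is a purely order-theoretic transport along Theorem~\ref{T:Eilenberg} combined with the finite-part correspondence (your step~(ii) is essentially the paper's Lemma~\ref{lem:simptolocfin}), and it avoids Construction~\ref{C:simple} and the strong-epi dualization entirely. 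What the paper's heavier argument buys in exchange is precisely the explicit coalgebraic description of how a simple variety is generated from its $\Sigma$-component, which is reused in the proof of the full Theorem~\ref{T:Eil}; your shortcut proves the simple-variety correspondence but would still leave that machinery to be developed for the completion step.
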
 

\begin{proof}[Proof sketch]
For every locally finite variety $W$ of $\DCat$-monoids the class $W_f$ of finite members of $W$ forms a pseudovariety of $\DCat$-monoids. The isomorphism $V\mapsto V^@$  in the proof of Theorem \ref{T:Eilenberg}  restricts to one between simple varieties of languages and simple pseudovarieties of $\DCat$-monoids, that is, $V$ is simple iff $(V^@)_f$ is simple.
\end{proof}

Our main result now follows from Theorem~\ref{T:EilenbergSim} by a completion process. Recall that a \emph{\cpo} is a poset with directed joins. By a \emph{free \cpo-completion} of a poset $P^0$ is meant a \cpo $P$ containing $P^0$ as a subposet such that

(C1) every element $x$ of $P^0$ is compact
in $P$ (that is, whenever $x$ lies under a directed join $\bigvee p_i$ of elements of $P$, then $x\leq p_i$ for some $i$), and

(C2) the closure of $P^0$ under directed joins
is all of $P$. 

\noindent These two properties determine $P$ uniquely up to isomorphism. Concretely $P$ can be constructed as the set of all
ideals (= directed down-sets) of $P^0$, ordered by inclusion.

\begin{XTheorem}[Generalized Eilenberg Theorem]\label{T:Eil} 
The lattice of all varieties of languages in ${\mathscr C}$ is isomorphic to the lattice of all pseudovarieties of ${\mathscr D}$-monoids. 
\end{XTheorem}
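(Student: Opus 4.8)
The plan is to exhibit \emph{both} lattices as \emph{free \cpo-completions} of their subposets of compact elements, and to identify these subposets with the simple varieties, respectively simple pseudovarieties, of Theorem~\ref{T:EilenbergSim}. Write $\mathcal{S}_\Cat$ for the poset of simple varieties of languages in $\Cat$ and $\mathcal{S}_\DCat$ for the poset of simple pseudovarieties of $\DCat$-monoids. First I would prove (A) that the lattice of all pseudovarieties of $\DCat$-monoids is the free \cpo-completion of $\mathcal{S}_\DCat$, and (B) that the lattice of all varieties of languages in $\Cat$ is the free \cpo-completion of $\mathcal{S}_\Cat$. Then, combining the poset isomorphism $\mathcal{S}_\Cat\cong\mathcal{S}_\DCat$ of Theorem~\ref{T:EilenbergSim} with the uniqueness of free \cpo-completions, the two lattices become isomorphic as \cpos; as both are complete lattices, such an order isomorphism is automatically a lattice isomorphism. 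In each of (A) and (B) the task is to verify the defining properties (C1) and (C2) with $P^0$ the poset of simple objects.

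Step (A) is clean. Directed joins of pseudovarieties are just unions, since a directed union of pseudovarieties is again closed under submonoids, quotients and (using directedness) finite products. Every simple pseudovariety $\langle D\rangle$ is compact: if $\langle D\rangle\seq\bigvee_i\mathcal{W}_i=\bigcup_i\mathcal{W}_i$ then $D\in\mathcal{W}_i$ for some $i$, whence $\langle D\rangle\seq\mathcal{W}_i$. Conversely every pseudovariety $\mathcal{W}$ is the join of the simple pseudovarieties $\langle D\rangle$ with $D\in\mathcal{W}$, and this family is directed because $\langle D_1\rangle\vee\langle D_2\rangle=\langle D_1\times D_2\rangle$ and $D_1\times D_2\in\mathcal{W}$. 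Thus (C1) and (C2) hold with $P^0=\mathcal{S}_\DCat$, so the pseudovariety lattice is the free \cpo-completion of $\mathcal{S}_\DCat$, and its compact elements are precisely the simple pseudovarieties.

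Step (B) follows the same scheme but the verifications are subtler. Directed joins of varieties of languages are formed objectwise, and a directed objectwise union of varieties is again a subfunctor of $\rho T$ with local-variety components that is closed under preimages. A simple variety $V$ is compact: it is object-finite and generated by a single alphabet $\Sigma$, so if $V\seq\bigvee_iV_i$ for a directed family then the \emph{finite} set $V\Sigma$ lies in the directed union $\bigcup_iV_i\Sigma$, hence already in a single $V_i\Sigma$; as $V\Sigma$ is then a local subvariety of $V_i\Sigma$, simplicity of $V$ forces $V\seq V_i$. For the covering (C2) I would note that each $L\in V\Sigma$ lies in the finite local subvariety of $V\Sigma$ spanned by the finitely many biderivatives of the regular language $L$ (these lie in $V\Sigma$ since it is closed under left and right derivatives); the simple variety this subcoalgebra generates is contained in $V$ and contains $L$, so $V$ is the union of its simple subvarieties. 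The \textbf{main obstacle} is to make this union a \emph{directed} one, i.e.\ to show that $\mathcal{S}_\Cat$ is closed under binary joins taken in the full lattice, so that for simple $V_1,V_2\seq V$ the join $V_1\vee V_2$ is again a simple variety below $V$. I would resolve this by transporting the identity $\langle D_1\rangle\vee\langle D_2\rangle=\langle D_1\times D_2\rangle$ of step~(A) through the recognizability description underlying Theorem~\ref{T:EilenbergSim}: the simple variety attached to a finite $\DCat$-monoid $D$ consists of the languages recognized by the members of $\langle D\rangle$, and any language recognized by $D_1\times D_2$ is a $\Cat$-algebraic combination of languages recognized by the projections $D_1$ and $D_2$. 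Since every component $W\Sigma$ of a variety is closed under the operations of $\Cat$, such a combination lies in the join $V_1\vee V_2$ whenever the recognizers of $D_1,D_2$ do; this yields $V_1\vee V_2=\langle D_1\times D_2\rangle$, a simple variety. Hence the simple subvarieties of any $V$ are directed, (C1) and (C2) hold with $P^0=\mathcal{S}_\Cat$, and the compact elements of the variety lattice are exactly the simple varieties.

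With (A) and (B) established, Theorem~\ref{T:EilenbergSim} supplies a poset isomorphism between the two posets of compact elements. Since a free \cpo-completion is determined up to isomorphism by the poset it completes, the lattice of varieties of languages in $\Cat$ and the lattice of pseudovarieties of $\DCat$-monoids are isomorphic as \cpos, and being complete lattices they are isomorphic as lattices.
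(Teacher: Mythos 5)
Your overall architecture is exactly the paper's own: exhibit both lattices as free \cpo-completions of their posets of simple objects, then combine the isomorphism of Theorem~\ref{T:EilenbergSim} with uniqueness of free \cpo-completions. Your step (A) and your compactness argument in step (B) coincide with the paper's proof. The genuine gap is in the covering half of (C2) for languages: you invoke ``the simple variety this subcoalgebra generates is contained in $V$ and contains $L$'' as though this were routine, but it is precisely the technical core of the paper's proof, namely Construction~\ref{C:simple} and Lemma~\ref{L:simple}. Given a finite local variety $Q\monoto V\Sigma$, one must \emph{build} $V'\Delta$ for every alphabet $\Delta$ as the image of a canonical morphism $\coprod_f Q^f\to V\Delta$, the coproduct ranging over all $\DCat$-monoid morphisms $f\colon\Psi\Delta^*\to\Psi\Sigma^*$ (this uses closure of $V$ under preimages), and then verify that $V'$ is a subfunctor of $\rho T$, that each $V'\Delta$ is a local variety, that $V'$ is closed under preimages, and that $V'$ is \emph{object-finite} (because only finitely many distinct coalgebras $Q^f$ occur). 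Object-finiteness is not cosmetic: the paper's simple varieties are by definition object-finite, and Theorem~\ref{T:EilenbergSim} applies only to those; the least variety containing $Q$ (an objectwise intersection, automatically inside $V$ and ``generated by $\Sigma$'') is of no use until it is shown to be object-finite. Your sketch locates the main obstacle in directedness, but in the paper's development this construction is the heavy lifting, and your covering step already depends on it.

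Your directedness argument is genuinely different from the paper's and is salvageable, but it rests on two facts the paper's framework does not supply: (i) that the simple variety attached to $\langle D\rangle$ consists of the languages ``recognized'' by members of $\langle D\rangle$ -- recognizability is never defined in the general setting, and the actual correspondence goes through dual $\Sigma$-generated monoids with a reversal twist ($\rev_\Sigma$, $f^\dag$) that your sketch ignores; and (ii) that a language recognized by $D_1\times D_2$ is a $\Cat$-algebraic combination of languages recognized by $D_1$ and by $D_2$. Both are provable in the general setting: (ii) follows from the duality, since $\widehat{D_1\times D_2}\cong\widehat{D_1}+\widehat{D_2}$ and a coproduct in the variety $\Cat$ is generated by the images of its injections, together with the fact that morphisms into $O_\DCat$ extend along strong monos because $\one_\Cat$ is projective with respect to surjections. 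But this is new machinery of weight comparable to the paper's own route, which avoids recognizability entirely: reduce the two simple subvarieties to a common generating alphabet (Lemma~\ref{lem:simvargen}), embed $V_0\Sigma\cup V_1\Sigma$ into a finite local subvariety of $V\Sigma$ (\cite[Cor.~4.5]{GET1}), and apply Construction~\ref{C:simple} once more -- note the paper only needs \emph{some} simple upper bound inside $V$, not the join. In summary: same skeleton as the paper, one essential construction assumed without proof, and an alternative directedness argument whose supporting lemmas are themselves left unproven.
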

\begin{proof}[Proof sketch]
One proves that

(1) the lattice $\mathscr L_{\mathscr C}$ of all varieties of languages in $\Cat$ is a free \cpo-completion of the poset $\mathscr L_{\mathscr C}^0$ of all simple varieties of languages, and

(2) the lattice $\mathscr L_{\mathscr D}$ of all pseudovarieties of $\DCat$-monoids is a free \cpo-completion of the poset $\mathscr L_{\mathscr D}^0$ of all simple pseudovarieties.

This requires a verification of the properties (C1) and (C2) above. Since $\mathscr L_{\mathscr C}^0\cong\mathscr L_{\mathscr D}^0$ by Theorem \ref{T:EilenbergSim}, and free \cpo-completions are unique up to isomorphism, it follows that $\mathscr L_{\mathscr C}\cong \mathscr L_{\mathscr D}$.
\end{proof}

For our five predualities of Example \ref{E:bool} we thus obtain the concrete correspondences in the table below as special cases of the Generalized Eilenberg Theorem. The second column describes the $\Cat$-algebraic operations under which varieties of languages are closed (in addition to closure under derivatives and preimages), and the fourth column characterizes the $\DCat$-monoids.\vspace{0.2cm}
\smallskip
\centerline{\vbox{\skip0=\baselineskip\skip1=\lineskip
\offinterlineskip\setbox0\hbox{boolean rings}
\setbox1\hbox{var.\ of languages} \setbox3\hbox{$\Int$-algebras} \halign{\vrule#&\kern.25em$\vcenter{\hsize=\wd0
\dimen0=\parindent\rightskip=0pt plus\dimen0 \leftskip=0pt
plus\dimen0 \spaceskip=\fontdimen2\font minus
\fontdimen4\font\setbox2\hbox{#\strut} \hyphenpenalty=0
\parindent=0pt \parfillskip=0pt \hbox to\wd0
{}{\lineskiplimit=0pt \baselineskip=.875\skip0 \ifdim\wd2>\wd0
\unhbox2 \else\hfill\vbox to 0pt{\vss\box2}\hfill\hbox{\strut
}\fi\vskip.125\skip0}\hbox to\wd0 {}}$\kern.25em\vrule&\kern
.25em$\vcenter{\hsize=\wd1 \dimen0=\parindent\rightskip=0pt
plus\dimen0 \leftskip=0pt plus\dimen0
\spaceskip=\fontdimen2\font minus \fontdimen4\font\setbox2\hbox
{#\strut} \hyphenpenalty=0 \parindent=0pt \parfillskip=0pt \hbox
to\wd1 {}{\lineskiplimit=0pt \baselineskip=.875\skip0
\ifdim\wd2>\wd1 \unhbox2 \else\hfill\vbox to
0pt{\vss\box2}\hfill\hbox{\strut}\fi\vskip.125\skip0}\hbox
to\wd1 {}}$\kern.25em\vrule&\kern.25em$\vcenter{\hsize=\wd0
\dimen0=\parindent\rightskip=0pt plus\dimen0 \leftskip=0pt
plus\dimen0 \spaceskip=\fontdimen2\font minus
\fontdimen4\font\setbox2\hbox{#\strut} \hyphenpenalty=0
\parindent=0pt \parfillskip=0pt \hbox to\wd0
{}{\lineskiplimit=0pt \baselineskip=.875\skip0 \ifdim\wd2>\wd0
\unhbox2 \else\hfill\vbox to 0pt{\vss\box2}\hfill\hbox{\strut
}\fi\vskip.125\skip0}\hbox to\wd0 {}}$\kern.25em\vrule&\kern
.25em$\vcenter{\hsize=\wd3 \dimen0=\parindent\rightskip=0pt
plus\dimen0 \leftskip=0pt plus\dimen0
\spaceskip=\fontdimen2\font minus \fontdimen4\font\setbox2\hbox
{#\strut} \hyphenpenalty=0 \parindent=0pt \parfillskip=0pt \hbox
to\wd3 {}{\lineskiplimit=0pt \baselineskip=.875\skip0
\ifdim\wd2>\wd3 \unhbox2 \else\hfill\vbox to
0pt{\vss\box2}\hfill\hbox{\strut}\fi\vskip.125\skip0}\hbox
to\wd3 {}}$\kern.25em\vrule\cr\noalign{\hrule} &\textsc{category
${\mathscr C}$}&var.\ of languages are closed under&\textsc{category
${\mathscr D}$}&${\mathscr D}$-monoids\cr\noalign{\hrule}
&$\BA$&$\cup$, $\cap$, $\overline{({\mskip1.5mu
\char"7B\mskip1.5mu })}$, $\emptyset$,
${\Sigma}^*$&$\Set$&monoids\cr&$\DL$&$\cup$, $\cap$, $\emptyset$, ${\Sigma}^*$&$\Poset$&ordered monoids\cr&$\JSL$&$\cup$, $\emptyset$&$\JSL$&idempotent semirings\cr&$\Vect{\Int}$&$\oplus$, $\emptyset$&$\Vect{\Int}$&$\Int$-algebras\cr&$\BR$&$\oplus$, $\cap$, $\emptyset$&$\PSet$& monoids with $0$\cr\noalign{\hrule}}}}

The cases $\Cat = \BA$, $\DL$ and $\Vect{\Int}$ are due to Eilenberg \cite{E}, Pin \cite{Pin95} and Reutenauer \cite{R}, respectively. The case $\Cat = \JSL$ is ``almost'' the result of Pol\'ak \cite{P}: his \emph{disjunctive varieties of languages} are required to contain $\Sigma^*$ for every $\Sigma$, and he considers semirings without $0$. In our setting this would mean to take the predual categories $\Cat=\mathbf{JSL_{01}}$ (join-semilattices with $0$ and $1$) and $\DCat=\mathbf{JSL}$ (join-semilattices). We opted for the more symmetric preduality $\Cat=\DCat=\JSL$ as semirings are usually considered with a zero element.
The last example, $\Cat=\BR$, is a new variant of Eilenberg's theorem.

\section{Varieties of Regular Behaviors}\label{sec:regbeh}

Although all our results so far concerned acceptors and
varieties of languages they accept, we can with little effort
generalize the whole theory to Moore automata, where the output
morphism ${\gamma_{\out}}$ has, in lieu of ${\{0,1\}}$, any
finite set $O$ (of outputs) as codomain. The role of languages
over ${\Sigma}$ is now taken over by functions
$\beta:{\Sigma}^*\to O$, and the role of regular languages by \emph{regular behaviors}, i.e,
those $\beta$ realized by a state of some finite Moore automaton.

Given a fixed finite set $O$ of outputs, all we need to change in the previous text is
Assumption~\ref{A:basic}(v) which is replaced by $\mathopen|{O_{\mathscr C}}\mathclose|=O$, where $O_\Cat$ is the object dual to $\one_\DCat$. For the object $O_\DCat$ dual to $\one_\Cat$ we can assume, as in Remark \ref{rem:caniso}, that $\under{O_\Cat}=\under{O_\DCat}$. Moore automata are modeled as coalgebras for the endofunctor on $\Cat$
\[ T_\Sigma Q = O_\Cat \times Q^\Sigma.\]
\begin{XExample}[Linear weighted automata] Let
${\mathscr C}={\mathscr D}$ be the category
$\Vect{K}$ of vector spaces over a finite
field $\mathbf{K}$. Here ${O_{\mathscr C}}=\mathbf{K}$, the one-dimensional space.
Thus, a $T_\Sigma$-coalgebra is a \emph{linear weighted automaton}: it consists of a vector space
$Q$ of states, a linear output function ${\gamma_{\out}}:Q\to \mathbf{K}$ and
linear transitions $\gamma_a:Q\to Q$ for $a\in\Sigma$. \end{XExample}

The rational fixpoint $\rho T_\Sigma$ of the functor $T_\Sigma$ is carried by the set of all regular behaviors $\beta:{\Sigma}^*\to O$ (that we identify with the corresponding morphisms $\beta:\Psi\Sigma^*\ra O_\DCat$ in $\DCat$). Its output map assigns the value
at the empty word, ${\gamma_{\out}}(\beta)=\beta(\varepsilon)
$, and its transitions are given by \emph{left derivatives} $\gamma_a(\beta)=\beta(a\cdot{
\mskip1.5mu \char"7B\mskip1.5mu })$ for all $a\in{\Sigma}$.
Symmetrically, the \emph{right derivatives} of $\beta$ are the functions $\beta(\mathord{-}\cdot a)$ for $a\in \Sigma$. We define the \emph{rational functor} $\rho T:{\Set}_{ f}^{op}\to{\mathscr C}$ in complete analogy to Definition \ref{def:ratfunc}.

\begin{XDefinition} A subfunctor
$V\hookto \rho T$ is \emph{closed
under preimages} if for every ${\mathscr D}$-monoid morphism
$f:\Psi{\Delta}^*\to\Psi{\Sigma}^*$ and every behavior
$\beta:\Psi{\Sigma}^*\to{O_{\mathscr D}}$ in
$V{\Sigma}$ the behavior $\beta\cdot f$ lies in
$V{\Delta}$. A \emph{variety of behaviors in
${\mathscr C}$} is a subfunctor $V\hookto \rho T$ closed
under preimages and (left and right) derivatives. 
\end{XDefinition}
\begin{XExample}[Varieties of rational power series]
If $\Cat=\DCat=\Vect{K}$, the coalgebra $\rho T_\Sigma$
consists of all rational power series $\beta:\Sigma^*\ra \mathbf{K}$, i.e., behaviors of linear weighted automata. Moreover, $\DCat$-monoids are precisely algebras over the field $\mathbf{K}$, and the free $\mathbf{K}$-algebra $\Psi\Sigma^*$ is carried by the set of all functions $\Sigma^*\ra \mathbf{K}$ with finite support. Given a morphism $f: \Psi\Delta^*\ra\Psi \Sigma^*$ of free $\mathbf{K}$-algebras, the preimage of a power series $\beta:\Sigma^*\ra \mathbf{K}$ under $f$ is \[\beta': \Delta^*\ra \mathbf{K},\quad \beta'(w) = \sum_{v\in\Sigma^*} f(w)(v)\cdot \beta(v).\]
(This sum is well-defined because $f(w): \Sigma^*\ra \mathbf{K}$ has finite support.) A variety of behaviors in $\Vect{K}$ forms linear subspaces $V\Sigma$ of $\rho T_\Sigma$ closed under derivatives and preimages of $\mathbf{K}$-algebra morphisms.
This coincides with the concept of a \emph{variety of rational power series} introduced by Reutenauer \cite{R}. His Theorem III.1.1 is therefore a special
case of \end{XExample}

\begin{XTheorem}[Generalized Eilenberg Theorem for Regular Behaviors] 
The lattice of all varieties of behaviors in ${\mathscr C}$ is isomorphic to the lattice of all pseudovarieties of ${\mathscr D}$-monoids. 
\end{XTheorem}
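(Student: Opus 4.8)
The plan is to re-run the proof of the Generalized Eilenberg Theorem (Theorem~\ref{T:Eil}) essentially verbatim, observing that the cardinality of the output set $O$ never entered the argument: Assumption~\ref{A:basic}(v) was used \emph{only} to identify the carrier of $\rho T_\Sigma$ with the set of languages $\Sigma^*\to\{0,1\}$, i.e.\ with morphisms $\Psi\Sigma^*\to O_\DCat$. Once this assumption is replaced by $\under{O_\Cat}=\under{O_\DCat}=O$, the same identification reinterprets the carrier of $\rho T_\Sigma$ as the set of regular behaviors $\Sigma^*\to O$, as already recorded at the start of Section~\ref{sec:regbeh}. First I would check that every preparatory result carries over with $O$ in place of $\{0,1\}$.

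The key observation, which I would make explicit, is that the entire \emph{monoid side} of the correspondence is insensitive to $O$. Indeed, under the preduality of $T_\Sigma$ and $L_\Sigma$, the output morphism $\gamma_{\out}\colon Q\to O_\Cat$ of a coalgebra dualizes to the \emph{initial state} $\alpha_{\init}\colon\one_\DCat\to\widehat Q$ of the algebra, while transitions dualize to transitions. Hence the $L_\Sigma$-algebra structure governing the Local Eilenberg Theorem and Proposition~\ref{prop:rqcmon}, namely transitions together with an initial state, does not refer to $O$ at all. Consequently the duality between finite subcoalgebras of $\rho T_\Sigma$ closed under right derivatives and finite $\Sigma$-generated $\DCat$-monoids holds unchanged, and it is only the \emph{acceptance side} that is reinterpreted from a boolean value to an element of $O$.

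With this in hand I would reproduce the three-level argument of Section~\ref{S:GET}. First, the behavioral analogues of Lemma~\ref{L:rev}, Corollary~\ref{C:pre}, Proposition~\ref{prop:pre} and Theorem~\ref{T:pre}: each proof is a duality or terminality computation that reads off the accepted behavior $w\mapsto\gamma_{\out}\cdot\gamma_w\cdot q\in O$ rather than a single bit, so the proofs are literally unchanged. In particular Theorem~\ref{T:pre} still characterizes closure under preimages by the existence of coalgebra homomorphisms $(V\Sigma)^f\to V\Delta$. Next, the object-finite correspondence (Theorem~\ref{T:Eilenberg}) via the mutually inverse assignments $V\mapsto V^@$ and $W\mapsto W^\square$, together with its restriction to simple varieties (Theorem~\ref{T:EilenbergSim}), depends only on the duality of $\Sigma$-generated monoids and on the lattice and poset structure, neither of which involves $O$. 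Finally I would invoke the free \cpo-completion argument exactly as in Theorem~\ref{T:Eil}: the lattice of varieties of behaviors in $\Cat$ and the lattice of pseudovarieties of $\DCat$-monoids are the free \cpo-completions of their respective posets of simple objects, which are isomorphic by the behavioral simple theorem, whence the two lattices are isomorphic.

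The main obstacle is not any single hard step but the bookkeeping of verifying that \emph{no} intermediate result silently used $|O|=2$. The one place deserving genuine care is the re-derivation that $\rho T_\Sigma$ is carried by regular behaviors: one must confirm that the terminal locally finite coalgebra of $T_\Sigma Q=O_\Cat\times Q^\Sigma$ still exists, which it does since $T_\Sigma$ remains a finitary functor on a locally finite variety, so the rational fixpoint of Proposition~\ref{E:reg} is available for any finite $O_\Cat$, and that its states are exactly the behaviors realized by finite Moore automata, with output map $\beta\mapsto\beta(\varepsilon)$ and left-derivative transitions. Granting this, every remaining step is a mechanical transcription of the language proof.
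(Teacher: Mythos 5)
Your proposal is correct and follows exactly the route the paper itself takes: the paper's entire justification for this theorem is the remark at the start of Section~\ref{sec:regbeh} that one only replaces Assumption~\ref{A:basic}(v) by $\under{O_\Cat}=\under{O_\DCat}=O$ and then re-runs the whole development, which is precisely your transcription argument. Your additional observations -- that the monoid side ($L_\Sigma$-algebras, $\Sigma$-generated $\DCat$-monoids, pseudovarieties) never mentions $O$ because output morphisms dualize to initial states, and that the only point needing genuine verification is the identification of $\rho T_\Sigma$ with the set of regular behaviors -- are exactly the checks the paper leaves implicit.
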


\section{Conclusions and Future Work}\label{sec:conc}

In the present paper we demonstrated that Eilenberg's variety theorem, a central result of algebraic automata theory, holds at the level of an abstract duality between (algebraic) categories. Our result covers uniformly several known extensions and refinements of Eilenberg's theorem and also provides a new Eilenberg-type correspondence for pseudovarieties of monoids with $0$.

In the future we intend to classify all predual pairs $({\mathscr C},{\mathscr D})$ satisfying our Assumptions~\ref{A:basic} above. Using the \emph{natural duality} framework of Clark and Davey \cite{CD}, we expect to show that only finitely many Eilenberg theorems exist for every fixed finite output set $O$. 

We also believe that more can be said about the relationship between local varieties of languages and varieties of languages. By studying languages and monoids in a setting of Grothendieck fibrations, it seems possible that the global Eilenberg theorem turns out to be an instance of the local one -- or that both theorems are instances of one and the same result. This might lead to independent proofs of our results, and at the same time to a more abstract and hence illuminating view of the concepts involved.

Another important direction is the connection of regular languages to profinite algebras. All results mentioned in the Related Work can be interpreted in the setting $(\Cat,\DCat)=(\BA,\Set)$ or $(\DL,\Poset)$, and we aim to extend them to general pairs of predual categories.  This will require the generalization of two core concepts of algebraic automata theory, the syntactic (ordered) monoid associated to a regular language and the free profinite (ordered) monoid on an alphabet $\Sigma$, to a notion of \emph{syntactic $\DCat$-monoid} and \emph{free profinite $\DCat$-monoid}, respectively. We conjecture that the free profinite $\DCat$-monoid on $\Sigma$ arises as the limit of all quotient monoids of $\Psi\Sigma^*$, and hope to derive a generalized Reiterman theorem. In the classical setting the theorems of Eilenberg and Reiterman are two key ingredients in a great success of algebraic automata theory: these results allow to specify classes of regular languages (e.g.~the star-free languages) by profinite identities, which leads to decidability results for such classes. Perhaps, new such results are enabled through subsequent work in our generalized setting.

Finally, we are interested in extending our results from regular languages and Moore behaviors to other notions of rational behavior, such as $\omega$-regular languages or regular tree languages. Here the role of monoids is taken over by two-sorted algebras called \emph{Wilke algebras} (or \emph{right binoids}, see Wilke~\cite{W}) and \emph{forest algebras} (introduced by Bojanczyk and Walukiewicz \cite{BW}), respectively. The main challenge will be to identify the proper categorical model for the corresponding acceptors, B\"uchi automata and tree automata.

\clearpage
\appendices
\section{Proofs}

This appendix  provides all proofs we omitted due to space limitations, along with some technical lemmas required for these proofs.

\begin{XRemark}\label{rem:fpcoalg}
(a) Recall that an object $A$ of a category $\ACat$ is called \emph{finitely presentable} if its $\hbox{hom}$-functor
${\mathscr A}(A,{\mskip1.5mu \char"7B\mskip1.5mu }): \ACat\ra\Set$ preserves
filtered colimits. 
If $\ACat$ is a locally finite variety of (ordered) algebras, the finitely presentable objects are precisely the finite algebras.

(b) Every finite $T_\Sigma$-coalgebra is a finitely presentable object of $\Coalg{T_\Sigma}$, see \cite{AP}. Hence, given a filtered colimit cocone $c_i': Q_i'\ra Q'$ ($i\in I$) in $\Coalg{T_\Sigma}$, every coalgebra homomorphism $h: Q\ra Q'$ with finite domain $Q$ factorizes (in $\Coalg{T_\Sigma}$) through some $c_i'$:
\[  
\xymatrix{
Q \ar[r]^h \ar@{-->}[dr]_{h'} & Q'\\
& Q_i' \ar[u]_{c_i'}
}
\]
\end{XRemark}

\begin{XRemark}\label{rem:creatcol}
The forgetful functor $\Coalg{T_\Sigma}\ra\Cat$ preserves and creates colimits. The latter means that, given a diagram $(Q_i,\gamma_i)$ ($i\in I$) of $T_\Sigma$-coalgebras and a colimit cocone $(c_i: Q_i\ra Q)_{i\in I}$ in $\Cat$, there is a unique $T_\Sigma$-coalgebra structure $\gamma$ on $Q$ for which the maps $c_i$ are $T_\Sigma$-coalgebra homomorphisms $c_i: (Q_i,\gamma_i)\ra (Q,\gamma)$. Moreover,  $(c_i)$ is a colimit cocone in $\Coalg{T_\Sigma}$. 

The uniqueness of $\gamma$ gives rise to a useful proof principle: if two coalgebra structures $\gamma$ and $\gamma'$ on $Q$ are given such that each $c_i$ is a coalgebra homomorphism $c_i: (Q_i,\gamma_i)\ra(Q,\gamma)$ and $c_i: (Q_i,\gamma_i)\ra(Q,\gamma')$, it follows that $\gamma=\gamma'$.
\end{XRemark}

The preimage construction  $Q^f$, see Definition \ref{def:qf2}, has the following equivalent formulation:

\begin{XConstruction}\label{cons:qf}
For every locally finite $T_\Sigma$-coalgebra $(Q,\gamma)$ and $\DCat$-monoid morphism $f:\Psi\Delta^*\ra\Psi\Sigma^*$ we construct a $T_\Delta$-coalgebra $(Q,\gamma)^f$ as follows:

(1) Express $(Q,\gamma)$ as a filtered colimit $c_i: (Q_i,\gamma_i)\ra (Q,\gamma)$ ($i\in I$) of finite $T_\Sigma$-coalgebras. 

(2) Let $\gamma^f$ be the unique $T_\Delta$-coalgebra structure on $Q$ for which all $c_i: (Q_i,\gamma_i)^f \ra (Q,\gamma^f)$
are $T_\Delta$-coalgebra homomorphism, see Remark \ref{rem:creatcol}. Put
\[ (Q,\gamma)^f := (Q,\gamma^f).\]
\end{XConstruction}

The following lemma summarizes some important properties of this construction.

\begin{XLemma}\label{lem:qfprops}
Let $(Q,\gamma)$ be a locally finite $T_\Sigma$-coalgebra and $f:\Psi\Delta^*\ra\Psi\Sigma^*$ a $\DCat$-monoid morphism.

(a) The coalgebra structure of $(Q,\gamma)^f$ is independent of the choice of the colimit cocone $(c_i)$ in Construction \ref{cons:qf}.

(b) If $f = \Psi f_0^*$ for some $f_0: \Delta\ra\Sigma$, then $(Q,\gamma)^f$ has the coalgebra structure
\begin{equation}\label{eq:qf0}
  Q \xra{\gamma}{O_{\mathscr C}}\times Q^{\Sigma} \xra{\id\times Q^{f_0}} {O_{\mathscr C}}\times Q^{\Delta},
\end{equation}
c.f. the definition of the rational functor $\rho T$ (Definition \ref{def:ratfunc}).
(c) Every homomorphism $h:Q\ra Q'$ of locally finite $T_\Sigma$-coalgebras is also a homomorphism $h: Q^f\ra (Q')^f$ of $T_\Delta$-coalgebras.

(d) For any $\DCat$-monoid morphism $g:\Psi\Gamma^*\ra \Psi\Delta^*$,
\[ (Q^f)^g = Q^{f\cdot g}.\]

(e) The construction $(\mathord{-})^f$ commutes with coproducts: given locally finite $T_\Sigma$-coalgebras $Q_j$ ($j\in J$), we have
\[ (\coprod_j Q_j)^f = \coprod_j Q_j^f. \]
\end{XLemma}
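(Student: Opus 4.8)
The plan is to reduce every assertion to the finite case — where $(\mathord{-})^f$ is defined by duality via $\widehat{Q^f}=\widehat{Q}^{f^\dag}$ — and then transport it to locally finite coalgebras through the colimit presentation of Construction~\ref{cons:qf}. The two workhorses will be the finite presentability of finite coalgebras (Remark~\ref{rem:fpcoalg}) and the fact that the forgetful functor $\Coalg{T_\Delta}\ra\Cat$ creates colimits, together with the attendant uniqueness-of-structure principle (Remark~\ref{rem:creatcol}). I first record the finite version of~(c): for a homomorphism $h\colon Q_i\ra Q_j$ of \emph{finite} $T_\Sigma$-coalgebras, $h$ is a $T_\Delta$-homomorphism $Q_i^f\ra Q_j^f$ — this is exactly the dual of Lemma~\ref{lem:pre}(b).

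For part~(a) I would take two filtered colimit cocones $c_i\colon Q_i\ra Q$ and $d_j\colon R_j\ra Q$ of finite coalgebras, yielding structures $\gamma^f$ and $\gamma'^f$ on $Q$. Since each $Q_i$ is finitely presentable, $c_i$ factors as $c_i=d_j\cdot h$ through some $d_j$, with $h\colon Q_i\ra R_j$ a coalgebra homomorphism. By the finite version of~(c), $h\colon Q_i^f\ra R_j^f$ is a $T_\Delta$-homomorphism, and $d_j\colon R_j^f\ra(Q,\gamma'^f)$ is one by construction; hence so is the composite $c_i\colon Q_i^f\ra(Q,\gamma'^f)$. As each $c_i$ is thus a $T_\Delta$-homomorphism into both $(Q,\gamma^f)$ and $(Q,\gamma'^f)$, the uniqueness principle of Remark~\ref{rem:creatcol} forces $\gamma^f=\gamma'^f$. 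This freedom to choose any convenient cocone is the linchpin of the whole lemma, and I expect it to be the main obstacle: everything afterwards rests on being able to abandon the canonical ``all finite subcoalgebras'' diagram.

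Part~(b) follows the same template. For finite coalgebras the equality of $Q_i^f$ with the coalgebra of~\eqref{eq:qf0} is already recorded after Definition~\ref{def:qf}, so it suffices to show that the structure~\eqref{eq:qf0} on $Q$ turns each $c_i$ into a $T_\Delta$-homomorphism; after expanding $T_\Sigma c_i=\id\times c_i^\Sigma$ this reduces to the naturality identity $Q^{f_0}\cdot c_i^\Sigma=c_i^\Delta\cdot Q_i^{f_0}$ of the power functor, whence Remark~\ref{rem:creatcol} identifies $\gamma^f$ with~\eqref{eq:qf0}. For part~(c) I would present both $Q$ and $Q'$ as filtered colimits of finite subcoalgebras, factor each $h\cdot c_i$ (which has finite domain) through a cocone map $c_j'$ of $Q'$ as $h\cdot c_i=c_j'\cdot h_i$, and use the finite version of~(c) to see that $h\cdot c_i\colon Q_i^f\ra(Q')^f$ is a $T_\Delta$-homomorphism; the family $(h\cdot c_i)$ is then a cocone in $\Coalg{T_\Delta}$ over the colimit cocone $c_i\colon Q_i^f\ra Q^f$, and since the underlying mediating morphism is forced to coincide with $h$, the latter is a $T_\Delta$-homomorphism.

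Parts~(d) and~(e) are then formal. For~(d) I would first prove the algebraic identity $(A^p)^q=A^{p\cdot q}$ for $L$-algebras, which rests on the observation that $(\alpha^p)_x=\alpha_{p(x)}$ for all $x\in\under{\Psi\Delta^*}$ — both are $\DCat$-monoid morphisms into $[A,A]$ agreeing on generators — together with $(f\cdot g)^\dag=f^\dag\cdot g^\dag$, which is immediate from $\rev_\Delta\cdot\rev_\Delta=\id$. Dualizing gives $(Q_i^f)^g=Q_i^{f\cdot g}$ for finite coalgebras; invoking part~(a) to compute $(Q^f)^g$ from the cocone $c_i\colon Q_i^f\ra Q^f$ (a filtered colimit of finite coalgebras by Remark~\ref{rem:creatcol}), both $(Q^f)^g$ and $Q^{f\cdot g}$ arise as the colimit along the same cocone of the same finite coalgebras $(Q_i^f)^g=Q_i^{f\cdot g}$, so they coincide. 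For~(e), after noting that $\coprod_j Q_j$ is locally finite (being a filtered colimit of finite coproducts of finite subcoalgebras), I would apply part~(c) to the coproduct injections $\iota_j\colon Q_j\ra\coprod_j Q_j$ to obtain a comparison $T_\Delta$-homomorphism $\phi\colon\coprod_j Q_j^f\ra(\coprod_j Q_j)^f$; since the forgetful functor creates coproducts, both sides carry the same underlying $\Cat$-object $\coprod_j\under{Q_j}$ and $\phi$ respects the injections, so its underlying morphism is the identity and the two coalgebra structures coincide.
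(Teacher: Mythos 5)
Your proof is correct. Parts (a)--(d) follow the paper's own argument in all essentials: the finite case is settled by duality --- the dual of Lemma~\ref{lem:pre}(b), the identity $(\alpha^p)_x=\alpha_{p(x)}$ (exactly the commuting triangle~\eqref{eq:diagabove} in the paper's proof), and $(f\cdot g)^\dag=f^\dag\cdot g^\dag$ --- and the passage to locally finite coalgebras uses finite presentability (Remark~\ref{rem:fpcoalg}) plus creation of colimits with its uniqueness-of-structure principle (Remark~\ref{rem:creatcol}). Your (a) runs the paper's factorization with the roles of the two cocones interchanged, your (c) replaces the paper's appeal to the $c_i$ being jointly epimorphic by the colimit mediating morphism, and your (d) invokes part (a) where the paper re-runs the uniqueness argument via (c); these are equivalent. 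The one genuinely different step is (e). The paper proves the algebraic identity $(\prod_j A_j)^f=\prod_j A_j^f$, dualizes it to finite coproducts of finite coalgebras, and then treats arbitrary coproducts as filtered colimits of finite ones; you instead apply part (c) to the injections $\iota_j\colon Q_j\to\coprod_j Q_j$ and use the universal property of the coproduct in $\Coalg{T_\Delta}$ (created by the forgetful functor) to conclude that the comparison morphism has identity underlying morphism, which forces the two coalgebra structures to coincide. This is shorter and stays entirely on the coalgebra side; its only overhead is checking that $\coprod_j Q_j$ is locally finite, which you justify by the same filtered-colimit-of-finite-coproducts observation the paper needs in its last step (and which holds because a finite coproduct of finite algebras in a locally finite variety is finitely generated, hence finite). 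The paper's detour, in exchange, records the product formula for $L_\Sigma$-algebras, parallel to its treatment of (d). One small completeness point: in (b) you take the finite case from the remark following Definition~\ref{def:qf}, but that remark is stated there without proof, and the paper's proof of (b) is precisely where it gets verified --- by noting that $f^\dag$ agrees with $f_0$ on letters, so that $\widehat{Q}^{f^\dag}$ has transitions $\widehat{\gamma_{f_0(a)}}$ and hence, dually, $Q^f$ has transitions $\gamma_{f_0(a)}$. Include that one-line check to avoid circularity.
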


\begin{proof}
(a) Suppose $\gamma^f$ has been defined by means of the cocone $(c_i)$, and another filtered colimit cocone \[c_j': (Q_j',\gamma_j')\ra (Q,\gamma)\quad (j\in J)\] with $Q_j'$ finite is given. By Remark \ref{rem:creatcol} it suffices to show that the maps $c_j'$ are $T_\Delta$-coalgebra homomorphisms $c_j': (Q_j',\gamma_j')^f \ra (Q,\gamma^f)$. 

Given $j\in J$, there exists by Remark \ref{rem:fpcoalg} a $T_\Sigma$-coalgebra homomorphism $g: (Q_j',\gamma_j')\to (Q_i,\gamma_i)$ with $c_i\cdot g = c_j'$ for some $i$. It follows that $c_j'$ is a $T_\Delta$-coalgebra homomorphism, being the composite of the $T_\Delta$-coalgebra homomorphisms
\[ (Q_j',\gamma_j')^f \xra{g} (Q_i,\gamma_i)^f \xra{c_i} (Q,\gamma^f).\]
Indeed, $g$ is a $T_\Delta$-coalgebra homomorphism using the definition of $Q^f$ for finite $Q$ (Definition~\ref{def:qf}) and Lemma~\ref{lem:pre}(b), and $c_i$ is one by the definition of $\gamma^f$.

(b) Given an $L_\Sigma$-algebra $(A,\alpha)$ the $L_\Delta$-algebra $(A,\alpha)^f$ (see Definition \ref{def:af}) has the transitions $\alpha^f_a = \alpha_{f_0(a)}$ for $a\in\Delta$. 
Hence, for a finite $T_\Sigma$-coalgebra $(Q,\gamma)$, the $L_\Delta$-algebra $\widehat{Q^f} = \widehat{Q}^{f^\dag} = \widehat{Q}^f$ has transitions $\widehat{\gamma_{f_0(a)}}: \widehat{Q}\ra\widehat{Q}$ for $a\in \Delta$. Dually $Q^f$ has the transitions $\gamma_{f_0(a)}: Q\ra Q$, which are precisely the transitions corresponding to the coalgebra structure \eqref{eq:qf0}.

In the case where $(Q,\gamma)$ is just locally finite, express $(Q,\gamma)$ as a filtered colimit $c_i: (Q_i,\gamma_i)\ra (Q,\gamma)$ ($i\in I$) of finite $T_\Sigma$-coalgebras, and consider the diagram below:
\[ 
\xymatrix{
  Q \ar[r]^<<<<<{\gamma}  & {O_{\mathscr C}}\times Q^{\Sigma} \ar[r]^{\id\times Q^{f_0}} & {O_{\mathscr C}}\times Q^{\Delta}\\
 Q_i \ar[u]^{c_i} \ar[r]_<<<<<{\gamma_i}  & {O_{\mathscr C}}\times Q_i^{\Sigma} \ar[r]_{\id\times Q_i^{f_0}} \ar[u]_{T_\Sigma c_i} & {O_{\mathscr C}}\times Q_i^{\Delta} \ar[u]_{T_\Delta c_i} \\
}
\]
This diagram commutes because $c_i$ is a $T_\Sigma$-coalgebra homomorphism and by naturality. The lower row is the coalgebra structure of $(Q_i,\gamma_i)^f$ by the first part of the proof. Hence $c_i$ is a $T_\Delta$-coalgebra homomorphism from $(Q_i,\gamma_i)^f$ to the coalgebra in the upper row, which implies that the upper row defines the coalgebra structure of $(Q,\gamma)^f$.

(c) Express $Q$ and $Q'$ as filtered colimits $c_i: Q_i\ra Q$ ($i\in I$) and $c_j': Q_j'\ra Q'$ ($j\in J$) of finite $T_\Sigma$-coalgebras. By Remark \ref{rem:fpcoalg} there exists for every $i\in I$ some $T_\Sigma$-coalgebra homomorphism $g: Q_i\ra Q_j'$ for which the diagram below commutes:
\[
\xymatrix{
Q \ar[r]^h & Q'\\
Q_i \ar[u]^{c_i} \ar[r]_g & Q_j' \ar[u]_{c_j'}
}
\]
It follows that $h\cdot c_i: Q_i^f\ra (Q')^f$ is a $T_\Delta$-coalgebra homomorphism, being the composite of the $T_\Delta$-coalgebra homomorphisms $Q_i^f \xra{g} (Q_j')^f \xra{c_j'} (Q')^f$ (one uses the same argument as in point~(a)). Since the morphisms $c_i$ are jointly epimorphic in $\Cat$, it follows that $h$ is a $T_\Delta$-coalgebra homomorphism $h: Q^f\ra (Q')^f$.

(d) (i) We first show that for all $L_\Sigma$-algebras $(A,\alpha)$ we have
\[ (A^f)^g = A^{f\cdot g}. \]
Indeed, both algebras have states $A$ and initial states $\alpha_\init$. To see that they have the same transitions, consider the diagram below, cf. Notation \ref{not:ax} and Definition \ref{def:af}. 
\begin{equation}
  \label{eq:diagabove}
  \vcenter{
    \xymatrix@C+1pc{
      \Psi\Sigma^* \ar[r]^-{x\mapsto \alpha_x} & [A,A]\\
      \Psi\Gamma^* \ar[u]^{f\cdot g} \ar[r]_g & \Psi\Delta^* \ar[u]_{y\mapsto \alpha_y^f} \ar[ul]|{f}
    }
  }
\end{equation}
The upper triangle commutes by the definition of $\alpha^f$. Hence, for all $a\in\Gamma$,
\begin{align*}
(\alpha^{f\cdot g})_a &= \alpha_{fg(a)} & \text{(def. $\alpha^{f\cdot g}$)}\\
&= (\alpha^f)_{ga} & \text{(diagram~\refeq{eq:diagabove})}\\
&= ((\alpha^f)^g)_a & \text{def. $(\alpha^f)^g$}
\end{align*}

(ii) If $Q$ is a finite $T_\Sigma$-coalgebra we conclude from (i):
\[ \widehat{Q^{f\cdot g}} = \widehat{Q}^{(f\cdot g)^\dag} = \widehat{Q}^{f^\dag\cdot g^\dag} = (\widehat{Q}^{f^\dag})^{g^\dag} = \widehat{(Q^f)^g},\]
so $Q^{f\cdot g}= (Q^f)^g$.

(iii) Now let $Q$ be locally finite, and express $Q$ as a  filtered colimit $c_i: Q_i\ra Q$ ($i\in I$) of finite $T_\Sigma$-coalgebras. Hence by (c) we have $T_\Gamma$-coalgebra homomorphisms $c_i: Q_i^{f\cdot g} \ra Q^{f\cdot g}$ and $c_i: (Q_i^f)^g \ra (Q^f)^g$, where $Q_i^{f\cdot g}=(Q_i^f)^g$ by (ii) above. It follows that $Q^{f\cdot g}$ and $(Q^f)^g$ have the same coalgebra structure.

(e) (i) We first prove that, for each family of $L_\Sigma$-algebras $(A_j,\alpha_j)$ ($j\in J$),
\[ (\prod_j A_j)^f = (\prod_j A_j^f).\]
Clearly the algebras on both sides of the equation have the same states $\prod_j A_j$ and the same initial state. Concerning the transitions, consider the commutative diagram below:
\[
\xymatrix{
\Psi\Delta^* \ar[r]^f & \Psi\Sigma^* \ar[rr]^-{a\mapsto \prod (\alpha_j)_a} \ar[drr]_(.4)*+{\labelstyle\langle a\mapsto (\alpha_j)_a \rangle} && [\prod A_j,\prod A_j] \\
&&& \prod [A_j,A_j] \ar[u]_{(f_j) \mapsto \prod f_j}
}
\]
The upper and lower path define the transitions of $(\prod_j A_j)^f$ and $\prod_j A_j^f$, respectively. Hence they have the same transitions.

(ii) Suppose now that $J$ is finite and finite $T_\Sigma$-coalgebras $Q_j$ are given. Then we conclude from (i) and duality:
\[ \widehat{(\coprod_j Q_j)^f} = (\widehat{\coprod Q_j})^{f^\dag} = (\prod \widehat{Q_j})^{f^\dag} = \prod \widehat{Q_j}^{f^\dag}  = \widehat{\coprod Q_j^f}.\]
The statement for arbitrary $J$ and locally finite coalgebras $Q_j$ now follows from the fact that filtered colimits and coproducts commute in $\Coalg{T_\Sigma}$, and every infinite coproduct is a filtered colimit of finite ones.
\end{proof}

\begin{XRemark}\label{rem:homcoalg}
Recall the \emph{homomorphism theorem} for coalgebras: if a $T_\Sigma$-coalgebra homomorphism $g:Q\to R$ factorizes in $\Cat$ through a subcoalgebra $i:R'\monoto R$, then
the factorizing morphism $g'$ is a coalgebra homomorphism. 
\[
\xymatrix@-1pc{
& R \\
Q \ar[ru]^g \ar@{-->}[rr]_{g'} && R' \ar@{ >->}[lu]_i
}
 \]
This follows easily from the observation that $T_\Sigma$ preserves monomorphisms.
\end{XRemark}

\begin{XRemark}\label{rem:sub}
  Note that locally finite $T_\Sigma$-coalgebras are closed under subcoalgebras. Indeed, suppose that $i: Q' \monoto Q$ is a $T_\Sigma$-coalgebra homomorphism, where $Q$ is locally finite. Now for every $q \in |Q'|$ we have a finite subcoalgebra $Q_0$ of $Q$ containing $q$. Since $T_\Sigma$ preserves intersections we can conclude that $Q_0 \cap Q'$ is a subcoalgebra of $Q'$ containing $q$. 
\end{XRemark}

\begin{XLemma}\label{lem:subfunc}
A family of subcoalgebras $m_\Sigma: V\Sigma \monoto \rho T_\Sigma$ ($\Sigma\in\Set_f$) forms a subfunctor of $\rho T$ iff, for every function $f_0:\Delta\to\Sigma$ in $\Set_f$, a $T_\Sigma$-coalgebra homomorphism from $(V\Sigma)^{f}$ to $V\Delta$ exists, where  $f=\Psi f_0^*$.
\end{XLemma}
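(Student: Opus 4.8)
The plan is to reduce both directions of the equivalence to a single factorization statement in $\Cat$, and then to pass between ``morphism of $\Cat$'' and ``$T_\Delta$-coalgebra homomorphism'' by invoking the two tools already available: terminality of $\rho T_\Delta$ among locally finite coalgebras, and the homomorphism theorem for coalgebras (Remark \ref{rem:homcoalg}).

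First I would set up the common ground. Since $f=\Psi f_0^*$, Lemma \ref{lem:qfprops}(b) identifies the $T_\Delta$-coalgebra $(V\Sigma)^f$ with the rational-functor preimage $(V\Sigma)^{f_0}$, whose structure is $(\id\times Q^{f_0})\cdot\gamma$ as in Definition \ref{def:ratfunc}. By Lemma \ref{lem:qfprops}(c) the inclusion $m_\Sigma$ is also a $T_\Delta$-coalgebra homomorphism $m_\Sigma: (V\Sigma)^f\ra(\rho T_\Sigma)^f$, and it stays injective, so $(V\Sigma)^f$ is a subcoalgebra of $(\rho T_\Sigma)^f$. Postcomposing with the defining $T_\Delta$-coalgebra homomorphism $\rho T_{f_0}: (\rho T_\Sigma)^{f_0}\ra\rho T_\Delta$ of the rational functor yields a single $T_\Delta$-coalgebra homomorphism
\[ g \;:=\; \rho T_{f_0}\cdot m_\Sigma : (V\Sigma)^f \ra \rho T_\Delta. \]
The key observation is that the defining condition of a subfunctor --- namely that $\rho T_{f_0}\cdot m_\Sigma$ factorize through $m_\Delta: V\Delta\monoto\rho T_\Delta$ for every $f_0$ --- is precisely the requirement that $g$, viewed merely as a morphism of $\Cat$, factorize through $m_\Delta$.

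For the implication from the existence of the coalgebra homomorphism to the subfunctor property, suppose $k: (V\Sigma)^f\ra V\Delta$ is a $T_\Delta$-coalgebra homomorphism. Here I would first record that $(V\Sigma)^f$ is locally finite: $V\Sigma$ is a subcoalgebra of the locally finite coalgebra $\rho T_\Sigma$, hence locally finite by Remark \ref{rem:sub}, and $(V\Sigma)^f$ is by construction a filtered colimit of finite coalgebras (Definition \ref{def:qf2}). Then both $m_\Delta\cdot k$ and $g$ are $T_\Delta$-coalgebra homomorphisms from a locally finite coalgebra into the terminal locally finite coalgebra $\rho T_\Delta$, so by uniqueness they coincide; thus $g=m_\Delta\cdot k$ exhibits the required factorization in $\Cat$, and $V$ is a subfunctor.

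For the converse, assume $V$ is a subfunctor, so $g=m_\Delta\cdot k$ for some morphism $k: V\Sigma\ra V\Delta$ of $\Cat$. Since $g$ is a $T_\Delta$-coalgebra homomorphism factoring in $\Cat$ through the subcoalgebra inclusion $m_\Delta$, the homomorphism theorem (Remark \ref{rem:homcoalg}) promotes $k$ to a $T_\Delta$-coalgebra homomorphism $k: (V\Sigma)^f\ra V\Delta$, which is exactly what is claimed. I expect no genuine obstacle here: the only real work is the bookkeeping of the first paragraph --- correctly identifying $(V\Sigma)^f$ with the rational-functor preimage via Lemma \ref{lem:qfprops}(b) and recognising that the naturality condition for a subfunctor is literally the factorization of $g$ through $m_\Delta$ --- after which terminality of $\rho T_\Delta$ and the homomorphism theorem supply the two directions at once.
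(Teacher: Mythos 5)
Your proof is correct and follows essentially the same route as the paper's: both directions hinge on the composite $\rho T_{f_0}\cdot m_\Sigma : (V\Sigma)^f \ra \rho T_\Delta$, with terminality of $\rho T_\Delta$ among locally finite coalgebras (plus local finiteness of $(V\Sigma)^f$ via Remark~\ref{rem:sub}) giving one implication and the homomorphism theorem of Remark~\ref{rem:homcoalg} giving the other. The only cosmetic difference is that you cite Lemma~\ref{lem:qfprops}(c) where the paper cites Lemma~\ref{lem:qfprops}(b) for the fact that $m_\Sigma$ is a $T_\Delta$-coalgebra homomorphism, which is if anything the more precise reference.
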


\begin{proof}
Suppose that the  $V\Sigma$ form a subfunctor $V\monoto\rho T$, and let $f_0: \Delta\ra\Sigma$. Then we have the commutative diagram below with $g=Vf_0$:
\[
\xymatrix{
\rho T_\Sigma \ar[r]^{\rho T_{f_0}} & \rho T_\Delta\\
\ar@{ >->}[u]^{m_\Sigma} \ar[r]_{g} V\Sigma & V\Delta \ar@{ >->}[u]_{m_\Delta}
}
\]
By definition, $\rho T_{f_0}$ is a $T_\Delta$-coalgebra homomorphism $\rho T_{f_0}: (\rho T_\Sigma)^f \ra \rho T_\Delta$, and by Lemma~\ref{lem:qfprops}(b), $m_\Sigma$ is a $T_\Delta$-coalgebra homomorphism $m_\Sigma: (V\Sigma)^f \ra (\rho T_\Sigma)^f$. The homomorphism theorem then implies that $g$ is a $T_\Delta$-coalgebra homomorphism from $(V\Sigma)^f$ to $V\Delta$.

Conversely, given a $T_\Delta$-coalgebra homomorphism $g: (V\Sigma)^f\ra V\Delta$, the above square commutes because both $\rho T_{f_0}\cdot m_\Sigma$ and $m_\Delta\cdot g$ are $T_\Delta$-coalgebra homomorphisms from $(V\Sigma)^f$ into the terminal locally finite $T_\Delta$-coalgebra $\rho T_\Delta$. Note that $(V\Sigma)^f$ is locally finite because $V\Sigma$ is locally finite, being a subcoalgebra of the locally finite coalgebra $\rho T_\Sigma$, see Remark~\ref{rem:sub}.
\end{proof}

\begin{XRemark}\label{rem:homtheo}
Dually to Remark \ref{rem:homcoalg}, we have the following homomorphism theorem for $L_\Sigma$-algebras: if an $L_\Sigma$-algebra homomorphism $f:A\to C$ factorizes in $\DCat$ through an $L_\Sigma$-quotient algebra  $e:A\epito B$, then
the factorizing morphism $f'$ is an $L_\Sigma$-algebra homomorphism.
\[
\xymatrix@-1pc{
& A \ar@{->>}[dl]_e \ar[dr]^f &\\
B \ar@{-->}[rr]_{f'} && C
}
\]
The analogous statement holds for $\DCat$-monoid morphisms. Since $\Alg{L_\Sigma}$ and $\Mon{\DCat}$ are varieties of (ordered) algebras, this is a special case of the well-known homomorphism theorem of universal algebra.
\end{XRemark}

\begin{XLemma}\label{lem:commdiag}
Let $Q\monoto \rho T_\Sigma$ and $Q'\monoto \rho T_\Delta$ be finite local varieties of languages, and $f:\Psi\Delta^* \ra \Psi\Sigma^*$ a $\DCat$-monoid morphism. Then  there exists a $T_\Delta$-coalgebra homomorphism from $h: Q^f\ra Q'$ iff there exists a $\DCat$-monoid morphism $g$ making the following square commute:
\begin{equation}\label{diag:mon}
\begin{gathered}
\xymatrix{
\Psi\Delta^* \ar[r]^{f^\dag}  \ar@{->>}[d]_{e_{\widehat{Q'}}} & \Psi\Sigma^* \ar@{->>}[d]^{e_{\widehat Q}} \\
\widehat{Q'} \ar[r]_{g} & \widehat{Q}
}
\end{gathered}
\end{equation}
In this case, we have $g=\widehat h$.
\end{XLemma}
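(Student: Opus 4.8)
The plan is to transport the whole statement across the dual equivalence between finite $T_\Delta$-coalgebras and finite $L_\Delta$-algebras, and then to recognise the relevant $L_\Delta$-algebra homomorphisms as precisely the monoid morphisms filling in \eqref{diag:mon}. Concretely, a $T_\Delta$-coalgebra homomorphism $h\colon Q^f\ra Q'$ corresponds bijectively to an $L_\Delta$-algebra homomorphism $\widehat h\colon \widehat{Q'}\ra \widehat{Q^f}$, and by Definition~\ref{def:qf} we have $\widehat{Q^f}=\widehat{Q}^{f^\dag}$. Since the preimage construction leaves the underlying $\DCat$-object unchanged, $\widehat{Q}^{f^\dag}$ and $\widehat Q$ share the same carrier, so $\widehat h$ is at the same time a $\DCat$-morphism $\widehat{Q'}\ra\widehat Q$. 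Thus it suffices to prove, for a fixed $\DCat$-morphism $g\colon\widehat{Q'}\ra\widehat Q$, the equivalence: $g$ is an $L_\Delta$-algebra homomorphism $\widehat{Q'}\ra\widehat{Q}^{f^\dag}$ if and only if $g$ is a $\DCat$-monoid morphism making \eqref{diag:mon} commute. The assignments $h\mapsto\widehat h$ and $g\mapsto\widehat g$ then yield both implications of the lemma together with the identity $g=\widehat h$.

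For the forward implication, assume $g$ is an $L_\Delta$-algebra homomorphism into $\widehat{Q}^{f^\dag}$. Both $g\cdot e_{\widehat{Q'}}$ and $e_{\widehat{Q}^{f^\dag}}$ are $L_\Delta$-algebra homomorphisms out of the initial $L_\Delta$-algebra $\Psi\Delta^*$ (Remark~\ref{rem:assalg}), hence coincide; since $e_{\widehat{Q}^{f^\dag}}=e_{\widehat Q}\cdot f^\dag$ by Lemma~\ref{lem:pre}(c) applied to $f^\dag$, this is exactly the commutativity of \eqref{diag:mon}. Now $e_{\widehat Q}\cdot f^\dag$ is a $\DCat$-monoid morphism, being the composite of the $\DCat$-monoid morphisms $f^\dag$ and $e_{\widehat Q}$, and it factors as $g\cdot e_{\widehat{Q'}}$ through the $\DCat$-monoid quotient $e_{\widehat{Q'}}$; the homomorphism theorem for $\DCat$-monoids (Remark~\ref{rem:homtheo}) therefore forces $g$ to be a $\DCat$-monoid morphism.

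For the converse, recall (Remark~\ref{rem:assalg}) that the associated $L_\Delta$-algebra $\widehat{Q'}$ has initial state $e_{\widehat{Q'}}(\epsilon)$ and transitions $d\mapsto d\circ e_{\widehat{Q'}}(b)$, while $\widehat{Q}^{f^\dag}$ has initial state $e_{\widehat Q}(\epsilon)$ and transitions $d\mapsto d\circ e_{\widehat Q}(f^\dag(b))$ for $b\in\Delta$ (Definition~\ref{def:af}, Notation~\ref{not:ax}). If $g$ is a $\DCat$-monoid morphism with $g\cdot e_{\widehat{Q'}}=e_{\widehat Q}\cdot f^\dag$, then it preserves the initial state, since $g(e_{\widehat{Q'}}(\epsilon))=e_{\widehat Q}(f^\dag(\epsilon))=e_{\widehat Q}(\epsilon)$, and it preserves transitions, since $g(d\circ e_{\widehat{Q'}}(b))=g(d)\circ g(e_{\widehat{Q'}}(b))=g(d)\circ e_{\widehat Q}(f^\dag(b))$; hence $g$ is an $L_\Delta$-algebra homomorphism $\widehat{Q'}\ra\widehat{Q}^{f^\dag}$, and $h=\widehat g$ is the desired coalgebra homomorphism. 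The only genuinely delicate point is the forward direction, where matching the coalgebraic/algebraic structure against the monoid structure relies on combining initiality of $\Psi\Delta^*$ with the homomorphism theorem; the remaining steps are routine bookkeeping, the main pitfall being to keep the direction-reversal of the duality (and of $f^\dag$ versus $f$) straight.
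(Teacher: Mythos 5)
Your proposal is correct and follows essentially the same route as the paper's proof: transport everything across the duality $\FCoalg{T_\Delta}\simeq \FAlg{L_\Delta}^{op}$, obtain commutativity of \eqref{diag:mon} from initiality of $\Psi\Delta^*$ together with Lemma~\ref{lem:pre}(c) (equivalently, the commuting square of $L_\Delta$-algebra homomorphisms the paper writes down), and conclude that $g$ is a $\DCat$-monoid morphism from the homomorphism theorem of Remark~\ref{rem:homtheo}. The only (harmless) deviation is in the converse direction, where the paper applies the homomorphism theorem once more, now for $L_\Delta$-algebras, to factor the $L_\Delta$-algebra homomorphism $e_{\widehat Q}\cdot f^\dag$ through the quotient $e_{\widehat{Q'}}$, whereas you check the $L_\Delta$-algebra homomorphism property of $g$ by hand using the right-multiplication description of the associated algebras; both arguments are valid and of comparable length.
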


\begin{proof}
Given a $T_\Delta$-coalgebra homomorphism $h: Q^f \ra Q'$ we have, by Lemma \ref{lem:pre}, the following square of $L_\Delta$-algebra homomorphisms
\begin{equation}\label{diag:lalg}
\begin{gathered}
\xymatrix{
\Psi\Delta^* \ar[r]^{f^\dag}  \ar@{->>}[d]_{e_{\widehat{Q'}}} & (\Psi\Sigma^*)^{f^\dag} \ar@{->>}[d]^{e_{\widehat{Q}}} \\
\widehat{Q'} \ar[r]_{g} & \widehat{Q}^{f^\dag}
}
\end{gathered}
\end{equation}
where $g=\widehat{h}$. This diagram commutes because $\Psi\Delta^*$ is the initial $L_\Delta$-algebra. Therefore the square \eqref{diag:mon} commutes. That $g$ is a $\DCat$-monoid morphism follows from Remark \ref{rem:homtheo}.

Conversely, given a morphism $g$ for which \eqref{diag:mon} commutes, then $g: \widehat{Q'}\ra \widehat{Q}^{f^\dag} = \widehat{Q^f}$ is an $L_\Delta$-algebra homomorphism by \eqref{diag:lalg} and Remark \ref{rem:homtheo}, so dually a $T_\Delta$-coalgebra homomorphism $Q^f\ra Q'$ exists.
\end{proof}

\begin{XNotation}\label{not:vat}
Let $V$ be an object-finite variety of languages in $\Cat$. 

(a) We denote by $e_\Sigma: \Psi \Sigma^* \epito \widehat{V\Sigma}$ the $\Sigma$-generated $\DCat$-monoid  corresponding
to the local variety $V{\Sigma}$. 

(b) Since $V$ is closed under preimages, we have a (unique) $T_\Delta$-coalgebra homomorphism
$\ol Vf:(V{\Sigma})^f\to V{\Delta}$ for every $\DCat$-monoid morphism $f: \Psi\Delta^*\ra\Psi\Sigma^*$, see Theorem~\ref{T:pre}. 
This notation is in good harmony with the functor notation for $V$: for a function $f_0:{\Delta}\to{\Sigma}
$ and the corresponding $\DCat$-monoid morphism $f=\Psi
f_0^*:\Psi{\Delta}^*\to\Psi{\Sigma}^*$ we have $Vf_0= \ol Vf$: indeed, both maps are $T_\Delta$-coalgebra homomorphisms from $(V\Sigma)^f$ to $V\Delta$ and for the $T_\Delta$-coalgebra homomorphism $i: V\Delta \monoto \rho T_\Delta$ we have that $i \cdot Vf_0 = i \cdot \ol Vf : (V\Sigma)^f \to \rho T_\Delta$ agree by the finality of $\rho T_\Delta$; now use that $i$ is monomorphic to conclude $Vf_0 = \ol V f$.

(c)  We denote by $V^@$ the class of
all ${\mathscr D}$-monoids $D$ such that every $\DCat$-monoid
homomorphism  $h: \Psi{\Sigma}^*\to D$, where ${\Sigma}$ is
any finite set, factorizes (necessarily uniquely) through $e_{\Sigma}$.
\[
\xymatrix{
\Psi\Sigma^* \ar@{->>}[d]_{e_\Sigma} \ar[dr]^h &\\
\widehat{V\Sigma} \ar@{-->}[r] & D
}
\]
\end{XNotation}

\begin
{XProposition}\ignorespaces\label{P:fin} $V^@$ is a locally
finite variety of ${\mathscr D}$-monoids whose free $\DCat$-monoid on
${\Sigma}$ is $\widehat{V\Sigma}$ for every finite set ${\Sigma}$.
\end{XProposition}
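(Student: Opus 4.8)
The plan is to verify the three assertions in turn: that $V^@$ is closed under submonoids, quotients and arbitrary products; that each $\widehat{V\Sigma}$ lies in $V^@$ and is the free $\DCat$-monoid on $\Sigma$ there; and that local finiteness then follows at once. Throughout I will use that $e_\Sigma:\Psi\Sigma^*\epito\widehat{V\Sigma}$ is surjective (it is epic, hence surjective by Assumption~\ref{A:basic}(iii)) together with Proposition~\ref{P:free}, which lets me prescribe a $\DCat$-monoid morphism out of $\Psi\Sigma^*$ freely on the generators $\Sigma$; combined, these allow me to \emph{lift} morphisms along surjections.

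The closure properties are routine diagram chases. For products, let $D_j\in V^@$ and $h:\Psi\Sigma^*\to\prod_j D_j$; composing with each projection and factoring $\pi_j\cdot h=k_j\cdot e_\Sigma$ through $e_\Sigma$, the morphism $\langle k_j\rangle$ satisfies $\langle k_j\rangle\cdot e_\Sigma=h$, so $\prod_j D_j\in V^@$. For a quotient $q:D\epito D''$ with $D\in V^@$, any $h:\Psi\Sigma^*\to D''$ lifts (by surjectivity of $q$ and freeness of $\Psi\Sigma^*$) to $\tilde h$ with $q\cdot\tilde h=h$; factoring $\tilde h=k\cdot e_\Sigma$ gives $h=(q\cdot k)\cdot e_\Sigma$. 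For a submonoid $m:D'\monoto D$ with $D\in V^@$ and $h:\Psi\Sigma^*\to D'$, factor $m\cdot h=k\cdot e_\Sigma$; since $e_\Sigma$ is onto, the image of $k$ lies in the subalgebra $D'$, so $k$ corestricts to $h':\widehat{V\Sigma}\to D'$ with $m\cdot h'=k$, whence $h'\cdot e_\Sigma=h$ as $m$ is monic. Thus $V^@$ is a variety of $\DCat$-monoids, being closed under homomorphic images, submonoids and products.

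The heart of the matter is showing $\widehat{V\Sigma}\in V^@$: given any finite set $\Gamma$ and a $\DCat$-monoid morphism $h:\Psi\Gamma^*\to\widehat{V\Sigma}$, I must factor $h$ through $e_\Gamma$. Using surjectivity of $e_\Sigma$ and freeness of $\Psi\Gamma^*$, I first write $h=e_\Sigma\cdot f$ for some $\DCat$-monoid morphism $f:\Psi\Gamma^*\to\Psi\Sigma^*$. Closure under preimages then supplies, via Notation~\ref{not:vat}(b), the coalgebra homomorphism $\ol{V}(f^{\dag}):(V\Sigma)^{f^{\dag}}\to V\Gamma$; feeding this into Lemma~\ref{lem:commdiag}, applied to the monoid morphism $f^{\dag}$ and the finite local varieties $V\Sigma,V\Gamma$, yields a $\DCat$-monoid morphism $g:\widehat{V\Gamma}\to\widehat{V\Sigma}$ making the associated square commute. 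The delicate point here, and the main obstacle, is the reversal bookkeeping: the top edge of that square is $(f^{\dag})^{\dag}=f$, since $\dag$ is an involution (each $\rev_\Sigma$ is self-inverse), so the square reads $e_\Sigma\cdot f=g\cdot e_\Gamma$. Consequently $h=e_\Sigma\cdot f=g\cdot e_\Gamma$ factors through $e_\Gamma$, as required.

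Finally I establish the universal property. Equip $\widehat{V\Sigma}$ with the generators $e_\Sigma(a)$ for $a\in\Sigma$. Given $D\in V^@$ and a map $\phi:\Sigma\to\under{D}$, extend $\phi$ freely to a $\DCat$-monoid morphism $\bar\phi:\Psi\Sigma^*\to D$ (Proposition~\ref{P:free}); since $D\in V^@$, this factors uniquely as $\bar\phi=\psi\cdot e_\Sigma$ with $\psi:\widehat{V\Sigma}\to D$, and $\psi$ is the unique extension of $\phi$ along the generators (uniqueness because $e_\Sigma$ is epic). Hence $\widehat{V\Sigma}$ is the free $V^@$-monoid on $\Sigma$. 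As $V\Sigma$ is finite by object-finiteness, its dual $\widehat{V\Sigma}$ is finite, so every finitely generated free $V^@$-monoid is finite and $V^@$ is locally finite.
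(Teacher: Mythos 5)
Your proof is correct and follows essentially the same route as the paper's: the same three closure-property arguments (with generator-wise lifting along surjections in place of the paper's set-theoretic splittings), the same use of closure under preimages combined with Lemma~\ref{lem:commdiag} (plus the involutivity of $(\mathord{-})^\dag$, which the paper leaves implicit) to place $\widehat{V\Sigma}$ in $V^@$, and the same freeness-plus-finiteness conclusion. The differences are only cosmetic bookkeeping.
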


\begin{proof}
(1) We first verify that $V^@$ is a variety of $\DCat$-monoids.

(a)\enspace Closure under products: let $\pi_i: \prod_{i\in I} D_i \to D_i$ be a product of monoids  $D_i\in V^@$. Given a monoid morphism $h:\Psi{\Sigma}^*\to\prod_{i\in
I}D_i$ every morphism $\pi_i\cdot h: \Psi\Sigma^*\ra D_i$ factorizes as $\pi_i\cdot
h=k_i\cdot e_{\Sigma}$ for some $k_i: \widehat{V\Sigma}\ra D_i$. Hence $\langle k_i\rangle: \widehat{V\Sigma} \to \prod D_i$ is the desired morphism with $h=\langle k_i\rangle\cdot e_{\Sigma}$.

(b)\enspace Closure under submonoids: given $m:D\monoto D'$ with
$D'\in V^@$ and a monoid morphism $h:\Psi{\Sigma}^*\to
D$, the homomorphism $m\cdot h$
factorizes through $e_{\Sigma}$ in $\Mon{\DCat}$. Consequently $h$ factorizes
through $e_{\Sigma}$ due to diagonal fill-in:
\[\vcenter\bgroup\hbox\bgroup$\xymatrix{\Psi{\Sigma}^*\ar@{->>}
[r]^{e_{\Sigma}}\ar[d]_h&\widehat{V\Sigma}\ar@{-->}[dl]\ar
[d]^{k}\\D\ar[r]_m&D'}$.\egroup\egroup\] Thus $D\in V^@$.

(c)\enspace Closure under quotients: given $e:D\onto D'$ with $D\in V^@$ and a
homomorphism $h:\Psi{\Sigma}^*\to D'$,
choose a splitting of $e$ in $\Set$, i.e., a function $u:\mathopen|D'\mathclose|\to
\mathopen|D\mathclose|$ with $e\cdot u=\id$. Let $\eta:\Sigma\monoto \under{\Psi\Sigma^*}$ denote the universal map of the free monoid $\Psi\Sigma^*$, and extend the map $u\cdot h\cdot \eta:{\Sigma}\to\mathopen|D\mathclose|$ to a homomorphism $k:\Psi{\Sigma}^*\to D$, which then factorizes as $k=k'\cdot e_{\Sigma}$ because $D$ is in $V^@$. Then the
${\mathscr D}$-monoid morphism $e\cdot k':\widehat{V{\Sigma}}\to
D'$ is the desired factorization of $h$. 
\[
\xymatrix@+1pc{
\Sigma \ar@{ >->}[r]^\eta & \Psi\Sigma^* \ar[dr]|(.34)*+{\labelstyle h} \ar[d]_k \ar@{->>}[r]^{e_\Sigma} & \widehat{V\Sigma} \ar[dl]|(.33)*+{\labelstyle k'} \\
& D \ar@<0.5ex>@{->>}[r]^e & D' \ar@{ >->}@<0.5ex>[l]^u
}
\]
Indeed, using freeness of the $\DCat$-monoid $\Psi\Sigma^*$ and since $u$ is injective it suffices to prove that $u\cdot h\cdot \eta = u \cdot e \cdot k' \cdot e_\Sigma\cdot \eta$ in $\Set$, which holds because
\begin{align*}
u\cdot h\cdot \eta &= u\cdot e\cdot u \cdot h\cdot \eta & \text{($e\cdot u = \id$)}\\
&= u\cdot e \cdot k\cdot \eta & \text{(def. $k$)}\\
&= u\cdot e\cdot k'\cdot e_\Sigma\cdot \eta & \text{(def. $k'$).}
\end{align*}

(2) The ${\mathscr D}$-monoid $\widehat{V\Delta}$ lies in $V^@$ for all finite $\Delta$. Indeed, given a $\DCat$-monoid morphism $h: \Psi\Sigma^* \ra \widehat{V\Delta}$ where $\Sigma$ is finite, we can choose a splitting in $\Set$ (a function $u: \under{\widehat{V\Delta}} \ra \under{\Psi\Delta^*}$ with $e_\Delta\cdot u = \id$), and extend $u\cdot h\cdot \eta$ to a
$\DCat$-monoid morphism $f:\Psi{\Sigma}^*\to\Psi{\Delta}^*
$. By Lemma \ref{lem:commdiag} there is a $\DCat$-monoid morphism $g: \widehat{V\Sigma}\to \widehat{V\Delta}$ such that $g\cdot e_\Sigma = e_\Delta\cdot f$. 
\[
\xymatrix{
\Sigma\ar@{>->}[r]^\eta & \ar@{->>}[d]_{e_\Sigma} \ar[r]^f \ar[dr]^h \Psi\Sigma^* & \Psi\Delta^* \ar@<0.5ex>@{->>}[d]^{e_\Delta}\\
& \widehat{V\Sigma} \ar[r]_g & \widehat{V\Delta} \ar@{ >->}@<0.5ex>[u]^{u}
}
\]
We claim that $g$ is the desired factorization, i.e., $g\cdot e_\Sigma = h$. Using freeness of the $\DCat$-monoid $\Psi\Sigma^*$ and since $u$ is injective it suffices to prove $u\cdot g\cdot e_\Sigma\cdot \eta = u\cdot h\cdot \eta$ in $\Set$, and indeed we have
\begin{align*}
u\cdot g\cdot e_\Sigma \cdot \eta &= u\cdot e_\Delta \cdot f \cdot \eta & \text{(def. $g$)}\\
&= u\cdot e_\Delta \cdot u \cdot h \cdot \eta & \text{(def. $f$)}\\
&= u\cdot h\cdot \eta & \text{($e_\Delta\cdot u = \id$).}
\end{align*}
(3) From the definition of $V^@$ and (2) above we immediately conclude that  $\widehat{V\Delta}$ is the free monoid on a finite set $\Delta$ in the variety $V^@$. Hence, since $\widehat{V\Delta}$ is finite, $V^@$ is  a locally finite variety of $\DCat$-monoids.
 \end{proof}

\begin{proof}[Proof of Theorem \ref{T:Eilenberg}]
The above map $V\mapsto V^@$ defines the desired isomorphism. To see this, we describe its inverse $W\mapsto W^\square$. Let $W$ be a locally finite variety of $\DCat$-monoids with free monoids $e_{\Sigma}:\Psi{\Sigma}
^*\epito D_{\Sigma}$ in $W$. Define an object-finite variety
$W^\square$ of languages in $\Cat$ by forming, for each finite $\Sigma$, the dual local variety
$W^\square{\Sigma}\hookrightarrow\rho{T_{\Sigma}}$ of
$e_{\Sigma}$, i.e.,
$\widehat{W^\square\Sigma} \cong D_\Sigma$.
To verify that $W^\square$ is a
subfunctor of $\rho T$, consider a function
$f_0:{\Delta}\to{\Sigma}$ in $\Set_f$ and the corresponding $\DCat$-monoid morphism $f= \Psi f_0^*: \Psi\Delta^*\ra\Psi\Sigma^*$. Since $D_\Delta$ is the free monoid on $\Delta$ in $W$, we get a unique $\DCat$-monoid morphism $g: V_\Delta\ra V_\Sigma$ with 
with $g\cdot e_{\Delta}=e_{\Sigma}\cdot f$. By Lemma \ref{lem:commdiag} we dually get a $T_\Delta$-coalgebra homomorphism $(W^\square\Sigma)^f \ra W^\square \Delta$, which implies that $W^\square$ is a subfunctor of $\rho T$ by Lemma \ref{lem:subfunc}.

From Proposition \ref{P:fin} and the definition of $(\mathord{-})^@$ and $(\mathord{-})^\square$ it is clear that $(V^@)^\square = V$. To show that $(W^\square)^@=W$, observe first that the varieties $(W^\square)^@$ and $W$ have by definition the same finitely generated free $\DCat$-monoids $D_\Sigma$, and hence contain the same finite $\DCat$-monoids. Moreover, both varieties are locally finite and hence form the closure (in the category $\Mon{\DCat}$) of their finite members under filtered colimits. It follows that $(W^\square)^@=W$, as claimed.

We conclude that $V\mapsto V^@$ defines a bijection between the lattices of object-finite varieties of languages (ordered by objectwise inclusion) and locally finite varieties of $\DCat$-monoids (ordered by inclusion). Moreover, clearly this bijection preserves and reflects the order, so it is a lattice isomorphism.
\end{proof}

We now turn to the Eilenberg theorem for simple varieties. For the proof we need to extend the right-derivative construction of Notation \ref{not:der}. 

\begin{XNotation}\label{not:qx} Let $(A,\alpha)$ be an $L_\Delta$-algebra and $e_A: \Psi\Delta^*\ra A$ the initial homomorphism, see Remark \ref{rem:assalg}. For $x\in\under{\Psi\Delta^*}$ let  $(A,\alpha)_x$ be the $L_\Delta$-algebra with the same states $A$, the same transitions $\alpha_a$, but initial state $e_A(x)$. For a finite $T_\Delta$-coalgebra $Q$ we define the $T_\Delta$-coalgebra $Q_x$ by
\[ \widehat{Q_x} = \widehat{Q}_{\rev_\Delta(x)}.\] 
\end{XNotation}

\begin{XLemma}\label{lem:frcom} Let $x\in \under{\Psi\Delta^*}$.

(a) Every homomorphism $h: Q\ra Q'$ between finite $T_\Delta$-coalgebras is also a homomorphism $h: Q_x\ra Q'_x$.

(b) If $Q\monoto \rho T_\Delta$ is a finite local variety, then a $T_\Delta$-coalgebra homomorphism from $Q_x$ to $Q$ exists for every $x\in\Psi\Delta^*$.

(c) For every finite $T_\Sigma$-coalgebra $Q$ and $\DCat$-monoid morphism $f:\Psi\Delta^*\ra \Psi\Sigma^*$ we have
\[ (Q^f)_x = (Q_{fx})^f.\]
\end{XLemma}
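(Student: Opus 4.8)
The plan is to prove all three parts on the dual side, exploiting that $Q^f$ and $Q_x$ are \emph{defined} by prescribing their duals ($\widehat{Q^f}=\widehat{Q}^{f^\dag}$ in Definition \ref{def:qf}, and $\widehat{Q_x}=\widehat{Q}_{\rev_\Delta(x)}$ in Notation \ref{not:qx}). Since every coalgebra occurring in (a)--(c) carries the same state object as $Q$, each claim reduces to an identity of, or to the existence of a homomorphism between, the corresponding $L_\Delta$-algebra structures on $\widehat{Q}$, where $(\mathord{-})^f$ and $(\mathord{-})_x$ act on $L$-algebras by Definition \ref{def:af} and Notation \ref{not:qx}.

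For (a), I would dualize $h$ to an $L_\Delta$-algebra homomorphism $\widehat h\colon\widehat{Q'}\to\widehat{Q}$. The subscripted algebras $\widehat{Q'}_{\rev_\Delta(x)}$ and $\widehat{Q}_{\rev_\Delta(x)}$ differ from $\widehat{Q'}$ and $\widehat{Q}$ only in their initial states, so it remains to check that $\widehat h$ respects these: initiality of $\Psi\Delta^*$ gives $\widehat h\cdot e_{\widehat{Q'}}=e_{\widehat{Q}}$, whence $\widehat h$ maps $e_{\widehat{Q'}}(\rev_\Delta x)$ to $e_{\widehat{Q}}(\rev_\Delta x)$, which are exactly the two new initial states. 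For (b), Proposition \ref{prop:rqcmon} presents $\widehat{Q}$ as the associated $L_\Delta$-algebra of a finite $\Delta$-generated $\DCat$-monoid $(D,\circ,i)$ with structure morphism $e_{\widehat{Q}}$; then left multiplication $d\mapsto e_{\widehat{Q}}(\rev_\Delta x)\circ d$ is a $\DCat$-endomorphism of $D$ (as $\circ$ is a bimorphism), commutes with the transitions $d\mapsto d\circ e_{\widehat{Q}}(a)$ by associativity, and sends $i$ to $e_{\widehat{Q}}(\rev_\Delta x)$, so it is an $L_\Delta$-algebra homomorphism $\widehat{Q}\to\widehat{Q}_{\rev_\Delta(x)}=\widehat{Q_x}$; its dual is the desired $T_\Delta$-coalgebra homomorphism $Q_x\to Q$.

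The substance is (c). Put $A=\widehat{Q}$ and $g=f^\dag$. Unfolding the two definitions gives $\widehat{(Q^f)_x}=(A^{g})_{\rev_\Delta(x)}$ and $\widehat{(Q_{fx})^f}=(A_{\rev_\Sigma(fx)})^{g}$, so it suffices to prove the $L_\Delta$-algebra identity
\[ (A^{g})_{\rev_\Delta(x)}=\bigl(A_{\rev_\Sigma(fx)}\bigr)^{g}. \]
Because each $\rev$ is an involution, $\rev_\Sigma(fx)=f^\dag(\rev_\Delta x)=g(z)$ with $z:=\rev_\Delta(x)$, reducing the goal to $(A^{g})_{z}=(A_{g(z)})^{g}$. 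Both algebras have state object $A$ and, for every $b\in\Delta$, transition $\alpha_{g(b)}$ (the subscript changes no transitions on the left, and $A_{g(z)}$ has the same transitions as $A$ on the right); only the initial states remain to be matched. By Lemma \ref{lem:pre}(c), $e_{A^{g}}=e_A\cdot g$, so the initial state of $(A^{g})_{z}$ is $e_{A^{g}}(z)=e_A(g(z))$, while by Definition \ref{def:af} the initial state of $(A_{g(z)})^{g}$ is that of $A_{g(z)}$, again $e_A(g(z))$. The two structures coincide, and dualizing yields $(Q^f)_x=(Q_{fx})^f$.

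The only genuine obstacle is bookkeeping across the dagger and the reversals: one must verify $\rev_\Sigma(fx)=f^\dag(\rev_\Delta x)$ --- equivalently $\rev_\Sigma\cdot f=f^\dag\cdot\rev_\Delta$, which is immediate from $f^\dag=\rev_\Sigma\cdot f\cdot\rev_\Delta$ and $\rev_\Delta\cdot\rev_\Delta=\id$ --- in order to convert the outer subscript $fx$ (living over $\Sigma$) into the subscript $g(z)$ (living over $\Delta$). Once this reversal identity and Lemma \ref{lem:pre}(c) are in place, all mathematical content collapses to the comparison of initial states, with no further computation required.
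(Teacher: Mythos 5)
Your proof is correct, and for parts (a) and (c) it coincides in substance with the paper's own argument: (a) is the same dualization plus the initiality observation $\widehat h \cdot e_{\widehat{Q'}} = e_{\widehat Q}$, and (c) rests on exactly the algebra-level identity the paper isolates as equation \eqref{eq:afx}, namely $(A^f)_x = (A_{fx})^f$, proved the same way (identical transitions $\alpha_{f(b)}$; initial states matched via Lemma \ref{lem:pre}(c)), followed by the same reversal bookkeeping $f^\dag\cdot\rev_\Delta = \rev_\Sigma\cdot f$ when passing to duals — your substitution $g=f^\dag$, $z=\rev_\Delta(x)$ is just a repackaging of the paper's chain of equalities. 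The one genuine divergence is part (b): the paper gives no proof and instead defers to \cite[Prop.~4.31]{GET1}, whereas you supply a self-contained argument — use Proposition \ref{prop:rqcmon} to present $\widehat Q$ as the associated $L_\Delta$-algebra of a finite $\Delta$-generated $\DCat$-monoid, then check that left multiplication by $e_{\widehat Q}(\rev_\Delta x)$ is a $\DCat$-endomorphism (by the bimorphism property of $\circ$), commutes with the right-multiplication transitions (associativity), and sends the unit to $e_{\widehat Q}(\rev_\Delta x)$, hence dualizes to the required homomorphism $Q_x\to Q$. This argument is correct and is the natural one; its benefit is making the lemma fully verifiable inside this paper rather than outsourcing the key step to the companion paper.
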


\begin{proof}
For (a) notice first that every homomorphism $h: (A, \alpha) \to (A',\alpha')$ of $L_\Delta$-algebras yields a homomorphism $h: (A,\alpha)_x \to (A', \alpha')_x$ since by initiality of $\Psi\Delta^*$ we have $h \cdot e_A(x) = e_{A'}(x)$. Now given a homomorphism $h: Q \to Q'$ between finite $T_\Delta$-coalgebras, we have its dual $L_\Delta$-algebra homomorphism $\wh h: \wh{Q'} \to \wh Q$. Hence, we have the $L_\Delta$-algebra homomorphism $\wh h: \wh{Q_x'} = \wh{Q'}_{\rev_\Delta(x)} \to \wh Q_{\rev_\Delta(x)} = \wh{Q_x}$ and by duality the desired $T_\Delta$-coalgebra homomorphism $h: Q_x \to Q_x'$. 

For (b) see \cite[Prop.~4.31]{GET1} and its proof. It remains to prove (c). We first prove that 
\begin{equation}\label{eq:afx}
(A^f)_x = (A_{fx})^f
\end{equation} for all $L_\Sigma$-algebras $A=(A,\alpha)$ and $\DCat$-monoid morphisms $f:\Psi\Delta^*\ra\Psi\Sigma^*$. Indeed, both $(A^f)_x$ and $(A_{fx})^f$ have states $A$ and transitions $\alpha_{fa}$ for $a\in\Delta$. Moreover, the initial state of $(A^f)_x$ is $e_{A^f}(x)$, and the initial state of $(A_{fx})^f$ is $e_A(fx)$. Hence, by Lemma \ref{lem:pre}(c), $(A^f)_x$ and $(A_{fx})^f$ have the same initial state.

Now let $Q$ be a finite $T_\Sigma$-coalgebra. Then
\begin{align*}
\widehat{(Q^f)_x} &= (\widehat{Q^f})_{\rev_\Delta(x)} & \text{(def. $(\mathord{-})_x$)}\\
&= (\widehat{Q}^{f^\dag})_{\rev_\Delta (x)} & \text{(def. $Q^f$)}\\
&= (\widehat{Q}_{f^\dag\rev_\Delta(x)})^{f^\dag} & \text{(by \eqref{eq:afx})}\\
&= (\widehat{Q}_{\rev_\Sigma\cdot f(x)})^{f^\dag} & \text{(def $f^\dag$)}\\
&= (\widehat{Q_{fx}})^{f^\dag} & \text{(def. $Q_{fx}$)}\\
&= \widehat{(Q_{fx})^f} & \text{(def. $(\mathord{-})^f$).}
\end{align*}
Hence $(Q^f)_x=(Q_{fx})^f$, as claimed.
\end{proof}

\begin{XConstruction}\ignorespaces\label{C:simple} 
Let $V$ be a variety of languages in ${\mathscr C}$
and ${\Sigma}$ a finite alphabet. Given a finite local
variety $i:Q\monoto V{\Sigma}$ we define a
subvariety $V'\monoto V$ as follows. To define
$V'{\Delta}$ for all finite $\Delta$, consider, for every ${\mathscr D}$-monoid
homomorphism $f:\Psi{\Delta}^*\to\Psi{\Sigma}^*$, the
$T_\Delta$-coalgebra homomorphism $\ol
Vf:(V{\Sigma})^f\to V{\Delta}$, see
Notation \ref{not:vat}. Then factorize the coalgebra homomorphism $[\ol Vf\cdot
i]:\coprod_{f:\Psi{\Delta}^*\to\Psi{\Sigma}^*}Q^f\to
V{\Delta}$ as in Remark~\ref{R:fact}: 
\begin{equation}\label{eq:fact}[\overline
Vf\cdot i]\,\,\equiv\,\,\vcenter\bgroup\hbox\bgroup$\xymatrix
{\coprod\limits_{\hbox to0pt{\hss$\scriptscriptstyle\mathstrut
f:\Psi{\Delta}^*\to\Psi{\Sigma}^*$\hss
}}^{\scriptscriptstyle\mathstrut}Q^f\ar@{->>}[r]^{e_{\Delta
}}&V'{\Delta}\ar@{ >->}[r]^-{m_{\Delta}}&V{\Delta
}}$.\egroup\egroup\end{equation} \end{XConstruction}

\begin{XLemma}\ignorespaces\label{L:simple} $V'$ is a subfunctor of
$V$ (via the $m_{\Delta}$'s) and forms a simple variety with
$Q\subseteq V'{\Sigma}$. \end{XLemma}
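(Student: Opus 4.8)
The plan is to show that the monomorphisms $m_\Delta:V'\Delta\monoto V\Delta$ of Construction~\ref{C:simple} present $V'$ as an object-finite variety of languages which is simple and contains $Q$ over $\Sigma$. The starting point is a concrete description of $V'\Delta$. By Proposition~\ref{prop:pre} each summand $\ol V f\cdot i:Q^f\to V\Delta$ has as image the finite language set $\{\,L\cdot f:L\in\under Q\,\}$, and since the $(\text{strong epi},\text{mono})$-factorization in $\Cat$ is the (surjective, injective) one, $V'\Delta$ is the $\Cat$-subalgebra of $V\Delta$ generated by
\[ G_\Delta \;=\; \bigcup_{f:\Psi\Delta^*\to\Psi\Sigma^*}\{\,L\cdot f:L\in\under Q\,\}, \]
the union ranging over all $\DCat$-monoid morphisms $f$. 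I would then verify in turn that $V'$ is a subfunctor of $V$ closed under preimages, that each $V'\Delta$ is a finite local variety, and that $V'$ is simple.

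First I would handle the subfunctor property and preimage closure together, as both follow from one factorization argument. Given any $\DCat$-monoid morphism $g:\Psi Y^*\to\Psi X^*$, applying $(\mathord{-})^g$ to the surjection $\coprod_f Q^f\epito V'X$ gives, by Lemma~\ref{lem:qfprops}(d),(e) together with the fact that $(\mathord{-})^g$ leaves underlying $\Cat$-objects and morphisms unchanged, a surjection $\coprod_f Q^{f\cdot g}\epito (V'X)^g$. Post-composing $(m_X)^g$ with $\ol V g:(VX)^g\to VY$ (which exists since $V$ is closed under preimages, Theorem~\ref{T:pre}), each summand $Q^{f\cdot g}\to VY$ is a coalgebra homomorphism from a finite coalgebra into $VY\hookrightarrow\rho T_Y$, hence by finality of $\rho T_Y$ equals $\ol V(f\cdot g)\cdot i$, which factors through $m_Y$ because $f\cdot g$ is one of the indices defining $V'Y$. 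As the summands are jointly epic, $\ol V g\cdot (m_X)^g$ factors through $m_Y$, and the homomorphism theorem (Remark~\ref{rem:homcoalg}) makes the factoring map a coalgebra homomorphism $(V'X)^g\to V'Y$. Taking $g=\Psi g_0^*$ and invoking Lemma~\ref{lem:subfunc} shows the $V'\Delta$ form a subfunctor of $\rho T$, hence of $V$ via the $m_\Delta$; taking arbitrary $g$ and invoking Theorem~\ref{T:pre} gives closure under preimages. The inclusion $Q\subseteq V'\Sigma$ is immediate from the summand $f=\id$, for which $Q^{\id}=Q$ and $\ol V\id\cdot i=i$.

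Next I would show $V'$ is object-finite with local-variety values. For object-finiteness the point is that $G_\Delta$ is finite: the dual of $Q^f$ is $\widehat Q^{f^\dag}$ (Definition~\ref{def:qf}), whose underlying object and initial state are those of the fixed finite algebra $\widehat Q$ and whose transitions are the endomorphisms $\alpha_{f^\dag(b)}\in[\widehat Q,\widehat Q]$ for $b\in\Delta$; since $[\widehat Q,\widehat Q]$ is finite, only finitely many coalgebras $Q^f$, and hence finitely many language sets $\{L\cdot f\}$, can occur, so $G_\Delta$ is finite. As $\Cat$ is locally finite (Assumption~\ref{A:basic}(i)) the subalgebra it generates, namely $V'\Delta$, is then finite. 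To see $V'\Delta$ is a local variety, note each image $\{L\cdot f\}$ is closed under right derivatives: for $a\in\Delta$ one computes $(L\cdot f)a^{-1}=\bigl(L\,f(a)^{-1}\bigr)\cdot f$, and $L\,f(a)^{-1}\in\under Q$ because the finite local variety $Q$ is closed under the right derivative by $f(a)\in\Psi\Sigma^*$ (Lemma~\ref{lem:frcom}(b)). Thus $G_\Delta$ is closed under right derivatives; since the right-derivative operation $L\mapsto La^{-1}$ is the underlying map of the unique coalgebra homomorphism $(\rho T_\Delta)_a\to\rho T_\Delta$ and hence a $\Cat$-homomorphism, the generated subalgebra $V'\Delta$ is closed under it as well, so $V'\Delta$ is a local variety (Definition~\ref{D:loc}).

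Finally, for simplicity let $W$ be any variety of languages with $V'\Sigma$ a local subvariety of $W\Sigma$. Then $Q\subseteq V'\Sigma\subseteq W\Sigma$, so every $L\in\under Q$ lies in $W\Sigma$, and closure of $W$ under preimages yields $L\cdot f\in W\Delta$ for all $f$; hence $G_\Delta\subseteq\under{W\Delta}$, the generated subalgebra $V'\Delta$ is contained in $W\Delta$, and $V'$ is a subfunctor of $W$. This establishes simplicity. I expect the main obstacle to be the object-finiteness step, precisely the passage from the infinitely many morphisms $f$ to finitely many preimage language sets: this is where the finiteness of $\widehat Q$ (equivalently of $[\widehat Q,\widehat Q]$) together with the local finiteness of $\Cat$ are indispensable.
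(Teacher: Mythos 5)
Your conclusions are all correct, and about half of your argument coincides with the paper's own proof: your subfunctor/preimage-closure step is the paper's diagonal fill-in in disguise (the paper uses the same reindexing $\coprod_f Q^{f\cdot g}=(\coprod_f Q^f)^g$ and fills in against $m_Y$; your ``jointly epic summands plus homomorphism theorem'' is the concrete form of that fill-in), the identification of the $f=\id$ summand for $Q\subseteq V'\Sigma$ is identical, and object-finiteness rests on the same counting of the finitely many coalgebras $Q^f$. Where you genuinely diverge is in working throughout with the explicit generating set $G_\Delta$, so that closure under right derivatives and simplicity become set-theoretic statements. This buys you a distinctly shorter simplicity proof: from $Q\subseteq V'\Sigma\subseteq W\Sigma$ and preimage closure of $W$ (Definition~\ref{D:var}) you get $G_\Delta\subseteq\under{W\Delta}$ and hence containment of generated subalgebras, whereas the paper needs another fill-in diagram. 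The description $V'\Delta=\langle G_\Delta\rangle$ itself is legitimate, by Example~\ref{E:pre} and the fact that the image of a morphism out of a coproduct is the subalgebra generated by the images of its components.

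The one step that is not justified as written is in the local-variety argument, where you assert $L\,f(a)^{-1}\in\under{Q}$ ``because the finite local variety $Q$ is closed under the right derivative by $f(a)\in\Psi\Sigma^*$ (Lemma~\ref{lem:frcom}(b))''. Lemma~\ref{lem:frcom}(b) does not say this: it asserts only the existence of a $T_\Sigma$-coalgebra homomorphism $k_x\colon Q_x\to Q$, and $Q_x$ is defined purely by duality (Notation~\ref{not:qx}), with no stated relationship to derivatives of languages. To extract your closure statement you must additionally prove that the state $L$ of $Q_x$ accepts exactly $L\cdot r_x$, where $r_x(y)=y\bullet x$; only then does composing $k_x$ with $Q\monoto\rho T_\Sigma$ and invoking finality yield $L\cdot r_x\in\under{Q}$. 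This identification is true, and provable by a short duality computation using Lemma~\ref{L:rev} together with $e_{A_y}=e_A\cdot l_y$ for $A=\widehat{Q}$ (from initiality and Remark~\ref{rem:assalg}) and $l_{\rev_\Sigma(x)}\cdot\rev_\Sigma=\rev_\Sigma\cdot r_x$, but it appears nowhere in the paper --- indeed the paper's own proof is structured precisely to avoid it, staying at the level of homomorphisms via Lemma~\ref{lem:frcom}(b),(c) and a fill-in producing $(V'\Delta)_a\to V'\Delta$, so that only the single-letter case (Proposition~\ref{P:right}) ever needs a language-level reading. Your auxiliary formula $(L\cdot f)a^{-1}=(L\,f(a)^{-1})\cdot f$ is fine (it follows from $f\cdot r_a=r_{f(a)}\cdot f$, valid since $f$ is a monoid morphism), as is the observation that $L\mapsto La^{-1}$ is a $\Cat$-morphism on $\rho T_\Delta$, so once the missing identification is supplied your proof is complete.
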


\begin{proof} (1)~For every finite alphabet $\Gamma$ and every ${\mathscr D}$-monoid
morphism $g:\Psi{\Gamma}^*\to\Psi{\Delta}^*$ we
prove that there exists a $T_\Gamma$-coalgebra homomorphism
\begin{equation}
\label{eq:4.2}g':(V'{\Delta})^g\to
V'{\Gamma}\quad\hbox{with}\quad m_{\Gamma}\cdot
g'=\ol Vg\cdot m_{\Delta}.
\end{equation} 
By Lemma \ref{lem:subfunc}, this implies in particular that $V'$ is a subfunctor of $V$. Denote by
$p:\coprod_{f:\Psi{\Delta}^*\to\Psi{\Sigma}^*}Q^{f\cdot g}\to
\coprod_{h:\Psi{\Gamma}^*\to\Psi{\Sigma}^*}Q^h$ the $T_\Gamma$-coalgebra
homomorphism whose $f$-component is the coproduct injection of
$h=f\cdot g$. Note that $\coprod_f Q^{f\cdot g} = (\coprod_f Q^f)^g$ by Lemma \ref{lem:qfprops}.
Hence we have the following diagram of $T_\Gamma$-coalgebra homomorphisms:
\begin{equation}\label{eq:g'}
\vcenter\bgroup\hbox\bgroup$\xymatrix{(\coprod_fQ^f)^g\ar
@{->>}[r]^{e_{\Delta}}\ar[d]_p&(V'{\Delta})^g\ar@{ >->}[r]^-{m_{\Delta}}\ar@{-->}[d]^{g'}&(V{\Delta})^g\ar
[d]^{\ol Vg}\\\coprod_hQ^h\ar@{->>}[r]_{e_{\Gamma
}}&V'{\Gamma}\ar@{ >->}[r]_-{m_{\Gamma}}&V{\Gamma
}}$\egroup\egroup\end{equation} 
The outside of the diagram commutes because the $f$-components of the upper and lower path are $T_\Gamma$-coalgebra homomorphisms from the finite $T_\Gamma$-coalgebra $Q^{f\cdot g}$ to $V\Gamma \monoto \rho T_\Gamma$, and $\rho T_\Gamma$ is the terminal locally finite coalgebra. The desired $T_\Gamma$-coalgebra homomorphism $g'$ is
obtained via diagonal fill-in in $\Coalg{T_\Gamma}$, see Remark~\ref{R:fact}.

(2)~$V'{\Delta}$ is a local variety for
every ${\Delta}$. Indeed, by definition $V'\Delta$ it is a subcoalgebra of $V\Delta$ and hence of $\rho T_\Delta$. To prove closure under right derivatives, use
Proposition~\ref{P:right}: since $V\Delta$ and $Q$ are local varieties, we have $T_\Delta$-coalgebra
homomorphisms $h_a:(V{\Delta})_a\to V{\Delta}$ for all $a\in\Delta$ and
$T_\Sigma$-coalgebra homomorphisms 
$k_x:Q_x\to Q$ for all $x\in\Psi\Sigma^*$, see Lemma \ref{lem:frcom}(b). Moreover,
\[ (\coprod_f Q^f)_a = \coprod_f (Q^f)_a = \coprod_f (Q_{fa})^f \]
by Lemma \ref{lem:frcom}(c) and since the construction $(\mathord{-})_a$ clearly commutes with coproducts. Hence we have the following diagram of $T_\Delta$-coalgebra homomorphisms
\[\vcenter\bgroup\hbox\bgroup$\xymatrix{(\coprod\limits
_{\hbox to0pt{\hss$\scriptscriptstyle\mathstrut
f:\Psi{\Delta}^*\to\Psi{\Sigma}^*$\hss
}}^{\scriptscriptstyle\mathstrut}Q^f)_a\ar@{->>}[r]^{e_{\fam
1\Delta}}\ar[d]_{\coprod_fk_{fa}}&(V'{\Delta})_a\ar@{
>->}[r]^{m_{\Delta}}\ar@{-->}[d]^{h'_a}&(V{\Delta
})_a\ar[d]^{h_a}\\\coprod\limits_{\hbox to0pt{\hss
$\scriptscriptstyle\mathstrut f:\Psi{\Delta}^*\to\Psi{\fam
1\Sigma}^*$\hss}}^{\scriptscriptstyle\mathstrut}Q^f\ar
@{->>}[r]_{e_{\Delta}}&V'{\Delta}\ar@{
>->}[r]_{m_{\Delta}}&V{\Delta}}$.\egroup\egroup\] 
whose outside commutes by a finality argument analogous to (1). Diagonal fill-in yields a $T_\Delta$-coalgebra homomorphism $h_a': (V'\Delta)_a \ra V'\Delta$, which shows that $V'\Delta$ is closed under right derivatives by Proposition \ref{P:right}.

(3) $V'$ is a variety of languages. Indeed,
apply Theorem~\ref{T:pre} to conclude from \eqref{eq:4.2}
that $V'$ is closed under preimages. Moreover $Q\subseteq V'{\Sigma}$, due to the
possibility of choosing $f=\id_{\Psi \Sigma^*}$ in
(\ref{eq:fact}) for the case ${\Delta}={\Sigma}$.

(4) $V'$ is object-finite. Note first that for every
finite alphabet ${\Delta}$ there exist only finitely many
preimages $Q^f$, where
$f:\Psi{\Delta}^*\to\Psi{\Sigma}^*$ ranges over
all ${\mathscr D}$-monoid morphisms: indeed, $Q$ is finite and the coalgebra $Q^f$ has the same set of states $Q$. Choose $f_1$, \dots,~$f_n$
such that each $Q^f$ is equal to $Q^{f_i}$ for some $i$. Then consider the $T_\Delta$-coalgebra homomorphism $t:\coprod_{f:\Psi{
\Delta}^*\to\Psi{\Sigma}^*}Q^f\to\coprod_{i=1}^nQ^{f_i}$ whose $f$-component is the coproduct injection of $Q^{f_i}$
whenever $Q^f=Q^{f_i}$. Then $e_{\Delta}=u\cdot t$ for the
obvious morphism $u:\coprod_{i=1}^nQ^{f_i}\to V'{\Delta}$.
Thus, $u$ is surjective since $e_{\Delta}$ is,
proving that $V'{\Delta}$ is finite (being a quotient of
the finite coalgebra $\coprod_{i=1}^nQ^{f_i}$).

(5) $V'$ is simple. Indeed, given a variety
$V''$ of languages with $j:Q\monoto V''\Sigma$ a local
subvariety, we prove $V'\monoto V''$. Denote by
$u'_{\Delta}:V'{\Delta}\monoto\rho{T_{
\Delta}}$ and $u''_{\Delta}:V''\Delta\monoto\rho{T_{
\Delta}}$ the embeddings and by $\overline{V''}f:(V''{
\Sigma})^f\to V''{\Delta}$ the $T_\Delta$-coalgebra
homomorphism of Theorem~\ref{T:pre}. Then the square
\[\vcenter\bgroup\hbox\bgroup$\xymatrix{\coprod\limits_{\hbox
to0pt{\hss$\scriptscriptstyle\mathstrut f:\Psi{\Delta}^*\to
\Psi{\Sigma}^*$\hss}}^{\scriptscriptstyle\mathstrut}Q^f\ar
@{->>}[r]^{e_{\Delta}}\ar[d]_{[\overline{V''}f\cdot
j]}&V'{\Delta}\ar@{ >->}[d]^{u'_\Delta}\ar@{-->}[dl]\\V''{\fam
1\Delta}\ar@{ >->}[r]_{u_\Delta''}&\rho{T_{\Delta}}}$\egroup
\egroup\] commutes due to $\rho{T_{\Delta}}$ being the
terminal locally finite $T_\Delta$-coalgebra. Diagonal
fill-in yields the desired embedding $V'{\Delta}
\monoto V''{\Delta}$. \end{proof}

\begin{XRemark}
For
every locally finite variety $W$ of ${\mathscr D}$-monoids the
set $W_{ f}$ of all finite members of $W$ is clearly a
pseudovariety of $\DCat$-monoids.  Conversely, for every pseudovariety $W$, we denote by $\langle W\rangle$ the variety generated by $W$, i.e., the closure of $W$ under (infinite) products, submonoids and quotient monoids.
\end{XRemark}

\begin{XLemma}\label{lem:simptolocfin}
For every simple pseudovariety $W$ of $\DCat$-monoids, the variety $\langle W \rangle$ is locally finite and $\langle W\rangle_f = W$. In particular, the simple pseudovarieties of $\DCat$-monoids form a (full) subposet of all locally finite varieties of $\DCat$-monoids via the order-embedding $W\mapsto \langle W \rangle$.
\end{XLemma}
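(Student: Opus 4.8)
The plan is to reduce the statement to the classical fact that a finite algebra generates a locally finite variety. Since $W$ is simple, fix a finite $\DCat$-monoid $D$ generating it, so that $W$ is the pseudovariety generated by $D$. First I would identify $\langle W\rangle$ with $\langle\{D\}\rangle$, the closure of the one-element class $\{D\}$ under (infinite) products, submonoids and quotients: on the one hand $D\in W\subseteq\langle W\rangle$ gives $\langle\{D\}\rangle\subseteq\langle W\rangle$, and on the other hand every member of $W$ already lies in $\langle\{D\}\rangle$, so the variety $\langle W\rangle$ generated by $W$ satisfies $\langle W\rangle\subseteq\langle\{D\}\rangle$. Thus $\langle W\rangle=\langle\{D\}\rangle$.

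To show $\langle W\rangle$ is locally finite I would bound the free monoid on a finite alphabet. Let $F$ be the free $\DCat$-monoid on a finite set $\Sigma$ in the variety $\langle\{D\}\rangle$. Because $F$ is free, the $\DCat$-monoid morphisms $\varphi\colon F\ra D$ are in bijection with the maps $\Sigma\ra\under D$, of which there are only $\under D^{\,|\Sigma|}$ many; let $\varphi_1,\dots,\varphi_k$ enumerate them. Since $F$ lies in the variety generated by $D$, these morphisms are jointly injective, so $\langle\varphi_1,\dots,\varphi_k\rangle\colon F\monoto D^k$ exhibits $F$ as a submonoid of the finite power $D^k$. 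Hence $F$ is finite, which is exactly local finiteness of $\langle\{D\}\rangle$. (In the ordered case one additionally uses Bloom's ordered HSP theorem to see that the order on $F$ is precisely the one forced by the morphisms into $D$, so that $\langle\varphi_1,\dots,\varphi_k\rangle$ is an order-embedding.) In particular the free monoid on $\Sigma$ in $\langle W\rangle$ is a submonoid of a finite power of $D$.

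Next I would establish $\langle W\rangle_f=W$. The inclusion $W\subseteq\langle W\rangle_f$ is immediate, since each member of $W$ is a finite $\DCat$-monoid lying in $\langle W\rangle$. For the converse let $M$ be a finite member of $\langle\{D\}\rangle$. As $M$ is finite it is generated by its underlying set, hence a quotient of the free monoid $F$ on $\under M$ in $\langle\{D\}\rangle$; by the previous paragraph $F$ is a submonoid of a finite power of $D$. Therefore $M$ is a quotient of a submonoid of a finite power of $D$, that is, a finite member of the pseudovariety generated by $D$, namely $W$. (This is the standard identification of the finite members of the variety generated by a finite algebra with the pseudovariety it generates.)

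Finally the embedding statement is formal. The assignment $W\mapsto\langle W\rangle$ is monotone, as $\langle\cdot\rangle$ is a closure operator, and it is order-reflecting: for simple pseudovarieties $W_1,W_2$ an inclusion $\langle W_1\rangle\subseteq\langle W_2\rangle$ yields $W_1=\langle W_1\rangle_f\subseteq\langle W_2\rangle_f=W_2$ by passing to finite members. Hence $W\mapsto\langle W\rangle$ is an order-embedding of the poset of simple pseudovarieties into that of all locally finite varieties of $\DCat$-monoids. The main obstacle is the local-finiteness step: one must check that the ``finitely many term functions on the finite monoid $D$'' argument survives the passage to the possibly ordered setting, which is precisely where Bloom's ordered version of the HSP theorem enters.
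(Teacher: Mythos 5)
Your proof is correct and follows essentially the same route as the paper: both arguments show that the free $\DCat$-monoid on a finite set $\Sigma$ in $\langle D\rangle$ is a submonoid of the finite power $D^{\under{D}^\Sigma}$ (hence finite and in $W$), and then deduce $\langle W\rangle_f = W$ from the fact that every finite member of $\langle W\rangle$ is a quotient of such a free monoid. The only presentational difference is that the paper constructs the free monoid directly as the image of $\Psi\Sigma^*\ra D^{\under{D}^\Sigma}$ and verifies its universal property, whereas you assume the free monoid exists and embed it via the jointly injective (and, in the ordered case, jointly order-reflecting) family of morphisms into $D$ --- two formulations of the same Birkhoff-style argument, with your explicit appeal to Bloom's theorem for the ordered case being handled in the paper by the (epi, strong mono) factorization in $\Mon{\DCat}$.
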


\begin{proof}
Suppose that the pseudovariety $W$ is generated by the finite $\DCat$-monoid $D$. Then it is easy to see that also the variety $\langle W \rangle$ is generated by $D$, i.e., $\langle W \rangle = \langle D \rangle$. Fix a finite set $\Sigma$ and consider all functions $u: \Sigma\ra \under{D}$. They define a function $\langle u \rangle: \Sigma \ra \under{D}^{\under{D}^\Sigma}$ that extends uniquely to a $\DCat$-monoid morphism $g: \Psi\Sigma^*\ra D^{\under{D}^\Sigma}$. Letting $g= m\cdot e$ be is its factorization in $\Mon{\DCat}$, we get the commutative diagram below:
\[
\xymatrix{
\Sigma \ar[d]_u \ar[dr]^{\langle u\rangle} \ar@{ >->}[r]^\eta & \Psi\Sigma^* \ar[d]^g \ar@{->>}[r]^e & F\Sigma \ar@{>->}[dl]^m \\
D & D^{\under{D}^\Sigma} \ar[l]^{\pi_u} &
}
\]
This shows that $F\Sigma$ (with universal map $e\cdot \eta$) is the free $\Sigma$-generated $\DCat$-monoid in $\langle W \rangle$: it has the universal property w.r.t. $D$ by the above diagram, and hence it has the universal property w.r.t. all monoids in $\langle W \rangle = \langle D\rangle$.
Moreover $F\Sigma$ lies in $W$, being a submonoid of a finite power of $D$. This implies that every finite monoid in $\langle W \rangle$ lies in $W$, since it is a quotient of a finitely generated free monoid.

We conclude that $\langle W \rangle$ is locally finite and $\langle W\rangle_f = W$, and this equation implies immediately that $W\mapsto \langle W\rangle$ is an injective order-embedding
\end{proof}

\begin{proof}[Proof of Theorem \ref{T:EilenbergSim}]
Recall the isomorphism $V\mapsto V^@$ between object-finite varieties
of languages in $\Cat$ and locally finite varieties of ${\mathscr
D}$-monoids from the proof of Theorem~\ref{T:Eilenberg}. In view of Lemma \ref{lem:simptolocfin} it suffices to show that this isomorphism restricts to one between simple varieties of languages and simple
pseudovarieties of ${\mathscr D}$-monoids, that is, 
\[\text{$V$ is simple iff $(V^@)_f$ is simple.}\]

($\Rightarrow$) If $V$ is a simple variety of languages, generated
by ${\Sigma}$, we prove that the pseudovariety $(V^@)_f$ is  generated by the $\DCat$-monoid $\widehat{V\Sigma}$. First apply Construction~\ref{C:simple} to the finite local variety $Q = V\Sigma$. Then for the resulting variety $V'$ we have $V' = V$. Indeed, $V'\seq V$ follows from Lemma~\ref{L:simple}, and  $V\seq V'$ holds because $V$ is generated by $\Sigma$. It follows that for every $\Delta$ the morphism $m_\Delta$ in Construction \ref{C:simple} is an isomorphism, or equivalently, the family of morphisms $\ol V f: (V\Sigma)^f \to V\Delta$, where $f$ ranges over all $\DCat$-monoid morphisms $f: \Psi\Delta^* \to \Psi\Sigma^*$ is collectively strongly epimorphic in $\Cat$.
%
As in the the proof of
Lemma~\ref{L:simple} we choose finitely many homomorphisms
$f_1$, \dots, $f_n:\Psi{\Delta}^*\to\Psi{\Sigma}^*$
with $h=[\ol Vf_i]:\coprod_{i=1}^n(V{\Sigma})^{f_i}\epito
V{\Delta}$ a strong epimorphism. Dually, by Lemma \ref{lem:commdiag} we get a
$\DCat$-submonoid $\widehat{h}: \widehat{V\Delta} \monoto \prod_{i=1}^n \widehat{V\Sigma}$. We
conclude that every finitely generated free monoid
$\widehat{V\Delta}$ of the pseudovariety $V^@$ is a $\DCat$-submonoid of a
finite power of $\widehat{V\Sigma}$. Consequently $(V^@)_f$ is generated by
$\widehat{V\Sigma}$.

($\Leftarrow$) If $V$ is an object-finite variety of languages
such that $(V^@)_f$ (and hence also $V^@$) is generated by a single finite ${\mathscr
D}$-monoid $D$, we prove that $V$ is simple. Put
${\Sigma}=\mathopen|D\mathclose|$, then since $D$ is a
quotient of the free $\Sigma$-generated monoid $\widehat{V\Sigma}$ in $V^@$, it follows that the pseudovariety
$(V^@)_f$ is also generated by $\widehat{V\Sigma}$. Thus, every ${\mathscr D}$-monoid
in $(V^@)_f$ is a quotient of a submonoid of a finite power
$\widehat{V\Sigma}^n$. Consequently, every free algebra
$\widehat{V\Delta}$ of $V^@$ is a submonoid of a finite power
$\widehat{V\Sigma}^n$. (Indeed, given a quotient $e: D'\epito \widehat{V\Delta}$ and a submonoid
$i: D'\monoto \widehat{V\Sigma}^n$, choose a splitting $u: \under{\widehat{V\Delta}} \ra \under{D'}$, $e\cdot u = \id$, in $\Set$.
Since $\widehat{V\Delta}$ is free, we get a ${\mathscr D}$-monoid morphism
$h: \widehat{V\Delta}\ra D'$ which on the generators coincides with $u$. Then
$e\cdot u = \id$ implies $e\cdot h={\id}$, hence
$i\cdot h: \widehat{V\Delta}\ra \widehat{V\Sigma}^n$ is a submonoid.) Consequently, by composition with the projections $\widehat{V\Sigma}^n \ra V\Sigma$ we obtain a
collectively monic collection $g_1,\ldots, g_n: \widehat{V\Delta}\ra \widehat{V\Sigma}$ of ${\mathscr D}$-monoid morphisms. Choose ${\mathscr
D}$-monoid morphisms $f_1$, \dots,
$f_n:\Psi{\Delta}^*\to\Psi{\Sigma}^*$ with
$e_{\Sigma}\cdot f_i=g_i\cdot e_{\Delta}$ (by starting
with a splitting $e_{\Sigma}\cdot v=\id$ in
$\Set$ and extending $v\cdot g_i\cdot
e_{\Delta}\cdot\eta:{\Delta}\to\mathopen|\Psi{
\Sigma}^*\mathclose|$ to a ${\mathscr D}$-monoid morphism).
By Lemma \ref{lem:commdiag} we get a collection of $T_\Delta$-coalgebra homomorphisms
$h_i: (V{\Sigma})^{f_i^\dag}\to V{\Delta}$ that is collectively
strongly epic. Hence the corresponding homomorphism
$[h_i]:\coprod_{i=1}^n(V{\Sigma})^{f_i^\dag}\epito V{\Delta}$
is a strong epimorphism in $\Cat$. We conclude that $V$ is a simple variety generated by
${\Sigma}$: if $V'$ is any variety such
$j: V{\Sigma}\monoto V'{\Sigma}$ is a local subvariety, then $j$ is a $T_\Delta$-coalgebra homomorphism
$j: (V{\Sigma})^{f_i^\dag}\monoto (V'{\Sigma})^{f_i^\dag}$ for
every $i$ by Lemma \ref{lem:qfprops}. Thus we have the diagram of $T_\Delta$-coalgebra homomorphisms
\[
\xymatrix{
\coprod_i (V\Sigma)^{f_i^\dag} \ar[d]_{\coprod j} \ar[r]^<<<<<{[h_i]} & V\Delta \ar@{>->}[dr] \ar@{-->}[d] & \\
\coprod_i (V'\Sigma)^{f_i^\dag} \ar[r] & V'\Delta \ar@{>->}[r] & \rho T_\Delta
}
\]
where the morphism $\coprod_i (V'\Sigma)^{f_i^\dag} \ra V'\Delta$ exists by closure of $V'$ under preimages. Diagonal fill-in shows that $V\Delta\monoto V'\Delta$.
\end{proof}

\begin{XLemma}\label{lem:simvargen}
Let $V$ be a simple variety of languages in $\Cat$. If $V$ is generated by an alphabet $\Sigma$, then $V$ is generated by any alphabet $\Delta$ with $\under{\Delta}\geq \under{\Sigma}$.
\end{XLemma}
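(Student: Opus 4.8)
The plan is to prove the statement directly from the definition of ``generated by $\Sigma$'', reducing everything to a single computation with the closure under preimages (Definition~\ref{D:var}) enjoyed by every variety of languages. Assuming $\Sigma\neq\emptyset$ (the case $\Sigma=\emptyset$ forces a separate, degenerate treatment, since then no surjection $\Delta\epito\Sigma$ exists unless $\Delta=\emptyset$), I would first use $\under{\Delta}\ge\under{\Sigma}$ to fix a surjection $p:\Delta\epito\Sigma$ together with a section $s:\Sigma\monoto\Delta$, so that $p\cdot s=\id_\Sigma$. To show that $V$ is generated by $\Delta$, I take an arbitrary variety of languages $V'$ in $\Cat$ for which $V\Delta$ is a local subvariety of $V'\Delta$, and aim to deduce that $V\Sigma$ is a local subvariety of $V'\Sigma$. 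Once this is established, the simplicity of $V$ (i.e.\ its generation by $\Sigma$) yields $V\monoto V'$, and since $V'$ was arbitrary this is exactly the assertion that $V$ is generated by $\Delta$.

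The heart of the argument is a ``transport'' of a language from $\Sigma$ to $\Delta$ and back. The functions $p,s$ induce $\DCat$-monoid morphisms $\Psi p^*:\Psi\Delta^*\ra\Psi\Sigma^*$ and $\Psi s^*:\Psi\Sigma^*\ra\Psi\Delta^*$ (Proposition~\ref{P:free}), and on the generators $\Sigma$ the composite $\Psi p^*\cdot\Psi s^*$ acts as the identity; by freeness of $\Psi\Sigma^*$ this gives the identity $\Psi p^*\cdot\Psi s^*=\id_{\Psi\Sigma^*}$. Now for any $L\in V\Sigma$, viewed as a morphism $L:\Psi\Sigma^*\ra O_\DCat$, closure of $V$ under preimages yields $L\cdot\Psi p^*\in V\Delta$; the hypothesis $V\Delta\seq V'\Delta$ then gives $L\cdot\Psi p^*\in V'\Delta$; and closure of $V'$ under preimages, applied to $\Psi s^*$, gives $(L\cdot\Psi p^*)\cdot\Psi s^*\in V'\Sigma$. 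Since $(L\cdot\Psi p^*)\cdot\Psi s^*=L\cdot(\Psi p^*\cdot\Psi s^*)=L$, I conclude $L\in V'\Sigma$. As $L$ was arbitrary, $V\Sigma\seq V'\Sigma$ as subcoalgebras of $\rho T_\Sigma$, i.e.\ $V\Sigma$ is a local subvariety of $V'\Sigma$, which completes the reduction.

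The one point that requires care is that the test variety $V'$ is \emph{not} assumed object-finite, so one cannot pass to dual finite $\DCat$-monoids $\widehat{V'\Sigma}$ and argue on the monoid side; this is why I would keep the whole argument on the language side, invoking only the preimage-closure property common to all varieties of languages. The essential mechanism is the cancellation $\Psi p^*\cdot\Psi s^*=\id$ coming from $p\cdot s=\id_\Sigma$, which makes the two successive preimages collapse back to $L$. I expect the only genuine obstacles to be bookkeeping: confirming that a single section $s$ with $p\cdot s=\id_\Sigma$ suffices (a surjection onto a nonempty set always admits one) and that the two preimage steps indeed compose on the nose, together with dispatching the empty-alphabet boundary case separately.
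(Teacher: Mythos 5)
Your main argument is, for the case $\Sigma\neq\emptyset$, exactly the paper's proof: the paper likewise fixes $e_0:\Delta\epito\Sigma$ and $m_0:\Sigma\monoto\Delta$ with $e_0\cdot m_0=\id$, passes to the induced $\DCat$-monoid morphisms $e=\Psi e_0^*$ and $m=\Psi m_0^*$, and runs the identical two-step preimage computation: $L\cdot e\in V\Delta\seq V'\Delta$ by preimage-closure of $V$, hence $L=L\cdot e\cdot m\in V'\Sigma$ by preimage-closure of $V'$. Your decision to stay entirely on the language side and never dualize to finite $\DCat$-monoids also matches the paper, and your observation that $V'$ need not be object-finite is precisely why this is the right move.

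The one genuine loose end is the case $\Sigma=\emptyset$, which you repeatedly defer but never dispatch. It is part of the statement (nothing in the definition of a simple variety excludes generation by the empty alphabet, and the lemma is invoked in the proof of Theorem~\ref{T:Eil} exactly to regenerate two given simple varieties from a common, possibly larger, alphabet). Your mechanism really does break there: no surjection $\Delta\epito\emptyset$ exists for $\Delta\neq\emptyset$, and the case is not vacuous, so it needs a replacement idea rather than a routine dispatch. The paper's fix is to exploit that closure under preimages (Definition~\ref{D:var}) is stated for \emph{arbitrary} $\DCat$-monoid morphisms $f:\Psi\Delta^*\to\Psi\Sigma^*$, not only those of the form $\Psi h^*$ for a letter function $h$. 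Since $\emptyset^*=\{\epsilon\}$ is the trivial monoid, the unique monoid morphisms $e':\Delta^*\to\emptyset^*$ and $m':\emptyset^*\to\Delta^*$ satisfy $e'\cdot m'=\id$; applying $\Psi$ gives $\DCat$-monoid morphisms $e=\Psi e'$ and $m=\Psi m'$ with $e\cdot m=\id$ (one checks these are indeed monoid morphisms), and then your two-line cancellation argument closes this case verbatim. With that supplement your proof is complete and coincides with the paper's.
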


\begin{proof} Let $V'$ be any variety with $V\Delta\seq V'\Delta$. Since $V$ is generated by $\Sigma$, it suffices to show $V\Sigma\seq V'\Sigma$ -- then $V\seq V'$ follows. Observe first that there exist $\DCat$-monoid morphisms $e: \Psi\Delta^*\epito \Psi\Sigma^*$ and $m: \Psi\Sigma^*\hookto \Psi\Delta^*$ with $e\cdot m = \id$. Indeed, if $\Sigma\neq \emptyset$, choose functions $m_0: \Sigma\hookto \Delta$ and $e_0: \Delta\epito \Sigma$ with $e_0\cdot m_0 = \id$ in $\Set$ and put $e=\Psi e_0^*$ and $m=\Psi m_0^*$. If $\Sigma=\emptyset$, consider the two (unique) monoid morphisms $m': \emptyset^*  \ra \Delta^*$ and $e': \Delta^*\ra \emptyset^*$  (satisfying $e'\cdot m' = \id$), and put $e=\Psi e'$ and $m=\Psi m'$. It is easy to see that $e$ and $m$ are indeed $\DCat$-monoid morphisms.

Now let $L\in V\Sigma$. By closure of $V$ under preimages we have $L\cdot e\in V\Delta \seq V'\Delta$. Since $V'$ is also closed under preimages, we conclude $L = L\cdot e \cdot m\in V'\Sigma$ and thus $V\Sigma\seq V'\Sigma$.
\end{proof}

\begin{proof}[Proof of Theorem \ref{T:Eil}] 
(1) Let $\mathscr L_{\mathscr C}$
denote the poset of all varieties of languages in ${\mathscr C}$
and $\mathscr L_{\mathscr C}^0$ its subposet of all simple ones.
We prove that $\mathscr L_{\mathscr C}$ is the free
\cpo-completion of $\mathscr L_{\mathscr C}^0$. Note first that $\mathscr L_{\mathscr C}$ is a complete lattice (in particular, a \cpo)
because an objectwise intersection of varieties of languages
$V_i$ ($i\in I$) is a variety $V$. Indeed, the functor
${T_{\Sigma}}Q={O_{\mathscr C}}\times Q^{\Sigma}$
clearly preserves (wide) intersections, thus an intersection of
subcoalgebras of $\rho{T_{\Sigma}}$ in ${\mathscr C}$ is
again a subcoalgebra. And since ${\mathscr C}$ is a variety of
algebras, intersections in ${\mathscr C}$ are formed on the
level of $\Set$. Now, from $V{\Sigma}=\bigcap_{i
\in I}V_i{\Sigma}$ it clearly follows that $V{\Sigma}$
is closed under derivatives. And closure under preimages is also
clear: given $L$ in $\mathopen|V{\Sigma}\mathclose|$ and
$f:\Psi{\Delta}^*\to\Psi{\Sigma}^*$ in $\Mon{\DCat}$, we have $L\cdot f$ in
$\mathopen|V_i{\Delta}\mathclose|$ for all $i$, thus
$L\cdot f\in\mathopen|V{\Delta}\mathclose|$.

Observe that also an objectwise directed union of varieties is a
variety. The argument is the same: since ${\mathscr C}$ is a
variety of algebras, directed unions are formed on the level of
${\Set}$.

It remains to verify the conditions (C1) and (C2) of a free \cpo-completion.

(C1) Every simple variety $V$ is compact in $\mathscr
L_{\mathscr C}$. Indeed, suppose that $V$ is generated by
${\Sigma}$, and let $V'=\bigcup_{i\in I}V_i$ be a directed union with $V\seq V'$.
Then $V{\Sigma}$ a local subvariety of $V'{\Sigma}$,
and since $V{\Sigma}$ is finite and $\mathopen|V'{\Sigma}
\mathclose|=\bigcup_{i\in I}\mathopen|V_i{\Sigma}\mathclose
|$ in ${\Set}$, there exists $i$ with $V{\Sigma}$ a
local subvariety of $V_i{\Sigma}$. Therefore $V\subseteq
V_i$ because $V$ is simple.

(C2)\enspace Every variety $V$ of languages is the directed join (i.e.~directed union) of its simple subvarieties. 
 To show that the set of all simple subvarieties of $V$ is indeed directed, suppose that two simple subvarieties $V_0$ and $V_1$ of $V$ are given. By Lemma \ref{lem:simvargen} we can assume that both varieties are generated by the same alphabet $\Sigma$. In \cite[Corollary 4.5]{GET1} we proved that any finite subset of a local variety $Q'\monoto \rho T_\Sigma$ is contained in a finite local subvariety $Q\monoto Q'$. Letting $Q'=V\Sigma$, this implies that $V\Sigma$ has a finite local subvariety $Q$ containing $V_0\Sigma\cup V_1\Sigma$. Now apply
Construction~\ref{C:simple} to $Q$ to get a simple
variety $V'\subseteq V$ containing $V_0$ and $V_1$.

Finally, by  \cite[Corollary 4.5]{GET1} and Construction \ref{C:simple} again, every language in $V$ is contained in a simple subvariety of $V$. Hence $V$ is the desired directed union.

(2) Let $\mathscr L_{\mathscr D}$ denote the
poset of all pseudovarieties of ${\mathscr D}$-monoids and
$\mathscr L_{\mathscr D}^0$ its subposet of all simple
ones. We again prove that $\mathscr L_{\mathscr
D}$ is a free \cpo-completion of $\mathscr L_{\mathscr D}^0$. First, $\mathscr L_{\mathscr D}$ is a complete lattice (in particular, a \cpo)
because an intersection of pseudovarieties is a pseudovariety.
Observe that also a directed union of pseudovarieties is a
pseudovariety. It remains to verify (C1) and (C2).

(C1) Every simple pseudovariety $W$, generated by a finite
${\mathscr D}$-monoid $D$, is compact in $\mathscr L_{\mathscr
D}$. Indeed, if a directed join (i.e.~directed union) of pseudovarieties
$\bigcup_{i\in I}W_i$ contains $W$ then some $W_i$ contains $D$ and hence $W$.

(C2)\enspace Every pseudovariety $W$ of ${\mathscr D}$-monoids is
a directed union of simple ones. Indeed, clearly $W$ is the union of its simple subvarieties, and this union is directed because two simple subvarieties $W_1, W_2\seq W$ (generated by $D_1$ and $D_2$, respectively) are contained in the simple subvariety of $W$ generated by $D_1\times D_2$.

(3) From Theorem \ref{T:EilenbergSim} we know that $\mathscr L_{\mathscr C}^0\cong \mathscr L_{\mathscr D}^0$. Since $\mathscr L_{\mathscr C}$ is the free \cpo-completion of $\mathscr L_{\mathscr C}^0$ by (1), and $\mathscr L_{\mathscr D}$ is the free \cpo-completion of $\mathscr L_{\mathscr D}^0$ by (2), the uniqueness of completions gives $\mathscr L_{\mathscr C}\cong \mathscr L_{\mathscr D}$.
\end{proof}

\section{Predualities}

In this appendix we prove in detail the preduality of $\mathbf{JSL}_{01}$ and $\mathbf{JSL}$ (the basis of Pol\'aks original Eilenberg-type correspondence \cite{P}) and the preduality of $\BR$ and $\PSet$ (the basis of our new Eilenberg-type correspondence).

\begin{XTheorem}
$\JSLtc$ and $\JSLnc$ are predual.
\end{XTheorem}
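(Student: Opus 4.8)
The plan is to exhibit the required dual equivalence of finite objects directly, by a double-dualization argument against the two-element chain as dualizing object, exactly as in Stone-type dualities. Write $\two=\{0<1\}$ for the two-element chain, regarded as an object of $\JSLtc$ (with its named bottom and top) or of $\JSLnc$ (as a pure join-semilattice) according to context. I would introduce the contravariant hom-functors
\[ F=\JSLtc(\mathord{-},\two)\colon (\JSLtc)_f^{op}\to(\JSLnc)_f \quad\text{and}\quad G=\JSLnc(\mathord{-},\two)\colon (\JSLnc)_f^{op}\to(\JSLtc)_f, \]
each carrying the pointwise semilattice structure inherited from $\two$. The first thing to check is that they land in the claimed categories. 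For a finite lattice $L$ the pointwise join of two $0,1$-preserving maps $L\to\two$ is again $0,1$-preserving, so $FL$ is a join-semilattice; but the constant map $0$ fails to preserve $1$, so $FL$ carries no forced bottom and is merely an object of $\JSLnc$. Dually $\JSLnc(S,\two)$ contains both the constant $0$- and constant $1$-maps, hence has a bottom and a top and is an object of $\JSLtc$.

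The heart of the matter is the identification of these hom-objects, and the only real content enters here through the elementary fact that in a finite join-semilattice every $\vee$-closed down-set is principal (being the down-set of its own join). Consequently a $0,1$-preserving map $L\to\two$ is determined by its kernel ideal $h^{-1}(0)=\mathord\downarrow x$ with $x\ne 1$, which gives a bijection $FL\cong L\setminus\{1\}$; tracing the pointwise order shows the join of $FL$ is the meet of $L$, so $FL$ is $L\setminus\{1\}$ ordered oppositely — a semilattice with a top (corresponding to $0_L$) but in general no bottom. Symmetrically, a $\vee$-preserving map $S\to\two$ has kernel either $\emptyset$ (the constant $1$) or a principal ideal, whence $GS\cong S^{op}$ with a new top adjoined, which is bounded and hence an object of $\JSLtc$. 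Composing, $GFL$ re-adjoins the removed top and returns $L$, while $FGS$ deletes the adjoined top of $S^{op}$ and returns $S$ — on the nose as posets, and therefore as finite (semi)lattices, since order determines all joins and meets.

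To upgrade this to a dual equivalence I would verify that the canonical evaluation maps $\eta_L\colon L\to GFL$, $x\mapsto\ev_x$, and $\epsilon_S\colon S\to FGS$ are precisely these isomorphisms; their naturality is automatic for hom-into-a-fixed-object functors. Each $\eta_L$ is a $\JSLtc$-morphism because $\ev_{x\vee y}=\ev_x\vee\ev_y$ pointwise and $\ev_0,\ev_1$ are the constant $0$- and $1$-maps, and likewise $\epsilon_S$ is a $\JSLnc$-morphism. Injectivity is the separation step: for $x\not\le y$ in $L$ the $0,1$-morphism with kernel $\mathord\downarrow y$ (legitimate precisely because $x\not\le y$ forces $y\ne 1$) separates $x$ from $y$, whereas in $\JSLnc$ the analogous map with kernel $\mathord\downarrow y$ is always legitimate since no top must be preserved. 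By the cardinality count of the previous step, an injective map between finite sets of equal size is bijective, so $\eta_L$ and $\epsilon_S$ are isomorphisms, giving $GF\cong\mathrm{Id}$ and $FG\cong\mathrm{Id}$ and hence $(\JSLtc)_f^{op}\simeq(\JSLnc)_f$.

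The main obstacle is the object-level identification of the hom-semilattices; everything else is formal, and the asymmetry between the two categories — preservation of the top in $\JSLtc$ versus its absence in $\JSLnc$ — is exactly what turns ``adjoin a top'' and ``delete a top'' into mutually inverse operations. An even shorter alternative would reduce to the already-established self-preduality of $\JSL$ (Example~\ref{E:bool}(c)): realize $\JSLtc$ as finite lattices with top-preserving maps and $\JSLnc$ as finite lattices with bottom-reflecting maps (by adjoining a bottom), and observe that the self-duality $Q\mapsto Q^{op}$ interchanges these two morphism classes. I would nonetheless keep the direct argument as primary, since it produces the duality functor $Q\mapsto\widehat Q$ explicitly.
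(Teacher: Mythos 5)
Your proof is correct, but it is organized differently from the paper's. The paper writes down the duality functor explicitly, $\widehat Q=(Q\setminus\{1\},\wedge)$ and $\widehat h r=\bigvee_{hq\leq r}q$, and then verifies by hand that this is a well-defined functor which is essentially surjective, faithful and full -- the fullness step being the longest, since it requires reconstructing a preimage $h$ of a given $g:\widehat R\to\widehat Q$ via the meet formula $hq=\bigwedge_{q\leq \ol g r}r$. You instead run the standard dualizing-object pattern: both hom-functors $\JSLtc(\mathord{-},\two)$ and $\JSLnc(\mathord{-},\two)$ with pointwise structure, a concrete identification of the hom-objects (which rests on the same combinatorial kernel lemma the paper implicitly uses: in a finite semilattice every nonempty join-closed down-set is principal, so morphisms into $\two$ correspond to their kernel ideals), and then the evaluation maps $\eta_L(x)=\ev_x$, $\epsilon_S(x)=\ev_x$ shown to be natural isomorphisms by separation plus a cardinality count, using that bijective homomorphisms in a variety of algebras are isomorphisms. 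The two routes produce the same functor -- your $F$ is exactly the paper's $\widehat{(\mathord{-})}$ once morphisms into $\two$ are identified with kernel-tops -- but your decomposition buys naturality for free and replaces the paper's fullness computation with the formal fact that $GF\cong\Id$ and $FG\cong\Id$ already constitute an equivalence, while the paper's approach buys explicit formulas in both directions (useful elsewhere in the paper, e.g.\ for computing duals of concrete automata). Your handling of the edge cases (empty kernel giving the constant-$1$ map, hence the adjoined top; the empty semilattice dual to the trivial lattice) is what makes the asymmetry between the two categories work, and it is done correctly; the closing alternative via the self-duality of $\JSL$ and bottom-reflecting morphisms is also sound, though only sketched.
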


\begin{proof}
The desired dual equivalence $\widehat{(\mathord{-})}: (\JSLtc)_f^{op}\xra{\simeq} \JSLnc_f$ is defined on objects by
\[ Q=(Q,\vee,0,1) \quad\mapsto\quad \widehat{Q} = (Q\setminus \{1\},\wedge)\] 
and on morphisms $h: Q\ra R$ by
\[ \widehat{h}: \widehat{R}\ra\widehat{Q},\quad \widehat{h}r = \bigvee_{hq\leq r} q, \]
where $q$ ranges over $Q$. Here and in the following, the symbols $\vee$, $\wedge$, $\leq$, $0$ and $1$ are always meant with respect to  the order of $Q$ or $R$. We need to verify that  $\widehat{(\mathord{-})}$ is (a) a well-defined functor, (b) essentially surjective, (c) faithful and (d) full.\\

\noindent (a1) $\widehat{h}$ is well-defined as a function, that is, it maps the set $\widehat{R}=R\setminus\{1\}$ to $\widehat{Q}=Q\setminus\{1\}$. Indeed, we have for all $r\in R\setminus \{1\}$:
\[ h(\widehat h r) = h(\bigvee_{hq\leq r} q) = \bigvee_{hq\leq r } hq \leq r \tag*{($\ast$)}  \]
which implies $\widehat h r \neq 1$ because $h1=1$.\\

\noindent (a2) $\widehat{h}$ is a $\JSLnc$-morphism, that is, $\widehat h(r\wedge r') =  \widehat hr\wedge \widehat hr'$ holds for all $r,r'\in R\setminus\{1\}$. Here ``$\leq$'' follows from the (obvious) monotonicity of $\widehat h$. For the converse we compute
\[ h(\widehat h r \wedge \widehat{h} r') \leq h(\widehat hr) \wedge h(\widehat hr') \leq r \wedge r'.\]
The first inequality uses monotonicity of $h$ and the second one uses $(\ast)$.
Hence $\widehat hr \wedge \widehat{h}r'$ is among the elements $q$ in the join $\widehat{h}(r\wedge r')=\bigvee_{hq\leq r\wedge r'} q$, which implies $\widehat hr \wedge \widehat{h}r'\leq \widehat{h}(r\wedge r')$.\\

\noindent (a3) The assignment $h\mapsto \widehat{h}$ trivially preserves identity morphisms. For preservation of composition we consider $h: Q\ra R$ and $k: R\ra S$ in $(\JSLtc)_f$ and show $\widehat{kh}(s) = \widehat{h}\cdot \widehat{k}(s)$ for all $s\in S\setminus\{1\}$, i.e.,
\[\bigvee_{kh(q)\leq s} q = \bigvee_{hq \leq \widehat{k}s} q.\]
This equation holds because, for all $q\in Q$, the inequalities $kh(q)\leq s$ and $hq\leq \widehat k s$ are equivalent. Indeed, if $kh(q)\leq s$ then $hq \leq \bigvee_{kr\leq s} r = \widehat{k} s$. Conversely, if $hq\leq \widehat{k}s$ then \[k(hq) \leq k(\widehat k s) = k(\bigvee_{kr\leq s} r) = \bigvee_{kr\leq s} kr \leq s,\] using that $k$ is monotone and preserves joins.\\

\noindent (b) On the level of posets the construction $Q\mapsto \widehat Q$ first removes the top element and then reverses the order. Conversely, we can turn any semilattice\footnote{Note that any nonempty finite semilattice necessarily has a top element, namely the (finite) join of all of its elements.} $P$ in $\JSLnc_f$ to a semilattice $\overline P$ in $(\JSLtc)_f$ by first adding a new bottom element and then reversing the order. Up to isomorphism these two constructions are clearly mutually inverse, so $\widehat{\overline{P}} \cong P$ for all $P$. This proves that $\widehat{(\mathord{-})}$ is essentially surjective.\\

\noindent (c) Given $h: Q\ra R$ in $(\JSLtc)_f$  we claim that, for all $q\in Q$,
\[ hq = \bigwedge_{q\leq \widehat{h}r} r \tag*{($\ast\ast$)}\]
where $r$ ranges over $R\setminus\{1\}$. This immediately implies that  $\widehat{(\mathord{-})}$ is faithful. To show ``$\leq$'' let $r\in R\setminus\{1\}$ with $q\leq \widehat{h}r$. Since $h$ is monotone and preserves joins, we have 
\[hq\leq h(\widehat h r) = h(\bigvee_{hq'\leq r} q') = \bigvee_{hq'\leq r} hq' \leq r.\]
For the converse note that
\[q \leq \bigvee_{hq'\leq hq} q' = \widehat{h}(hq).\]
Hence $hq$ is one of the elements $r$ occuring in the meet ($\ast\ast$), which means that this meet is $\leq hq$.\\

\noindent (d) Given $g: \widehat{R}\ra\widehat{Q}$ in $\JSLnc_f$ we need to find $h: Q\ra R$ in $(\JSLtc)_f$ with $g=\widehat h$. First extend $g: R\setminus\{1\}\ra Q\setminus\{1\}$ to a map $\ol g: R\ra Q$ by setting $\ol g1 = 1$. Then $\ol g$ preserves all meets of $R$ because $g$ preserves all non-empty meets of $R\setminus\{1\}$. Inspired by (c) we define
\[ hq = \bigwedge_{q\leq \ol gr} r, \]
where $r$ ranges over $R$. Let us show that $h$ indeed defines a $\JSLtc$-morphism. First, $h$ preserves $0$ and $1$ because
\[ h0 = \bigwedge_{0\leq \ol gr} r = \bigwedge_r r = 0, \]
and 
\[ h1 = \bigwedge_{1\leq \ol gr} r = 1\]
In the last equation we use that $\ol g 1 = 1 $, and no other element of $R$ is mapped to $1$ by $\ol g$ since the codomain of $g$ is $Q\setminus\{1\}$.
For preservation of joins, $h(q\vee q') = hq \vee hq'$ for all $q,q'\in Q$, first note that ``$\geq$'' follows from the (obvious) monotonicity of $h$. For the other direction we compute
\[q \leq \bigwedge_{q\leq \ol gr} \ol gr = \ol g(\bigwedge_{q\leq \ol g r}  r) = \ol g(hq)\]
and analogously $q'\leq \ol g(hq')$. Hence
\[ q\vee q' \leq \ol g(hq)\vee \ol g(hq') \leq \ol g(hq\vee hq').\]
The last step uses the monotonicity of $\ol g$. So $hq \vee hq'$ is among the elements $r$ in the meet defining  $h(q\vee q') = \bigwedge_{q\vee q'\leq \ol g r} r$, which implies $h(q\vee q')\leq hq\vee hq'$.

Finally, we prove $g=\widehat h$, i.e.,
\[ gr = \bigvee_{hq\leq r} q \]
for all $r\in R\setminus \{1\}$. To show ``$\geq$'' take any $q\in Q$ satisfying $hq  \leq r$. Then
\[ q\leq \bigwedge_{q\leq\ol g r'} \ol gr' = \ol g(\bigwedge_{q\leq\ol g r'} r') = \ol g(hq) \leq \ol g r = gr. \]
For ``$\leq$'' note first that
\[ h(gr) = \bigwedge_{gr\leq \ol gr'} r' \leq r.\]
The last step uses that $r$ is one of the elements $r'$ over which the meet is taken. Hence $gr$ is one of the elements $q$ in the join $\bigvee_{hq\leq r} q$, so $gr \leq \bigvee_{hq\leq r} q$.
\end{proof}

\begin{XRemark}(a) Every non-unital boolean ring is a distributive lattice with $0$ where the meet is multiplication and the join is $x\vee y = x + y + x\cdot y$. Hence every \emph{finite} non-unital boolean ring has a unit $1$, the join of all its elements. However, homomorphisms between finite non-unital boolean rings need not preserve the unit.

(b) The category $\mathbf{UBR}$ of \emph{unital} boolean rings and unit-preserving ring homomorphisms is isomorphic to the category $\BA$ of boolean algebras (and hence predual to $\Set$). Recall that under this isomorphism $+$ corresponds to exclusive disjunction and $\cdot$ to conjuction.
\end{XRemark}

\begin{XTheorem}
$\BR$ and $\PSet$ are predual.
\end{XTheorem}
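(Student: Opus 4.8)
The plan is to reduce everything to finite boolean algebras, where Stone duality is available, and to account for the non-unit-preserving morphisms via \emph{partial} functions. By the preceding Remark, every finite object of $\BR$ is in fact a finite boolean algebra (with meet $\cdot$, join $x\vee y = x+y+x\cdot y$, and top $1$ equal to the join of all elements), so the objects of $\BR_f$ are exactly those of $\BA_f$; the only difference is that a $\BR$-morphism $h\colon Q\to R$ need not preserve $1$. The candidate functor is the one described in Example~\ref{E:bool}(e), sending $Q\mapsto\widehat Q=\{\star\}\cup\At(Q)$ and $h\mapsto\widehat h$ (given by \eqref{eq:bool} where defined, and by $\star$ otherwise). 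I will show that $\widehat{(\mathord{-})}$ factors, up to the standard isomorphism $\Par_f\cong\PSet_f$, through the assignment $Q\mapsto\At(Q)$, $h\mapsto\bigl(\At(R)\pto\At(Q)\bigr)$ dual to the Stone duality $\BA_f\simeq\Set_f^{op}$.

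The key local computation concerns an arbitrary morphism $h\colon Q\to R$ of $\BR_f$. Put $e:=h(1_Q)$. Then $e$ is idempotent, and from $h(q)=h(q\cdot 1_Q)=h(q)\cdot e$ we get $h(q)\le e$ for all $q$; hence $h$ corestricts to a map $h'\colon Q\to R{\downarrow}e$ into the principal ideal $R{\downarrow}e=\{r\in R:r\le e\}$, which is itself a finite boolean algebra with top $e$. A routine check shows that $h'$ is a \emph{unital} boolean homomorphism, and conversely every unital homomorphism into an ideal $R{\downarrow}e$ (for $e$ idempotent) yields a $\BR$-morphism $Q\to R$ upon composing with the inclusion $R{\downarrow}e\hookto R$. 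Applying Stone duality to $h'$ produces a total function $\At(R{\downarrow}e)\to\At(Q)$; since $\At(R{\downarrow}e)=\{a\in\At(R):a\le e\}$, this is precisely a partial function $\At(R)\pto\At(Q)$ with domain $\{a\le e\}$. This matches the recipe of Example~\ref{E:bool}(e): a witness $q$ with $h(q)\ge r$ exists iff $r\le e$ (take $q=1_Q$, and use $h(q)\le e$ for the converse), so $\widehat h(r)$ is defined by \eqref{eq:bool} exactly on $\{r\le e\}$ and equals $\star$ elsewhere.

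It remains to assemble this into an equivalence $\BR_f^{op}\xra{\simeq}\PSet_f$. Essential surjectivity is immediate, since every finite set $X$ is $\At(\Pow X)$ with $\Pow X\in\BR_f$. Fullness and faithfulness follow because the passage $h\leftrightarrow(e,h')\leftrightarrow(\text{partial function})$ of the previous paragraph is a bijection between $\BR_f(Q,R)$ and the partial functions $\At(R)\pto\At(Q)$, i.e.\ between $\BR_f(Q,R)$ and $\PSet_f(\widehat R,\widehat Q)$ under the standard isomorphism $\Par_f\cong\PSet_f$ (which sends a finite set $X$ to $X\cup\{\star\}$ and a partial function to the total point-preserving map collapsing all undefined values and $\star$ to $\star$). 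I expect the one genuinely delicate step to be \textbf{functoriality}: verifying that the dual of a composite $k\cdot h$ is the composite of the associated partial functions. Writing $e=h(1_Q)$ and $f=k(1_R)$, the subtlety is the bookkeeping of domains of definition — one must check that $\{b\in\At(S):b\le k(e)\}$, the domain for $k\cdot h$, coincides with $\{b\le f:\widehat k(b)\le e\}$, the domain of the composite partial function. This follows from the Stone-duality adjunction $\widehat{k'}(b)\le e\iff b\le k'(e)$ for the corestriction $k'$ (together with $k(e)\le f$), after which agreement of values is just functoriality of Stone duality applied to the corestrictions $h'$ and $k'$.
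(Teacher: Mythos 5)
Your proof is correct, but it takes a genuinely different route from the paper. The paper reduces the statement to abstract (co-)Kleisli machinery: it identifies $(\PSet)_f$ with the Kleisli category of the maybe monad $Z\mapsto Z+1$ on $\Set_f$, transports that monad across Stone duality $\mathbf{UBR}_f\simeq\Set_f^{op}$ to the comonad $MX=X\times\two$, and then exhibits an explicit isomorphism $I:\Kl(M)\xra{\cong}\BR_f$, $If=f(\mathord{-},0)$, whose fullness and faithfulness are verified by direct ring computations (e.g.\ fullness via the formula $g(x,b)=f(x)+b+b\cdot f(1)$). You instead work morphism by morphism: the factorization of an arbitrary $h\in\BR_f(Q,R)$ through the principal ideal $R{\downarrow}h(1_Q)$ as a \emph{unital} homomorphism followed by an ideal inclusion is exactly the right decomposition, and it reduces everything to classical Stone duality for finite unital boolean algebras, with the idempotent $e=h(1_Q)$ encoding the domain of definition of the dual partial function. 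What the paper's route buys is that composition is handled once and for all by the (co-)Kleisli formalism, so no atom-level domain bookkeeping is ever needed; what your route buys is that it directly produces the concrete description of the duality stated in Example~\ref{E:bool}(e) (which the paper must extract in a separate closing remark), at the price of the functoriality check you single out -- and your resolution of that check is sound: the domains $\{b\le k(e)\}$ and $\{b\le f:\widehat{k}(b)\le e\}$ agree by the Galois-type equivalence $\widehat{k'}(b)\le e\iff b\le k'(e)$ for the corestriction $k'$ together with $k(e)\le f$, and the values then agree by functoriality of Stone duality. Both proofs ultimately rest on the same two ingredients (Stone duality for the unital case, plus a device for handling non-unitality/partiality), so the difference is one of packaging rather than of substance, but your packaging is the more elementary of the two.
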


\begin{proof}
Recall that $(\PSet)_f$ is equivalent to the Kleisli category of the monad $Z \mapsto Z+1$ on $\Set_f$. The dual comonad on $\mathbf{UBR}_f \simeq \Set_f^{op}$ is $MX = X\times \two$ (where $\two=\{0,1\}$ is the two-element boolean ring) with counit 
\[X\times \two \xra{\epsilon_X} X,\quad \epsilon_X(x,b)=x, \]
and comultiplication 
\[X\times \two \xra{\delta_X}  X \times \two \times \two,\quad \delta_X(x,b) = (x, b,b),\]
and it suffices to show that the Co-Kleisli category $\Kl(M)$ of this comonad is isomorphic to $\BR_f$. The desired isomorphism $I: \Kl(M) \xra{\cong} \BR_f$ is identity on objects and takes a Kleisli morphism $f: X\times \two \ra Y$ to the $\BR_f$-morphism
\[ If: X\ra Y,\quad If(x) = f(x,0). \]
It remains to verify that $I$ is a well-defined full and faithful functor.

(1) $If$ is clearly a $\BR_f$-morphism since $0+0=0$ and $0\cdot 0=0$.

(2) $I$ preserves identities: the identity morphism of $X\in \Kl(M)$ is $\epsilon_X$, and
\[  I\epsilon_X(x) = \epsilon_X(x,0) = x = \id_X(x). \]

(3) $I$ preserves composition: the composition of Kleisli morphisms $f: X\times \two \ra Y$ and $g: Y\times \two \ra Z$ is $g\bullet f: X\times \two \ra Z$ where
\[g\bullet f(x,b) = g \circ Mf \circ \delta_X (x,b) = g(f(x,b),b).\]
Therefore
\begin{align*}
I(g\bullet f)(x) &= g\bullet f(x,0)\\
&= g(f(x,0),0)\\
&= g(If(x),0)\\
&= Ig(If(x))\\
&= Ig \circ If(x)
\end{align*}

(4) $I$ is faithful: let $f,g: X\times\two \ra Y$ be Kleisli morphisms with $If=Ig$, i.e., $f(x,0)=g(x,0)$ for all $x$. Then
\begin{align*}
f(x,1) & =  f(x,0) + f(1,1) + f(1,0)\\
&= f(x,0) + 1 + f(1,0)\\
&= g(x,0) + g(1,1) + g(1,0)\\
&= g(x,1)
\end{align*}
which implies $f=g$.

(5) $I$ is full: let $f: X\ra Y$ be a $\BR_f$-morphism. We claim that the map
\[ g: X\times \two \ra Y,\quad g(x,b) = f(x) + b + b\cdot f(1), \]
is a $\mathbf{UBR}_f$-morphism with $Ig=f$. First, the equation $Ig = f$ clearly holds:
\begin{align*}
Ig(x) &= g(x,0)\\
&= f(x) + 0 + 0\cdot f(1)\\
&= f(x).
\end{align*}
To show that $g$ is a $\mathbf{UBR}_f$-morphism we compute
\[ g(0,0) = f(0) + 0 + 0\cdot f(1) = 0 +0 +0 = 0 \]
and 
\begin{align*}
g(1,1) &= f(1)+1+1\cdot f(1)\\
 &= f(1)+f(1)+1 =0+1 = 1.
 \end{align*}
Further,
\begin{align*}
& g(x,b) + g(x',b')\\
& = (f(x) + b +b \cdot f(1)) + (f(x') + b' +b'\cdot f(1))\\
& = f(x+x') + b+b'+(b+b')\cdot f(1)\\
& = g(x+x',b+b')
\end{align*}
and
\begin{align*}
&g(x,b) \cdot  g(x',b')\\
&= (fx + b +b \cdot f(1)) \cdot (fx' + b' +b'\cdot f(1))\\
&= fx \cdot fx' + (fx\cdot b' + fx\cdot b'\cdot f(1))\\
&~~~+ (b\cdot fx'+ b\cdot fx'\cdot f(1)) \\
&~~~+(b\cdot b'\cdot f(1)+b\cdot b'\cdot f(1)\cdot f(1))\\
&~~~+ b\cdot b' + b\cdot b'\cdot f(1)\\
&= fx \cdot fx' + (fx\cdot b' + fx\cdot b') + (b\cdot fx'+ b\cdot fx') \\
&~~~+(b\cdot b'\cdot f(1)+b\cdot b'\cdot f(1))+ b\cdot b' + b\cdot b'\cdot f(1)\\
&= f(x\cdot x') + b\cdot b'+b\cdot b'\cdot f(1)\\
&=g(x\cdot x',b\cdot b').
\end{align*}
In the third step we use $fx\cdot f(1) = fx$,  $fx'\cdot f(1)=fx'$ and $f(1)\cdot f(1) = f(1)$, and in the fourth step the equation $u+u=0$.
\end{proof}

\begin{XRemark}
By composing the equivalences
\[ \BR_f \simeq \Kl(M) \simeq (\Kl(Z\mapsto Z+1))^{op} \simeq (\PSet)_f^{op} \]
of the above proof, one obtains the explicit description of the preduality between $\BR$ and $\PSet$ in Example \ref{E:bool}(e).
\end{XRemark}

\end{document}